%

\documentclass[twoside,11pt]{article}

\usepackage{amsfonts}
\usepackage{amssymb} 
\usepackage{amsthm}
\usepackage{authblk} 
\usepackage{bbm}
\usepackage{bm}
\usepackage{cancel}
\usepackage{color}
\usepackage{dsfont}
\usepackage{extarrows}
\usepackage{mathdots}
\usepackage[top=1in, bottom=1in, left=1in, right=1in]{geometry}
\usepackage{graphicx}
\usepackage{mathrsfs}
\usepackage{mdframed}
\usepackage[round]{natbib}
\usepackage{subfig} 
\usepackage{url}
\usepackage{setspace}
\usepackage{soul} 
\usepackage[dvipsnames]{xcolor}

\renewcommand{\d}{\mathop{}\!\mathrm{d}}

\newcommand{\mnorm}[1]{{\left\vert\kern-0.25ex\left\vert\kern-0.25ex\left\vert #1 
    \right\vert\kern-0.25ex\right\vert\kern-0.25ex\right\vert}}

\newcommand\newday[1]{\leavevmode\xleaders\hbox{*}\hfill\kern0pt\\ \centerline{\Large \textbf{#1}}}

\newcommand{\y}{\boldsymbol{y}}

\usepackage[shortlabels]{enumitem}

\numberwithin{equation}{section}

\newcommand{\refyu}[1]{#1}

\newtheorem{theorem}{Theorem}
\newenvironment{thm}[1]
  {\theorem}
  {\endtheorem}

\newtheorem{lemma}[theorem]{Lemma}

  \newtheorem{corollary}[theorem]{Corollary}

  \theoremstyle{remark}
    \newtheorem{remark}{Remark}

\newtheorem{definition}{Definition}

\allowdisplaybreaks




\makeatletter
\newcommand{\pushright}[1]{\ifmeasuring@#1\else\omit\hfill$\displaystyle#1$\fi\ignorespaces}
\newcommand{\pushleft}[1]{\ifmeasuring@#1\else\omit$\displaystyle#1$\hfill\fi\ignorespaces}
\makeatother

\title{Interaction Models and Generalized Score Matching for
  Compositional Data}
\author[1]{Shiqing Yu}
\author[2]{Mathias Drton}
\author[3]{Ali Shojaie}
\affil[1]{Department of Statistics, University of Washington, Seattle, Washington, 98195, U.S.A.}
\affil[2]{Department of Mathematics, Technical University of Munich, 85748 Garching bei M\"{u}nchen, Germany}
\affil[3]{Department of Biostatistics, University of Washington, Seattle, Washington, 98195, U.S.A.}

\begin{document}
\begin{spacing}{1.5}

\maketitle

\begin{abstract}
  Applications such as the analysis of microbiome data have led to
renewed interest in statistical methods for compositional data, i.e.,
multivariate data in the form of probability vectors that contain
relative proportions. In particular, there is considerable interest in
modeling interactions among such relative proportions. To this end we
propose a class of exponential family models that accommodate general
patterns of pairwise interaction while being supported on the probability
simplex. Special cases include the family of Dirichlet distributions
as well as Aitchison's additive logistic normal distributions. 
Generally, the distributions we consider have a density that features a
difficult to compute normalizing constant. To circumvent this issue, we design effective
estimation methods based on generalized versions of score matching. A
high-dimensional analysis of our estimation methods shows that the
simplex domain is handled as efficiently as previously studied
full-dimensional domains.

\end{abstract}

\newcommand{\Deltam}{\Delta}

\section{Introduction}


Compositional data, where each data point is an element of the
probability simplex, frequently arise in real-world
applications. Since the components of such data points
are nonnegative and add up to one, they naturally represent proportions,
percentages or the event probabilities in multinomial
distributions.  Aside from probability and statistics itself, compositional data
arise from applications such as rock composition in geology \citep{ait82,paw06},
demographic data \citep{llo12}, 
concentrations in chemistry \citep{rol92}, and more recently, very prominently as
relative abundances in measurements of microbiome composition \citep{fuk12,xia13,li15,shi16,ran18}.

For a vector $\boldsymbol{x}\in\mathbb{R}^m$, write
$\boldsymbol{x}\succ 0$ (or $\boldsymbol{x}\succeq 0$) if all of its
coordinates $x_j>0$ (or $x_j\ge 0$).  Let
$\mathbf{1}_m\in\mathbb{R}^m$ be the vector with all entries equal to 1.
Compositional data with $m$ variables are comprised of points in the
$(m-1)$-dimensional probability simplex
\begin{equation}
  \label{eq:delta-simplex}
  \Delta\equiv \Delta_{m-1}=\left\{\boldsymbol{x}\in\mathbb{R}^m:
    \boldsymbol{x}\succeq\boldsymbol{0},\,\mathbf{1}_m^{\top}\boldsymbol{x}=1\right\}.
\end{equation}
The $A^{m-1}$ model ($A^d$ in the original notation)
proposed in
\citet{ait85} yields the classical approach for statistical analysis of
compositional data in $\Deltam$. The distributions in this
model, which is described in detail in Section \ref{sec_Ad_Models},
can be parameterized as having a density proportional to
\[\exp\left(-\frac{1}{2}\log\boldsymbol{x}^{\top}\mathbf{K}\log\boldsymbol{x}+\boldsymbol{\eta}^{\top}\log\boldsymbol{x}\right),\quad\boldsymbol{x}\in\Deltam,\]
where the interaction matrix $\mathbf{K}\in\mathbb{R}^{m\times m}$,
$\mathbf{K}\mathbf{1}_m=\mathbf{0}_m$, and 
$\boldsymbol{\eta}\in\mathbb{R}^m$ are parameters. This class includes the Dirichlet distribution when $\mathbf{K}=\mathbf{0}$ (with concentration parameters $\boldsymbol{\eta}+\mathbf{1}_m\succ\mathbf{0}_m$), and the widely used additive logistic normal distribution \citep{ait82} that corresponds to the case where $\mathbf{1}_m^{\top}\boldsymbol{\eta}=-m$, positing that $(\log(x_1/x_m),\ldots,\log(x_{m-1}/x_m))\in\mathbb{R}^{m-1}$ is multivariate normal.

As a flexible extension of Aitchison's class, we propose in this paper
a \emph{power interaction model}
 with densities
\begin{equation}\label{eq_intro_a_b_models}
p_{\boldsymbol{\eta},\mathbf{K}}(\boldsymbol{x})\propto\exp\left(-\frac{1}{2a}{\boldsymbol{x}^a}^{\top}\mathbf{K}\boldsymbol{x}^a+\frac{1}{b}\boldsymbol{\eta}^{\top}\boldsymbol{x}^b\right),\quad\boldsymbol{x}\in\Deltam.
\end{equation}
Here, $a\geq 0, b\geq 0$, and for $a=b=0$ we let
${\boldsymbol{x}^a}^{\top}\mathbf{K}\boldsymbol{x}^{a}/a\equiv\log\boldsymbol{x}^{\top}\mathbf{K}\log\boldsymbol{x}$
and $\boldsymbol{x}^b/b\equiv\log\boldsymbol{x}$ (for $a=0$ or $b=0$, $\boldsymbol{x}\succ\boldsymbol{0}$ almost surely). This class, which we
also term \emph{$a$-$b$ interaction models}, contains the $A^{m-1}$
models as a special case when $a=b=0$ and
$\mathbf{K}\mathbf{1}_m=\mathbf{K}^{\top}\mathbf{1}_m=\mathbf{0}_m$,
but allows for many other possibilities such as normal distribution or
square root models \citep{ino16} truncated to the simplex.
Importantly, for this new $a$-$b$ class on the simplex, we are able to
derive consistent estimators that are computationally efficient and scalable by suitably adapting the
generalized score
matching methods that were developed in \citet{yu21} for domains with
positive measure.

Through its flexibility, the model class from
(\ref{eq_intro_a_b_models}) offers new tractable approaches to study interaction
structures in compositional data via estimation of the parameters
$(\eta,\mathbf{K})$.  Similarly, it offers increased flexibility in
problems such as comparing two sets of compositional data; see for
example, Chapter 7 of \citet{paw07} for traditional approaches to this
problem.

Flexibility aside, the models proposed in (\ref{eq_intro_a_b_models})
have the strong appeal that with $a>0$ and $b>0$, they allow one to
directly handle proportions that are exactly zero.  Indeed, many
modern applications feature ``sparse data,'' for which a considerable
number of the components of data vectors are zero and heuristics like
adding small positive numbers to data need to be adopted to apply
approaches such as the Aitchison additive logistic normal model, which
relies on logarithms.


\paragraph{Outline of the Paper.}
In Section~\ref{Generalized Score Matching for General Domains} we
first review the generalized score methodology that was developed in
\citet{yu21} for data from domains of positive measure.  We then show
how to modify this methodology to obtain estimators for models of
distributions supported on a simplex, and concretely, the probability
simplex $\Deltam$.  For high-dimensional applications we propose
$\ell_1$-regularization to allow one to exploit sparse interaction
structure.  In Section~\ref{$a$-$b$ Models on Standard Simplices} we
turn to our proposed class of $a$-$b$ interaction models on $\Deltam$.
For these models, we study identifiability issues as well as
conditions for the density kernels to be normalizable to a proper
density. In Section~\ref{sec:estim-power-inter}, we customize our
estimation strategy from Section~\ref{Generalized Score Matching for
  General Domains} to the $a$-$b$ interaction models.
High-dimensional consistency results for our regularized estimators
are derived in Section~\ref{Theory}.  Previous results on
$m$-dimensional domains of positive measure show success of
regularized score matching estimators when the sample size scales as
$n=\Omega(\log m)$.  We are able to show that this scaling also holds
for simplex domains.  In Section~\ref{Numerical Experiments}, we
perform numerical experiments that explore choice of tuning parameters
for our estimators---the experiments use the code we make available in
the R package \texttt{genscore}. 
Section~\ref{sec:micr-data-analys} illustrates the utility of our
proposed models and estimation methodology in an application to
microbiome data.  The paper concludes with a discussion in
Section~\ref{sec:discussion}.  Appendix~\ref{Proofs} collects the
proofs of our theorems.

\paragraph{Notation.}
Upper- versus lower-case is used to denote random quantities. Regular
font is reserved for scalars (e.g.~$a$, $X$) and boldface for vectors (e.g.~$\boldsymbol{a}$, $\boldsymbol{X}$).
Matrices are written in upright bold, with constant matrices in upper-case ($\mathbf{K}$, $\mathbf{M}$) and random data matrices in lower-case ($\mathbf{x}$, $\mathbf{y}$).  Superscripts to index rows and subscripts to index columns in a data matrix $\mathbf{x}$, i.e.~$\boldsymbol{X}^{(i)}$ is the $i$-th row/sample, and $X_j^{(i)}$ is its $j$-th feature.

For  $\boldsymbol{u},\boldsymbol{v}\in\mathbb{R}^m$,
$\boldsymbol{u}\odot\boldsymbol{v}\equiv(u_1v_1,\ldots,u_mv_m)$
denotes the Hadamard  (or element-wise) product. The $\ell_a$-norm for $a\geq 1$ is denoted $\|\boldsymbol{u}\|_a=(\sum_{j=1}^m|u_j|^a)^{1/a}$, with $\|\boldsymbol{u}\|_{\infty}=\max_{j=1,\ldots,m}|u_j|$. For $a\in\mathbb{R}$, write $\boldsymbol{v}^a\equiv (v_1^a,\ldots,v_m^a)$. Similarly for vector-valued $\boldsymbol{f}:\mathbb{R}^m\to\mathbb{R}^m$, $\boldsymbol{x}\mapsto (f_1(\boldsymbol{x}),\ldots, f_m(\boldsymbol{x}))$, we write $\boldsymbol{f}^a(\boldsymbol{x})\equiv(f_1^a(\boldsymbol{x}),\ldots,f_m^a(\boldsymbol{x}))$. We also write $\boldsymbol{f}'(\boldsymbol{x})\equiv(\partial f_1(\boldsymbol{x})/\partial x_1,\ldots,\partial f_m(\boldsymbol{x})/\partial x_m)$.

The vectorization of a matrix
$\mathbf{K}=[\kappa_{ij}]_{i,j}\in\mathbb{R}^{n\times m}$ is the
vector in $\mathbb{R}^{nm}$ obtained by stacking the matrix
columns. The Frobenius norm of the matrix is $\mnorm{\mathbf{K}}_{F}=\|\mathrm{vec}(\mathbf{K})\|_2$, its max norm is $\|\mathbf{K}\|_{\infty}\equiv\|\mathrm{vec}(\mathbf{K})\|_{\infty}\equiv\max_{i,j}|\kappa_{ij}|$, and its $\ell_a$--$\ell_b$ operator norm is $\mnorm{\mathbf{K}}_{a,b}\equiv\max_{\boldsymbol{x}\neq\boldsymbol{0}}\|\mathbf{K}\boldsymbol{x}\|_b/\|\boldsymbol{x}\|_a$, with $\mnorm{\mathbf{K}}_a\equiv\mnorm{\mathbf{K}}_{a,a}$.

For a function $f$ of a vector $\boldsymbol{x}$, we may also write
$f(x_j;\boldsymbol{x}_{-j})$ to stress the dependency on $x_j$,
especially when $\boldsymbol{x}_{-j}$ is fixed and only $x_j$ is
varied; e.g., $\partial_j f(\boldsymbol{x})=\partial f(x_j;\boldsymbol{x}_{-j})/\partial x_j$. For two compatible functions $f$ and $g$, $f\circ g$ denotes their function composition.


\section{Generalized Score Matching for General Domains}\label{Generalized Score Matching for General Domains}
  
\subsection{Generalized Score Matching}
\label{sec:gener-score-match}

Score matching is an effective method for estimation of Lebesgue densities that
are defined only up to a finite normalizing constant. Originally
proposed by \citet{hyv05} for densities supported on $\mathbb{R}^m$
and by \citet{hyv07} for support $\mathbb{R}_+^m$, it provides
consistent estimators that can be evaluated without calculation of the often
intractable normalizing constant. Possible, but ultimately less
convenient analogues for the discrete case were discussed also by
\citet{lyu09}. Again in the continuous case, \citet{yu18,yu19}
generalized score matching to allow more efficient estimation for densities on
$\mathbb{R}_+^m$. Further generalizations in \citet{liu19} and \citet{yu21} treat
densities supported on more complicated domains of positive measure.

Let $\mathcal{P}(\mathcal{D})$ be a family of distributions of
interest with twice continuously differentiable densities on a domain
$\mathcal{D}\subseteq\mathbb{R}^m$. 
The main idea behind the generalized score matching methods of
\citet{yu19,yu21} is to estimate an unknown density $p_0$ by picking the distribution
$P\in\mathcal{P}(\mathcal{D})$ whose density $p$ minimizes a measure
of distance between $p$ and $p_0$.  Concretely, the distance measure is taken
to be the modified Fisher divergence
\begin{equation}
  \label{eq:fish-div}
  \frac{1}{2}\int_{\mathcal{D}}p_0(\boldsymbol{x})\Big\|\nabla\log
  p(\boldsymbol{x})\odot
  \boldsymbol{h}^{1/2}(\boldsymbol{x})-
  \nabla\log p_0(\boldsymbol{x})\odot \boldsymbol{h}^{1/2}(\boldsymbol{x})\Big\|_2^2\d\boldsymbol{x}.
\end{equation}
This divergence is half a weighted version of the $L_2(P_0)$ distance between the gradients of
the log-densities.  The weights are given by a function
$\boldsymbol{h}(\boldsymbol{x})=(h_1(x_1),\ldots,h_m(x_m))$, which is
a pre-selected almost surely positive function from $\mathcal{D}$ to
$\mathbb{R}_+^m$.  Introducing the weights gives flexibility to
efficiently cope with the effects of the boundary of the domain
$\mathcal{D}$.   To facilitate consistent estimation, the
modified Fisher divergence ought to be minimized if and only if
$p_0=p$ almost everywhere (a.e.). An estimator of $p_0$ is then
obtained by minimizing the loss function that results from a sample
version of the Fisher divergence. This estimator has the appealing
feature to not depend on normalizing constants as these fall out
when computing the gradient of the log-density.  Importantly, for
exponential families, the sample loss is quadratic in the canonical
parameters.


The divergence written in \eqref{eq:fish-div} involves the full
gradient of the log-density and the requirement of the divergence
being minimal if and only if $p$ is a.e.~equal to the true density
only makes sense for $\mathcal{D}$ with positive Lebesgue measure in
$\mathbb{R}^m$.  Such domains were treated in \cite{yu21} but do not
include the case where $\mathcal{D}$ is the probability simplex $\Deltam$ from
\eqref{eq:delta-simplex}. 
In this paper, we show that the generalized score matching approach
can nevertheless be further extended to the simplex case and
compositional data.
This is accomplished by profiling out
$x_m\equiv 1-\sum_{j=1}^{m-1} x_j$. Notably, we
revisit the $A^{m-1}$ models for simplex domains introduced in
\citet{ait85}
and show that estimation for models on the simplex can be done with
little additional effort compared to general $a$-$b$ interaction
models on domains
$\mathcal{D}$ with positive Lebesgue measure.  We also show that our
methods for models on simplex domains enjoy the same high-dimensional sample
complexity results as corresponding methods for models on domains of positive measure.



\subsection{Domains of Positive Measure}
\label{Formulation in yu21}

We now review in more detail the method to tackle general domains of
positive measure that was developed in \cite{yu21}.  Let
$\mathcal{D}\subseteq\mathbb{R}^m$ be a set of positive measure.  Our
interest is in modeling and estimating the joint distribution $P_0$ of
a random vector $\boldsymbol{X}\in\mathbb{R}^m$, when $P_0$ has
support $\mathcal{D}$.  The scenario of interest is the case where
$P_0$ has a twice continuously differentiable probability density
function $p_0$ with respect to Lebesgue measure on $\mathcal{D}$.  As
a model for $P_0$ we now consider a family $\mathcal{P}(\mathcal{D})$
of distributions with twice continuously differentiable densities on
$\mathcal{D}$.  These densities are assumed to be specified up to
unknown/difficult to compute normalizing constants.
%

The approach taken proceeds in a coordinate-wise fashion.  For any
index $j=1,\ldots,m$, let
$\mathcal{C}_{j,\mathcal{D}}\left(\boldsymbol{x}_{-j}\right)\equiv\{y\in\mathbb{R}:(y;\boldsymbol{x}_{-j})\in\mathcal{D}\}$
be the $j$th section of $\mathcal{D}$ defined by the $(m-1)$-dimensional
vector $\boldsymbol{x}_{-j}$.  Furthermore, define the projection
$\mathcal{S}_{-j,\mathcal{D}}\equiv\left\{\boldsymbol{x}_{-j}:\mathcal{C}_{j,\mathcal{D}}\left(\boldsymbol{x}_{-j}\right)\neq\varnothing\right\}\subseteq\mathbb{R}^{m-1}$.
A measurable domain $\mathcal{D}$ is a \emph{component-wise countable
  union of intervals} if
for any $j=1,\ldots,m$ and fixed
$\boldsymbol{x}_{-j}\in\mathcal{S}_{-j,\mathcal{D}}$, the section
$\mathcal{C}_{j,\mathcal{D}}(\boldsymbol{x}_{-j})$ is a union of
finite or countably many intervals in $\mathbb{R}$. The idea to handle
effects of the boundary of $\mathcal{D}$ is to build a weight function
for the divergence in~(\ref{eq:fish-div}) by using transformed and
truncated coordinate-wise distances in the different sections.  

Concretely, let $\boldsymbol{C}\in\mathbb{R}^m$ be a choice of
truncation constants with $\boldsymbol{C}\succ\boldsymbol{0}_m$. 
 Define
 $\boldsymbol{\varphi}_{\boldsymbol{C},\mathcal{D}}=(\varphi_{1,C_1,\mathcal{D}},\ldots,\varphi_{m,C_m,\mathcal{D}})$
 where
 \begin{equation}
   \label{eq:trunc-dist}
     \varphi_{j,C_j,\mathcal{D}}(\boldsymbol{x})\equiv\min\Big\{\inf_{(y;\,\boldsymbol{x}_{-j})\in\mathcal{D}}|y-x_j|,C_j\Big\}.
   \end{equation}
   By the assumption of a component-wise countable union of intervals, the
   component $x_j$ of vector $\boldsymbol{x}$ lies in unique
   maximal subinterval of the section
   $\mathcal{C}_{j,\mathcal{D}}\left(\boldsymbol{x}_{-j}\right)$ and
   the infimum in~(\ref{eq:trunc-dist}) gives the distance between
   $x_j$ and the boundary of this interval. The minimum then truncates
   the distance from above by some $C_j>0$, in order to maintain
   bounded weights in the divergence from~(\ref{eq:fish-div}). In
   practice, 
   we suggest $\boldsymbol{C}$ to be chosen as sample quantiles.

   The second ingredient to the weights are transformations that
   allow one to adapt to the decay of densities at the
   boundary of $\mathcal{D}$.  Given
   a user-specified $\boldsymbol{h}:\mathbb{R}_+^m\to\mathbb{R}_+^m$,
   $\boldsymbol{x}\mapsto\left(h_1(x_1),\ldots,h_m(x_m)\right)^{\top}$
   with $h_1,\ldots,h_m:\mathbb{R}_+\to\mathbb{R}_+$ almost surely
   positive and absolutely continuous in every bounded sub-interval of
   $\mathbb{R}_+$, the \emph{generalized
     $(\boldsymbol{h},\mathcal{C},\mathcal{D})$-score matching loss}
   in $P\in\mathcal{P}(\mathcal{D})$ with density $p$ is defined as
   the divergence
\begin{multline}\label{def_loss}
L_{\boldsymbol{h},\boldsymbol{C},\mathcal{D}}(P)\equiv\frac{1}{2}\int_{\mathcal{D}}p_0(\boldsymbol{x})\Big\|\nabla\log
p(\boldsymbol{x})\odot
\left(\boldsymbol{h}\circ\boldsymbol{\varphi}_{\boldsymbol{C},\mathcal{D}}\right)^{1/2}(\boldsymbol{x})-\\
\nabla\log p_0(\boldsymbol{x})\odot \left(\boldsymbol{h}\circ\boldsymbol{\varphi}_{\boldsymbol{C},\mathcal{D}}\right)^{1/2}(\boldsymbol{x})\Big\|_2^2\d\boldsymbol{x}.
\end{multline}
This population loss is minimized at $p$ if and only if $p=p_0$ almost
surely.  In order to derive a practical sample loss, we make the following assumptions:
\begin{enumerate}[label=(A.\arabic*),leftmargin=30pt]
\item $p_0(x_j;\boldsymbol{x}_{-j})h_j(\varphi_{C_j,\mathcal{D},j}(\boldsymbol{x}))\partial_j\log p(x_j;\boldsymbol{x}_{-j})\left|_{x_j \searrow a_{k}(\boldsymbol{x}_{-j})^+}^{x_j\nearrow b_{k}(\boldsymbol{x}_{-j})^-}\right.=0$\\ for all $k=1,\ldots,K_j(\boldsymbol{x}_{-j})$ and  $\boldsymbol{x}_{-j}\in\mathcal{S}_{-j,\mathcal{D}}$ for all $j$;
\item $\int_{\mathcal{D}}p_0(\boldsymbol{x})\left\|\nabla\log
    p(\boldsymbol{x})\odot
    (\boldsymbol{h}\circ\boldsymbol{\varphi_{\boldsymbol{C},\mathcal{D}}})^{1/2}(\boldsymbol{x})\right\|_2^2\d\boldsymbol{x}<+\infty$,
  \\
$\int_{\mathcal{D}}p_0(\boldsymbol{x})\left\|\left[\nabla\log p(\boldsymbol{x})\odot(\boldsymbol{h}\circ\boldsymbol{\varphi_{\boldsymbol{C},\mathcal{D}}})(\boldsymbol{x})\right]'\right\|_1\d\boldsymbol{x}<+\infty$.
\item $\forall j=1,\ldots,m$ and a.e.~$\boldsymbol{x}_{-j}\in\mathcal{S}_{-j,\mathcal{D}}$, the component function $h_j$ of $\boldsymbol{h}$ is absolutely continuous in any bounded sub-interval of the section $\mathcal{C}_{j,\mathcal{D}}(\boldsymbol{x}_{-j})$.
\end{enumerate}
Under the mild assumptions (A.1)--(A.3), one can show that
\begin{multline}\label{eq_equivalent_loss}
L_{\boldsymbol{h},\boldsymbol{C},\mathcal{D}}(P)\equiv\frac{1}{2}\sum_{j=1}^m\int_{\mathcal{D}}p_0(\boldsymbol{x})\cdot(h_j\circ\varphi_{C_j,\mathcal{D},j})(\boldsymbol{x})\cdot\left[\partial_j\log p(\boldsymbol{x})\right]^2\d\boldsymbol{x}\\
+\sum_{j=1}^m\int_{\mathcal{D}}p_0(\boldsymbol{x})\cdot\partial_j \left[(h_j\circ\varphi_{C_j,\mathcal{D},j})(\boldsymbol{x})\cdot\partial_j\log p(\boldsymbol{x})\right]\d\boldsymbol{x}
\end{multline}
plus a constant depending on $p_0$ only (so, independent of $p$). This
facilitates consistent estimation using the empirical loss 
\begin{multline}\label{eq_empirical_loss}
\hat{L}_{\boldsymbol{h},\boldsymbol{C},\mathcal{D}}(P)=\frac{1}{2}\sum_{j=1}^m\sum_{i=1}^n\frac{1}{2}(h_j\circ\varphi_{C_j,\mathcal{D},j})\left(\boldsymbol{X}^{(i)}\right)\cdot\left[\partial_j\log p\left(\boldsymbol{X}^{(i)}\right)\right]^2
+\\\
\partial_j \left[(h_j\circ\varphi_{C_j,\mathcal{D},j})\left(\boldsymbol{X}^{(i)}\right)\cdot\partial_j\log p\left(\boldsymbol{X}^{(i)}\right)\right],
\end{multline}
where $\boldsymbol{X}^{(i)}$, $1\le i\le n$, form an i.i.d.~sample
from $P_0$.

\subsection{Extension to Simplices}

As noted in Section~\ref{sec:gener-score-match}, the generalized score
matching method from \cite{yu21} is applicable only to domains
$\mathcal{D}$ with positive Lebesgue measure in
$\mathbb{R}^m$. However, for some domains that are null sets of
dimension $k<m$, this issue may be resolved by transforming
$\mathcal{D}$ to a full-dimensional subset of $\mathbb{R}^k$.  This is
particularly easy and tractable for the important case of the
probability simplex
$\Deltam\equiv\{\boldsymbol{x}\in\mathbb{R}^m:\boldsymbol{x}\succeq\boldsymbol{0},\,\mathbf{1}_m^{\top}\boldsymbol{x}=1\}$.
Indeed, we may drop the last coordinate $x_m$ (or any other coordinate
as we discuss later; here we choose $m$ for simplicity), substituting
it with $1-\mathbf{1}_{m-1}^{\top}\boldsymbol{x}_{-m}$, and work
instead with the full-dimensional simplex
\begin{equation}
  \label{eq:Delta-m}
  \Deltam_{-m}\equiv
  \left\{\boldsymbol{x}_{-m}\in\mathbb{R}^{m-1}:\boldsymbol{x}_{-m}\succeq\boldsymbol{0},\,\mathbf{1}_{m-1}^{\top}\boldsymbol{x}_{-m}\le
    1\right\}
\end{equation}
in $\mathbb{R}^{m-1}$.  Throughout the rest of the paper we thus
consider the case of domain $\mathcal{D}\equiv\Delta_{-m}$, and in our
notation we remove the dependency of $L$ and $\boldsymbol{\varphi}$ on
$\mathcal{D}$.

Let $\boldsymbol{x}\in\mathbb{R}^m$, and let $j\in\{1,\ldots,m-1\}$.
Write $\boldsymbol{x}_{-\{j,m\}}$ for the vector in $\mathbb{R}^{m-2}$
obtained by removing $x_m$ and $x_j$.  Then $\Deltam_{-m}$ has 
$j$th section
\[
  \mathcal{C}_j(\boldsymbol{x}_{-m})=\left[0,1-\mathbf{1}^{\top}_{m-2}\boldsymbol{x}_{-\{j,m\}}\right].
\]
Hence, we have the coordinate-wise distance
\[
  \varphi_{C_j,j}(\boldsymbol{x})=\min\left\{C_j,x_j,1-\mathbf{1}_{m-1}^{\top}\boldsymbol{x}_{-m}\right\}=\min\left\{C_j,x_j,x_m\right\}.
\]
The role of the truncation constants $C_j$ is to ensure boundedness of
the coordinate-wise distances.  As the simplex is naturally bounded by
the unit cube, and it is natural to not use any truncation here
and simply use the coordinate-wise distance
\[
  \varphi_{j}(\boldsymbol{x})=\min\left\{x_j,1-\mathbf{1}_{m-1}^{\top}\boldsymbol{x}_{-m}\right\}=\min\left\{x_j,x_m\right\}.
\]
A plot of $\varphi_{j}$ in the case of $m=3$ is given in Figure
\ref{plot_phi}.

\begin{figure}[!t]
\centering
\subfloat[$\varphi_{1}$]
{\includegraphics[scale=0.4]{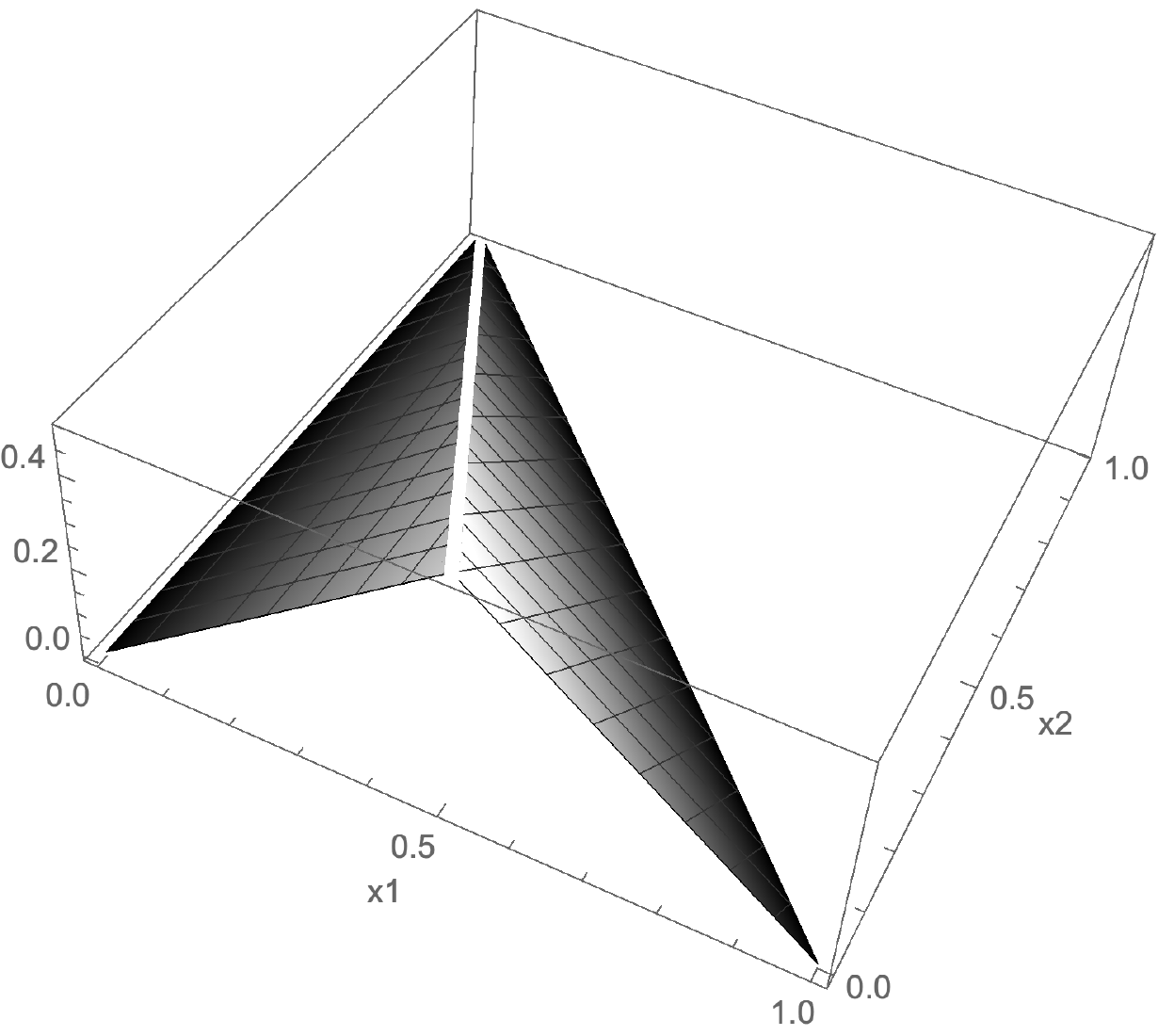}\hspace{-0.02in}}\hspace{0.2in}
\subfloat[$\varphi_{2}$]
{\includegraphics[scale=0.4]{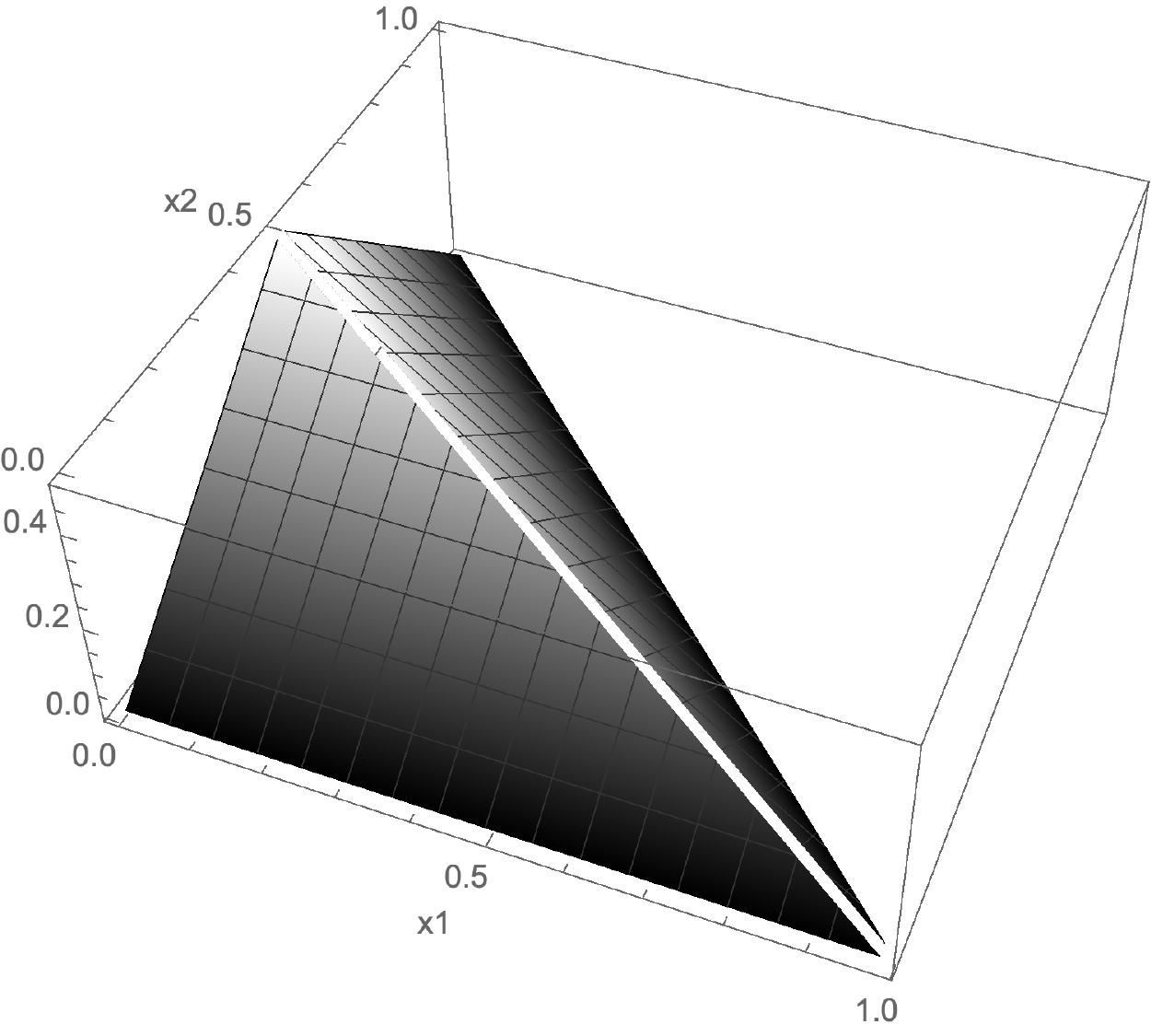}\hspace{-0.02in}}
\hspace{0.2in}
\subfloat[$\varphi_{1}\vee\varphi_{2}$]{\includegraphics[scale=0.4]{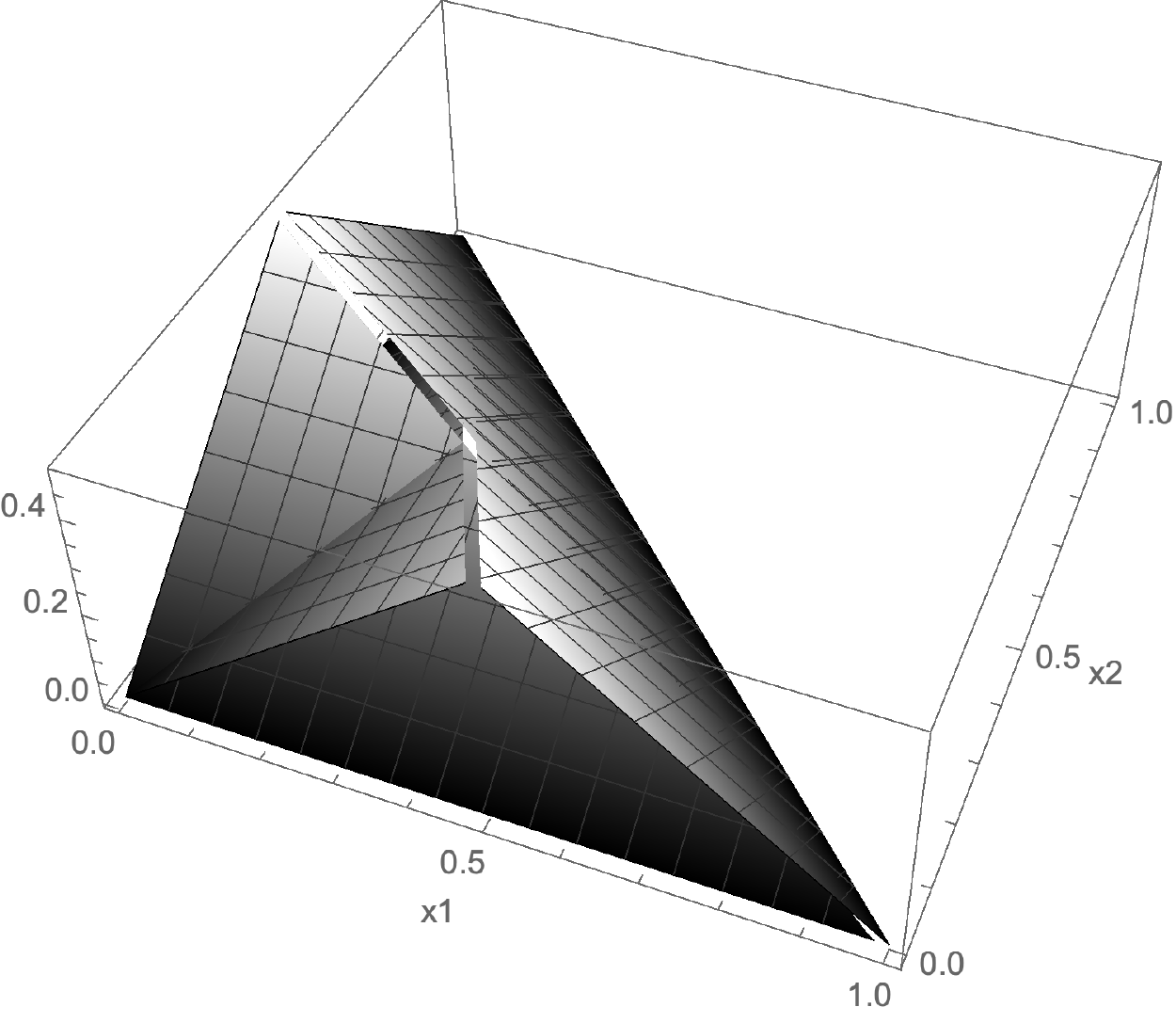}\hspace{-0.02in}}
\caption{Plots of $\varphi_{1}$ and $\varphi_{2}$ for $\Delta_{-3}\equiv\{(x_1,x_2):x_1\geq 0,x_2\geq 0,x_1+x_2\leq 1\}$.}
\label{plot_phi}
\end{figure}

\subsection{Exponential Families and Regularization}\label{Exponential Families and Regularized Generalized Score Matching}

Let
$\mathcal{P}\equiv\{P_{\boldsymbol{\theta}}:\boldsymbol{\theta}\in\boldsymbol{\Theta}\subset\mathbb{R}^r\}$
be an $r$-dimensional exponential family, in which the distributions
$P_{\boldsymbol{\theta}}$ are continuous with support $\Delta$.  Let
$\boldsymbol{\theta}\in\mathbb{R}^r$ be the canonical parameters, and
let the distributions have densities of the form
\[
  \log
  p_{\boldsymbol{\theta}}(\boldsymbol{x})=\boldsymbol{\theta}^{\top}\boldsymbol{t}(\boldsymbol{x})-\psi(\boldsymbol{\theta})+b(\boldsymbol{x}), \quad
  \boldsymbol{x}\in\Deltam.
\]
It is not difficult to show that in this case the empirical loss
$\hat{L}_{\boldsymbol{h},\boldsymbol{C}}$ from~(\ref{eq_empirical_loss}) can be written as a quadratic function in the
canonical parameter, i.e.,
\begin{equation}
  \label{eq:exp-fam-quadratic}
  \hat{L}_{\boldsymbol{h},\boldsymbol{C}}(p_{\boldsymbol{\theta}})=\frac{1}{2}\boldsymbol{\theta}^{\top}\boldsymbol{\Gamma}(\mathbf{x})\boldsymbol{\theta}-\boldsymbol{g}(\mathbf{x})^{\top}\boldsymbol{\theta}+\mathrm{const.}
\end{equation}
with $\boldsymbol{\Gamma}(\mathbf{x})\in\mathbb{R}^{r\times r}$ and
$\boldsymbol{g}(\mathbf{x})\in\mathbb{R}^r$ sample averages of known
functions in $\mathbf{x}$ only.  We detail the derivation
of~(\ref{eq:exp-fam-quadratic}) for our specific models of interest in
Section \ref{Estimation-Simplex}, where we then give explicit expressions for
$\boldsymbol{\Gamma}$ and $\boldsymbol{g}$.

In high-dimensional settings where the number of parameters $r$ is
large compared to the sample size $n$, we add an $\ell_1$
regularization on $\boldsymbol{\theta}$ and consider the \emph{regularized generalized
  score matching loss} 
\begin{equation}\label{eq_loss_regularized}
\hat{L}_{\boldsymbol{h},\boldsymbol{C},\lambda,\delta}(p_{\boldsymbol{\theta}})\equiv\frac{1}{2}\boldsymbol{\theta}^{\top}\boldsymbol{\Gamma}_{\delta}(\mathbf{x})\boldsymbol{\theta}-\boldsymbol{g}(\mathbf{x})^{\top}\boldsymbol{\theta}+\lambda\|\boldsymbol{\theta}\|_1.
\end{equation}
Here $\boldsymbol{\Gamma}_{\delta}(\mathbf{x})$ equals
$\boldsymbol{\Gamma}(\mathbf{x})$ except that the diagonal entries are
multiplied by $\delta>1$.  The \emph{diagonal multiplier} $\delta$ is
introduced to avoid possible unboundedness of the loss when
$\boldsymbol{\Gamma}(\mathbf{x})$ is singular (due to high dimension)
and the regularization parameter $\lambda$ is small; cf.~Section 4 of
\citet{yu19}.  We minimize the regularized
loss from~(\ref{eq_loss_regularized}) using coordinate-descent.
In some problems, it is natural to exclude a subset of the
parameters from the regularization term.  In particular, in our later
estimation of an interaction matrix $\mathbf{K}$, we typically
only penalize its off-diagonal entries,
i.e., the penalty is~$\lambda\|\mathrm{vec}(\mathbf{K}_{\mathrm{off}})\|_1$.

\subsection{Dependency on the Removed Coordinate}\label{Dependency on
  the Removed Coordinate}

The loss
function in (\ref{eq_loss_regularized}) depends on the choice of the removed
coordinate $x_m$ through $\boldsymbol{\Gamma}_{\delta}(\mathbf{x})$ and
$\boldsymbol{g}(\mathbf{x})$ and is no longer symmetric in all
coordinates. One way to mitigate the asymmetry is to calculate the
loss $\hat{L}_j$ for each removed coordinate $x_j$ and then optimize
the average loss, which is symmetric in $j=1,\dots,m$.  In
high dimensions, averaging over all $j$ is computationally expensive, but we may
nevertheless average the losses $\hat{L}_j$ over a set of (say 10
randomly chosen) coordinates $\mathcal{J}\subseteq\{1,\dots,m\}$.
Importantly, the averaged loss is still a quadratic form as in
(\ref{eq_loss_regularized}), just with
$\boldsymbol{\Gamma}_{\delta}(\mathbf{x})$ and
$\boldsymbol{g}(\mathbf{x})$ replaced by averages.

\section{Power Interaction Models on the Probability Simplex}\label{$a$-$b$ Models on Standard Simplices}


Our interest is in models for compositional data that flexibly
accommodate interactions. To
this end, we propose the following class of power interaction models
in the form of exponential families:
\begin{equation}\label{eq_interaction_density2}
p_{\boldsymbol{\eta},\mathbf{K}}(\boldsymbol{x})\;\propto\;\exp\left(-\frac{1}{2a}{\boldsymbol{x}^a}^{\top}\mathbf{K}\boldsymbol{x}^a+\frac{1}{b}\boldsymbol{\eta}^{\top}\boldsymbol{x}^b\right)\mathds{1}_{\Deltam}(\boldsymbol{x}),
\end{equation}
where $a\geq 0$ and $b\geq 0$ are known constants and the interaction
matrix $\mathbf{K}\in\mathbb{R}^{m\times m}$ and the vector
$\boldsymbol{\eta}\in\mathbb{R}^m$ are unknown parameters of interest.
If $a=0$ or $b=0$, we take
$\boldsymbol{x}^0\equiv\log\boldsymbol{x}$ and $1/0\equiv1$.  We write
$\mathbf{K}_0$ and $\boldsymbol{\eta}_0$ for the parameters of a true data-generating distribution in the model
from~(\ref{eq_interaction_density2}).

When $a=b=0$ with
$\mathbf{K}\mathbf{1}_m=\mathbf{K}^{\top}\mathbf{1}_m=\mathbf{0}_m$,
the model from~(\ref{eq_interaction_density2}) specializes to the
$A^{m-1}$ model introduced in \citet{ait85}, on which we elaborate in
Section~\ref{sec_Ad_Models}.  The models determined by
(\ref{eq_interaction_density2}) are also compositional versions of
graphical models on $\mathbb{R}^m$ or the orthant $\mathbb{R}^m_+$
such as (truncated) Gaussian models ($a=b=1$), the exponential
square-root model from \citet{ino16} ($a=b=1/2$), or the gamma model
from \cite{yu19}.  However, on the simplex domain,
sparsity in $\mathbf{K}$ does not directly relate to conditional
independences among the components of the compositional observation. Indeed,
any two components are perfectly correlated when all others are fixed,
due to the sum-to-one constraint.  

Reasoning as for \refyu{Theorem 4.1} in \citet{yu21}, we obtain the
following conditions on $a$ and $b$ for the density to be proper in
the simplex case.  Refined conditions for the case $a=b=0$ are obtained in Section \ref{sec_Ad_Models}. 

\begin{theorem}[Finite normalizing constant]\label{thm_norm_const}
If one of the following conditions holds, then the right-hand side of
\eqref{eq_interaction_density2} is integrable over $\Deltam$ and
defines a proper density:

\begin{enumerate}[label=(CC\arabic*),leftmargin=35pt]
\item $a>0$, $b>0$;
\item $a>0$, $b=0$, $\eta_j>-1$ for all $j$;
\item $a=0$, $b=0$, $\log(\boldsymbol{x})^{\top}\mathbf{K}\log(\boldsymbol{x})>0$ $\forall \boldsymbol{x}\in\Deltam$;
\item $a=0$, $b>0$, $\log(\boldsymbol{x})^{\top}\mathbf{K}\log(\boldsymbol{x})\geq 0$ $\forall\boldsymbol{x}\in\Deltam$.
\end{enumerate}
In the case where $\boldsymbol{\eta}=\boldsymbol{0}$ is known, the conditions on $b$ and $\boldsymbol{\eta}$ can be ignored.
\end{theorem}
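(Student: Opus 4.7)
My plan is to exploit the compactness of $\Delta_{m-1}$: the integrand in~\eqref{eq_interaction_density2} is continuous on the open simplex $\Delta_{m-1}^{\circ}$, so integrability can only fail through blow-up near the boundary, where some coordinate $x_j$ vanishes and $\log x_j\to-\infty$. I would split according to whether the quadratic and the linear pieces of the exponent remain bounded on $\Delta_{m-1}$ and treat the four conditions in order of increasing difficulty.

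\textbf{Cases (CC1) and (CC4).} For $a>0$ and $b>0$, both maps $\boldsymbol{x}\mapsto \boldsymbol{x}^a$ and $\boldsymbol{x}\mapsto \boldsymbol{x}^b$ extend continuously to all of $\Delta_{m-1}$ (with $0^a=0$), so the exponent is a continuous function on a compact set, hence bounded, and integrability of the bounded integrand is immediate. For $a=0$ and $b>0$, the linear piece $b^{-1}\boldsymbol{\eta}^\top \boldsymbol{x}^b$ is continuous and bounded on $\Delta_{m-1}$, while the assumption $\log(\boldsymbol{x})^\top \mathbf{K}\log(\boldsymbol{x})\ge 0$ forces the quadratic contribution to the exponent to be non-positive, so the integrand is again dominated by a constant.

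\textbf{Case (CC2).} For $a>0$ and $b=0$, the quadratic piece is continuous and bounded on $\Delta_{m-1}$ as in (CC1), and the kernel factorises as a bounded positive factor times $\prod_j x_j^{\eta_j}$. Integrability then reduces to the classical Dirichlet normalising integral
\[
\int_{\Delta_{m-1}}\prod_{j=1}^m x_j^{\eta_j}\, d\boldsymbol{x}\;=\;\frac{\prod_j \Gamma(\eta_j+1)}{\Gamma\bigl(m+\sum_j \eta_j\bigr)},
\]
which is finite precisely when every $\eta_j>-1$.

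\textbf{Case (CC3) and the $\boldsymbol{\eta}=\boldsymbol{0}$ remark.} The case $a=b=0$ is the main obstacle because both the quadratic and the linear pieces of the exponent can diverge on $\partial\Delta_{m-1}$. I would upgrade the pointwise hypothesis $\log(\boldsymbol{x})^\top \mathbf{K}\log(\boldsymbol{x})>0$ to a coercive lower bound of the form $\log(\boldsymbol{x})^\top \mathbf{K}\log(\boldsymbol{x})\ge c\,\|\log\boldsymbol{x}\|_2^{2}$ outside a fixed compact subset of $\Delta_{m-1}^\circ$, by a compactness argument on the unit-sphere directions $\log(\boldsymbol{x})/\|\log\boldsymbol{x}\|_2$ as $\boldsymbol{x}$ approaches the boundary. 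Given such coercivity, the quadratic term dominates the linear $\boldsymbol{\eta}^\top \log\boldsymbol{x}$ near every face of the simplex, and after the change of variables $y_j=\log x_j$ (with Jacobian $\prod_j e^{y_j}$) the boundary tails reduce to Gaussian-type integrals that converge. Finally, when $\boldsymbol{\eta}=\boldsymbol{0}$ the linear contribution vanishes, $b$ drops out of the density entirely, and the conditions on $b$ and on $\boldsymbol{\eta}$ in (CC1)--(CC4) become vacuous, which justifies the closing sentence of the theorem.
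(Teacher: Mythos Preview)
Your arguments for (CC1), (CC2), and (CC4) are correct and are essentially what the paper has in mind: boundedness of a continuous integrand on the compact simplex for (CC1) and (CC4), and domination by a Dirichlet kernel for (CC2). The paper does not spell these cases out but simply cites the analogous full-dimensional result, Theorem~4.1 of \citet{yu21}.

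The gap is in (CC3). Your compactness argument on unit-sphere directions does not deliver the coercivity bound $\log(\boldsymbol{x})^\top\mathbf{K}\log(\boldsymbol{x})\ge c\,\|\log\boldsymbol{x}\|_2^2$: the hypothesis gives $v^\top\mathbf{K}v>0$ only on the set of directions $\log\boldsymbol{x}/\|\log\boldsymbol{x}\|_2$ actually attained on the open simplex, and that set is not closed in the sphere. Concretely, with $m=2$ and $\mathbf{K}=\boldsymbol{e}_1\boldsymbol{e}_1^\top$ one has $\log(\boldsymbol{x})^\top\mathbf{K}\log(\boldsymbol{x})=(\log x_1)^2>0$ throughout the open simplex, so the pointwise hypothesis in (CC3) holds, yet $(\log x_1)^2/\|\log\boldsymbol{x}\|_2^2\to 0$ as $x_1\to 1$ and coercivity fails. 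Worse, for this $\mathbf{K}$ and $\boldsymbol\eta=(0,-2)$ the kernel behaves like $(1-x_1)^{-2}$ near $x_1=1$ and is not integrable, so no argument for (CC3) as literally read can succeed for unrestricted $\boldsymbol\eta$. The paper is aware of this: rather than proving (CC3) in isolation it explicitly defers the $a=b=0$ case to the refined conditions of Theorem~\ref{thm_assumption_Ad}, whose hypotheses either make $\mathbf{K}$ positive definite (in which case your coercivity step does go through, with $c=\lambda_{\min}(\mathbf{K})$) or impose additional constraints on $\boldsymbol\eta$ such as $\boldsymbol\eta\succ-\mathbf{1}_m$ that (CC3) omits.
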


As a prerequisite for our subsequent discussion of estimation, the
following theorem gives the conditions for the identifiability of
$\mathbf{K}$ and $\boldsymbol{\eta}$ from a given $a$-$b$ density on
the simplex. In particular, the parameters are identifiable if $a\neq 1$
and $2a=b> 0$
does not hold.  The theorem is proven in the appendix.

\begin{theorem}[Identifiability]
  \label{thm_simplex_identifiability}
Suppose there exist $\mathbf{K}_1$, $\mathbf{K}_2$, $\boldsymbol{\eta}_1$, $\boldsymbol{\eta}_2$ such that
\[\exp\left(-\frac{1}{2a}{\boldsymbol{x}^a}^{\top}\mathbf{K}_1\boldsymbol{x}^a+\frac{1}{b}\boldsymbol{\eta}_1^{\top}\boldsymbol{x}^b\right)=\exp\left(-\frac{1}{2a}{\boldsymbol{x}^a}^{\top}\mathbf{K}_2\boldsymbol{x}^a+\frac{1}{b}\boldsymbol{\eta}_2^{\top}\boldsymbol{x}^b\right)\]
for all $\boldsymbol{x}\in\Deltam$, where $\boldsymbol{x}^0\equiv\log(\boldsymbol{x})$ and $1/0\equiv 1$. Then $\mathbf{K}_1=\mathbf{K}_2$ and $\boldsymbol{\eta}_1=\boldsymbol{\eta}_2$, or else one of the following must hold: (I) $a=b=1$, (II) $a=1$, $b=2$, (III) $a=1$ and $\boldsymbol{\eta}_1=\boldsymbol{\eta}_2$, (IV) $2a=b> 0$ and $\mathbf{K}_1-\mathbf{K}_2=2\boldsymbol{\eta}_1-2\boldsymbol{\eta}_2$.
\end{theorem}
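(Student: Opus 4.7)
The plan is to take logs on both sides of the hypothesis, turning it into the identity
\[
\frac{1}{2a}{\boldsymbol{x}^a}^{\top}\Delta\mathbf{K}\,\boldsymbol{x}^a \;=\; \frac{1}{b}\Delta\boldsymbol{\eta}^{\top}\boldsymbol{x}^b \quad\text{for all } \boldsymbol{x}\in\Delta,
\]
with $\Delta\mathbf{K}:=\mathbf{K}_1-\mathbf{K}_2$, $\Delta\boldsymbol{\eta}:=\boldsymbol{\eta}_1-\boldsymbol{\eta}_2$, and the conventions $\boldsymbol{x}^0\equiv\log\boldsymbol{x}$ and $1/0\equiv 1$. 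Since only the symmetric part of $\Delta\mathbf{K}$ enters the quadratic form, we may take $\Delta\mathbf{K}$ to be symmetric. The goal is then to show that this identity forces $\Delta\mathbf{K}=\mathbf{0}$ and $\Delta\boldsymbol{\eta}=\mathbf{0}$ outside the four exceptional cases, and in those cases to describe the residual freedom.

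The main step is to restrict the identity to simplex edges $\boldsymbol{x}(t)=t\boldsymbol{e}_j+(1-t)\boldsymbol{e}_k$ for distinct indices $j,k$ and $t\in(0,1)$ (valid when $a,b>0$; the degenerate exponents are handled separately below). On such an edge the identity collapses to
\[
\frac{1}{2a}\bigl[\Delta K_{jj}\,t^{2a}+2\Delta K_{jk}\,t^a(1-t)^a+\Delta K_{kk}(1-t)^{2a}\bigr]=\frac{1}{b}\bigl[\Delta\eta_j\,t^b+\Delta\eta_k(1-t)^b\bigr],
\]
reducing the problem to linear independence on $(0,1)$ of the five functions $t^{2a},\,t^a(1-t)^a,\,(1-t)^{2a},\,t^b,\,(1-t)^b$. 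Expanding both sides in Puiseux series around $t=0$, the non-integer exponents appearing are $\{2a\}$ from $t^{2a}$, the set $\{a,a+1,a+2,\dots\}$ from $t^a(1-t)^a$, and $\{b\}$ from $t^b$; the remaining two functions contribute only non-negative integer exponents. In the generic situation where neither $a$ nor $b$ is a non-negative integer and $2a\neq b$, these three non-integer exponent sets are pairwise disjoint and disjoint from the integers, so matching coefficients term by term kills every one of $\Delta K_{jj},\Delta K_{jk},\Delta K_{kk},\Delta\eta_j,\Delta\eta_k$. Ranging over all pairs $j\neq k$ then gives $\Delta\mathbf{K}=\mathbf{0}$ and $\Delta\boldsymbol{\eta}=\mathbf{0}$.

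The exceptional cases correspond precisely to collisions of these Puiseux exponents. The collision $2a=b>0$ merges the $t^{2a}$ and $t^b$ coefficients and leaves only a one-parameter relation linking $\Delta K_{jj}$ to $\Delta\eta_j$, while the isolated $t^a$ exponent still forces $\Delta K_{jk}=0$ for $j\neq k$; this is case (IV). The value $a=1$ makes $t^a(1-t)^a=t-t^2$ polynomial so that its expansion merges with the integer-exponent expansions of $(1-t)^{2a}$ and $(1-t)^b$; the subcases $b=1$ and $b=2$ then produce the further degeneracies of (I) and (II), with (I) admitting the Gaussian-on-simplex gauge $\Delta\mathbf{K}=\boldsymbol{u}\mathbf{1}^{\top}+\mathbf{1}\boldsymbol{u}^{\top}$, $\Delta\boldsymbol{\eta}=\boldsymbol{u}$ for any $\boldsymbol{u}\in\mathbb{R}^m$, verified directly using $\mathbf{1}^{\top}\boldsymbol{x}=1$. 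In every exceptional case, once the edge analysis has pinned the parameters down to a candidate family, I would verify consistency with the full identity by substituting $x_m=1-\mathbf{1}_{m-1}^{\top}\boldsymbol{x}_{-m}$ and requiring the resulting polynomial in $\boldsymbol{x}_{-m}\in\Delta_{-m}^{\circ}$ to vanish identically; this full-simplex check rules out any degeneracies missed by the edge argument and covers case (III).

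The main obstacle is the boundary $a=0$ or $b=0$, where edges are unusable because $\log x_j\to-\infty$ at corners of $\Delta$. For these I would instead pass to the log-ratio chart $z_j=\log(x_j/x_m)$, $j<m$, in which the identity becomes analytic on all of $\mathbb{R}^{m-1}$; expanding around $\boldsymbol{z}=\mathbf{0}$ and tracking the leading order as one or more $z_j\to\pm\infty$ isolates the entries of $\Delta\mathbf{K}$ and $\Delta\boldsymbol{\eta}$ and rules out non-trivial solutions. The careful bookkeeping of when distinct Puiseux exponents can coincide---producing only the four listed scenarios---is what drives the exceptional list (I)-(IV).
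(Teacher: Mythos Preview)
Your route differs from the paper's: you restrict the log-identity to simplex \emph{edges} and analyze the resulting one-variable relation via Puiseux exponents, whereas the paper differentiates the profiled log-density in $x_j$ and then, with $x_j$ and $x_m$ held fixed, varies the remaining $m-2$ coordinates to force the coefficient vector $(x_j^{a-1}\Delta\boldsymbol{\kappa}_{\cdot j}-x_m^{a-1}\Delta\boldsymbol{\kappa}_{\cdot m})_{-j,-m}$ to vanish (or, when it cannot, to conclude $a=1$). The paper's argument therefore uses genuinely $(m-1)$-dimensional information at every step, while yours tries to get by with two-coordinate restrictions and a deferred ``full-simplex check.''

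The edge reduction is not strong enough on its own, and the deferred check is where the real work hides. Concretely, take $a=1$, $b=3$ (not among (I)--(IV)). On every edge $(j,k)$ the five functions $t^2,\,t(1-t),\,(1-t)^2,\,t^3,\,(1-t)^3$ are linearly \emph{dependent}, and matching coefficients leaves the one-parameter family $\Delta\boldsymbol{\eta}=c\mathbf{1}$, $\Delta\mathbf{K}=c\mathbf{I}-\tfrac{c}{3}\mathbf{1}\mathbf{1}^\top$; only evaluation at an interior point with at least three nonzero coordinates (e.g.\ the centroid for $m\ge 3$) kills $c$. More generally, whenever $a=1$ and $b$ is any positive integer, or $2a$ is a positive integer, or $b-a\in\mathbb{Z}_{\ge 0}$, your ``non-integer exponent sets are pairwise disjoint and disjoint from the integers'' claim fails, and none of these cases falls under your stated generic hypothesis ``neither $a$ nor $b$ is a non-negative integer and $2a\neq b$.'' So the bookkeeping you allude to---showing that all such collisions either resolve on edges or reduce to exactly (I)--(IV) after a higher-dimensional check---is the substantive part of the proof, and it is not carried out. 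The paper's derivative-plus-section argument sidesteps this because varying $\boldsymbol{x}_{-j,-m}$ with $x_j,x_m$ fixed immediately isolates the off-$(j,m)$ entries of $\Delta\boldsymbol{\kappa}_{\cdot j}$ and $\Delta\boldsymbol{\kappa}_{\cdot m}$, forcing $a=1$ in a single stroke whenever those entries are nonzero.
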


Our approach to estimation by score matching is to profile out the last component of $\boldsymbol{x}$ using
$x_m=1-\sum_{j=1}^{m-1} x_j$.   The density
in \eqref{eq_interaction_density2} becomes
\begin{multline}\label{eq_density_simplex}
p_{\boldsymbol{\eta},\mathbf{K}}(\boldsymbol{x}_{-m})\propto\exp\left[-\frac{1}{2a}{\boldsymbol{x}_{-m}^a}^{\top}\mathbf{K}_{-m,-m}\boldsymbol{x}_{-m}^a-\frac{1}{a}{\boldsymbol{x}_{-m}^a}^{\top}\boldsymbol{\kappa}_{-m,m}\Big(1-\sum_{j=1}^{m-1} x_j\Big)^a\right.\\
\left.-\frac{1}{2a}\kappa_{m,m}\Big(1-\sum_{j=1}^{m-1} x_j\Big)^{2a}+\frac{1}{b}\boldsymbol{\eta}_{-m}^{\top}\boldsymbol{x}_{-m}^b+\frac{\eta_m}{b}\Big(1-\sum_{j=1}^{m-1} x_j\Big)^b\right]
\end{multline}
on $\Deltam_{-m}\subseteq\mathbb{R}^{m-1}$.
The next theorem gives sufficient conditions for the
assumptions (A.1)--(A.3) in Section~\ref{Formulation in yu21} to
hold. Under these assumptions, the generalized score matching loss
from \eqref{def_loss} has the equivalent form in
\eqref{eq_equivalent_loss}, and the empirical loss stated in
\eqref{eq_empirical_loss} is valid. The theorem is proven in the appendix.

\begin{theorem}[Assumptions for score matching]\label{thm_A_simp}
Suppose one of (CC1) through (CC4) holds, and $\boldsymbol{h}(\boldsymbol{x})=\left(x_1^{\alpha_1},\ldots,x_m^{\alpha_m}\right)$, where
\begin{enumerate}[(I)]
\item if $a>0$ and $b>0$, $\alpha_j>\max\{0,1-a,1-b\}$;
\item if $a>0$ and $b=0$, $\alpha_j>1-\eta_{0,j}$;
\item if $a=0$, $\alpha_j\geq 0$.
\end{enumerate}
Then conditions (A.1)--(A.3) in Section \ref{Formulation in yu21} are
satisfied.
In the case with $\boldsymbol{\eta}\equiv\boldsymbol{0}$ known, it suffices to have $a>0$ and $\alpha_j>\max\{0,1-a\}$, or $a=0$ and $\alpha_j\geq 0$.
\end{theorem}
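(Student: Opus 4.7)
The plan is to verify assumptions (A.1), (A.2), (A.3) of Section~\ref{Formulation in yu21} directly from the profiled density in~\eqref{eq_density_simplex}, by a boundary analysis of $p_{\boldsymbol{\eta}_0,\mathbf{K}_0}$, $\partial_j \log p$, and $h_j\circ\varphi_j$ near each endpoint of the sections of $\Delta_{-m}$. Because $\Delta_{-m}$ is bounded and the weights $h_j(t)=t^{\alpha_j}$ are smooth on $(0,\infty)$, all issues are driven solely by the behavior of these functions at the two boundary facets $\{x_j=0\}$ and $\{x_m=0\}$; condition (A.3) is immediate since $t\mapsto t^{\alpha_j}$ is absolutely continuous on every compact subinterval of $[0,\infty)$ when $\alpha_j\ge 0$, which is guaranteed in all three cases (I)--(III) of the statement.

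First I would describe the geometry. For each $j\in\{1,\ldots,m-1\}$ and fixed $\boldsymbol{x}_{-\{j,m\}}$ with $\mathbf{1}^\top_{m-2}\boldsymbol{x}_{-\{j,m\}}<1$, the $j$th section of $\Delta_{-m}$ is the single interval $\mathcal C_j(\boldsymbol{x}_{-m})=[0,\,1-\mathbf{1}^\top_{m-2}\boldsymbol{x}_{-\{j,m\}}]$, with left endpoint corresponding to $x_j\to 0$ and right endpoint to $x_m\to 0$. On this interval $\varphi_j(\boldsymbol{x})=\min\{x_j,x_m\}$, so $h_j\circ\varphi_j=x_j^{\alpha_j}$ near the left endpoint and $x_m^{\alpha_j}$ near the right endpoint. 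Differentiating~\eqref{eq_density_simplex} and using $\partial_j x_m=-1$, the score admits the schematic form
\[
\partial_j\log p_{\boldsymbol{\eta},\mathbf K}(\boldsymbol{x})=-f_j(\boldsymbol{x}^a)\,x_j^{a-1}+f_m(\boldsymbol{x}^a)\,x_m^{a-1}+\eta_j\,x_j^{b-1}-\eta_m\,x_m^{b-1},
\]
where $f_j,f_m$ are bounded affine functions of $\boldsymbol{x}^a$ on $\overline{\Delta_{-m}}$ (with the usual $0$-convention so that $x^0\equiv\log x$ and $1/0\equiv 1$ when $a=0$ or $b=0$). The density itself is bounded on $\overline{\Delta_{-m}}$ when $b>0$, and satisfies $p_{\boldsymbol{\eta}_0,\mathbf K_0}\asymp x_j^{\eta_{0,j}}\,x_m^{\eta_{0,m}}\cdot\exp(O(1))$ near the respective facets when $b=0$.

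For (A.1), the endpoint product at $x_j\to 0^+$ is then (up to bounded factors) of order
\[
x_j^{\alpha_j}\cdot x_j^{\eta_{0,j}\,\mathbbm{1}\{b=0\}}\cdot\bigl(x_j^{a-1}+x_j^{b-1}\bigr),
\]
which tends to $0$ exactly when $\alpha_j$ exceeds the threshold in (I)--(III); the computation at $x_m\to 0^+$ is symmetric, using the $x_m$-exponents of the score. For (A.2), I would apply the same exponent bookkeeping to $\|\nabla\log p\odot(\boldsymbol h\circ\boldsymbol\varphi)^{1/2}\|_2^2$ and to $\bigl[(h_j\circ\varphi_j)\,\partial_j\log p\bigr]'$. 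Since $\Delta_{-m}$ is bounded, integrability reduces to local integrability at the two facets, which follows from $\int_0^1 t^{\gamma}\,dt<\infty$ iff $\gamma>-1$; the extra differentiation in the second integrand lowers exponents by at most one, and the sharper strict inequalities in (I)--(III) are tailored to absorb this loss. The $\boldsymbol{\eta}\equiv\boldsymbol 0$ simplification is obtained by discarding all $b$-dependent terms, which removes any role of $b$.

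The main obstacle will be the $a=0$ case (III). There the score contains $1/x_j$ singularities from the differentiation of $\log\boldsymbol{x}^\top\mathbf{K}\log\boldsymbol{x}$, whose coefficients themselves blow up like $\log x_j$. To dominate this, I would write the density as $\exp(-\tfrac12\log\boldsymbol{x}^\top\mathbf{K}\log\boldsymbol{x})\prod_j x_j^{\eta_j}$ and exploit the positivity hypotheses in (CC3)--(CC4) to show that the $\exp$-factor decays to $0$ faster than any polynomial as $x_j\to 0^+$ or $x_m\to 0^+$. This suppression absorbs both the $\log$ factors and any $1/x_j$ singularity, so the bare condition $\alpha_j\ge 0$ already suffices in this case. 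The remaining cases (I) and (II) require only the straightforward exponent count sketched above and follow the same template as Theorem~4.1 of \citet{yu21}.
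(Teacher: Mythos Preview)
Your proposal is correct and follows essentially the same route as the paper's proof: both set up the section $\mathcal{C}_j(\boldsymbol{x}_{-m})=[0,1-\mathbf{1}_{m-2}^\top\boldsymbol{x}_{-\{j,m\}}]$, use $\varphi_j=\min\{x_j,x_m\}$, write the score as a linear combination of $x_j^{a-1},x_m^{a-1},x_j^{b-1},x_m^{b-1}$ with bounded coefficients, and then do exponent bookkeeping at the two facets $x_j\to 0^+$ and $x_m\to 0^+$; for $a=0$ both arguments rely on the super-polynomial decay of $\exp\bigl(-\tfrac12\log\boldsymbol{x}^\top\mathbf{K}_0\log\boldsymbol{x}\bigr)$ to absorb all $\log$- and power-type singularities. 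The paper's treatment of (A.2) is somewhat more explicit---it splits the inner integral at $y_j/2$, uses the change of variable $x_j\leftrightarrow x_m$ on one half, and tracks the residual $y_j$-dependence through to a Dirichlet normalizing constant---whereas you reduce to one-dimensional local integrability at a codimension-one facet; these are the same idea at different levels of detail, and your sketch would expand directly into the paper's computation.
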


\section{Estimation for Power Interaction Models}
\label{sec:estim-power-inter}

\subsection{Estimation for General $a$ and $b$}\label{Estimation-Simplex}

In the equations in this section, the case $a=0$ (and similarly for
$b=0$) is covered by the following convention.  If $a=0$, substitute
coefficients ``$a$'' with ``1'' and coefficients ``$(a-1)$'' with
``$-1$''.  As before, we let $x^a\equiv\log x$ if $a=0$.

For notational simplicity, we consider again dropping the last
coordinate $x_m$; the results for dropping other coordinates $x_j$'s
would be analogous.  Having substituted
$x_m\equiv 1-\mathbf{1}_{m-1}^{\top}\boldsymbol{x}_{-m}$ and working
on $\Deltam_{-m}$, the partial derivative
$\partial_j\log p(\boldsymbol{x}_{-m})$ of the density
$p(\boldsymbol{x}_{-m})\equiv p_{\boldsymbol{\eta},\mathbf{K}}(\boldsymbol{x}_{-m})$
in~(\ref{eq_density_simplex}) now depends on both
$(\boldsymbol{\kappa}_{,j},\eta_j)$ and
$(\boldsymbol{\kappa}_{,m},\eta_m)$. Thus, unlike in the case of
$a$-$b$ models on domains with positive Lebesgue
measure, 
the subvectors $(\boldsymbol{\kappa}_{,j},\eta_j)$ and
$(\boldsymbol{\kappa}_{,m},\eta_m)$ are no longer isolated in the
score-matching loss.  Instead, we have
\begin{align}
\partial_j&\log p(\boldsymbol{x}_{-m})=-\left(\boldsymbol{\kappa}_{,j}^{\top}\boldsymbol{x}^a\right)x_j^{a-1}+\left(\boldsymbol{\kappa}_{,m}^{\top}\boldsymbol{x}^a\right)x_m^{a-1}+\eta_j x_j^{b-1}-\eta_m x_m^{b-1},\label{eq_grad_logp_simplex}\\
\partial_{jj}&\log p(\boldsymbol{x}_{-m})=-(a-1)\left[\left(\boldsymbol{\kappa}_{,j}^{\top}\boldsymbol{x}^a\right)x_j^{a-2}+\left(\boldsymbol{\kappa}_{,m}^{\top}\boldsymbol{x}^a\right)x_m^{a-2}\right]\\
&-a\left[\kappa_{jj}x_j^{2a-2}+\kappa_{mm}x_m^{2a-2}+2\kappa_{jm}x_j^{a-1}x_m^{a-1}\right]+(b-1)\left[\eta_j x_j^{b-2}+\eta_m x_m^{b-2}\right].\nonumber
\end{align}
These derivatives yield the penalized loss, which we may write as
\begin{equation}\label{penalized_loss}
\frac{1}{2}\left(\mathrm{vec}(\mathbf{K}),\,\boldsymbol{\eta}\right)^{\top}\boldsymbol{\Gamma}\left(\mathrm{vec}(\mathbf{K}),\,\boldsymbol{\eta}\right)-\boldsymbol{g}^{\top}\left(\mathrm{vec}(\mathbf{K}),\,\boldsymbol{\eta}\right)+\lambda_{\mathbf{K}}\|\mathrm{vec}(\mathbf{K}_{\mathrm{off}})\|_1+\lambda_{\boldsymbol{\eta}}\|\boldsymbol{\eta}\|_1,
\end{equation}
with matrix $\boldsymbol{\Gamma}$ and vector $\boldsymbol{g}$ naturally partitioned as
\begin{align}
  \label{eq:Gamma}
\boldsymbol{\Gamma}\equiv\begin{bmatrix}\boldsymbol{\Gamma}_{\mathbf{K}} & \boldsymbol{\Gamma}_{\mathbf{K},\boldsymbol{\eta}} \\ \boldsymbol{\Gamma}_{\mathbf{K},\boldsymbol{\eta}}^{\top} & \boldsymbol{\Gamma}_{\boldsymbol{\eta}}\end{bmatrix}\in\mathbb{R}^{(m^2+m)\times (m^2+m)},
\quad\boldsymbol{g}\equiv\left(\mathrm{vec}(\mathbf{g}_{\mathbf{K}}),\,\boldsymbol{g}_{\boldsymbol{\eta}}\right)\in\mathbb{R}^{m^2+m},
\end{align}
where the respective blocks are
$\boldsymbol{\Gamma}_{\mathbf{K}}\in\mathbb{R}^{m^2\times m^2}$,
$\boldsymbol{\Gamma}_{\mathbf{K},\boldsymbol{\eta}}\in\mathbb{R}^{m^2\times
  m}$, $\boldsymbol{\Gamma}_{\boldsymbol{\eta}}\in\mathbb{R}^{m\times
  m}$, and $\mathbf{g}_{\mathbf{K}}\in\mathbb{R}^{m\times m}$, $\boldsymbol{g}_{\boldsymbol{\eta}}\in\mathbb{R}^m$.
The estimates $\hat{\mathbf{K}}$ and $\hat{\boldsymbol{\eta}}$ that
minimize (\ref{penalized_loss}) are our estimates for $\mathbf{K}$ and
$\boldsymbol{\eta}$.

We note that in comparison to the full-dimensional setting of
\cite{yu19,yu21}, the blocks
$\boldsymbol{\Gamma}_{\mathbf{K}}\in\mathbb{R}^{m^2\times m^2}$,
$\boldsymbol{\Gamma}_{\mathbf{K},\boldsymbol{\eta}}\in\mathbb{R}^{m(m+1)}$
and $\boldsymbol{\Gamma}_{\boldsymbol{\eta}}\in\mathbb{R}^{m\times m}$
are no longer block-diagonal with $m$ blocks, due to the substitution
of $x_m$. Instead,
\begin{align*}
\boldsymbol{\Gamma}_{\mathbf{K}}&\equiv
\begin{bmatrix}
\boldsymbol{\Gamma}_{\mathbf{K},1} & \boldsymbol{0} & \cdots & \boldsymbol{0} & \boldsymbol{\Gamma}_{\mathbf{K},(1,m)} \\ 
\boldsymbol{0} & \boldsymbol{\Gamma}_{\mathbf{K},2} & \cdots & \boldsymbol{0} & \boldsymbol{\Gamma}_{\mathbf{K},(2,m)} \\ 
\vdots & \vdots & \ddots & \vdots & \vdots \\ 
\boldsymbol{0} & \boldsymbol{0} & \cdots & \boldsymbol{\Gamma}_{\mathbf{K},m-1} & \boldsymbol{\Gamma}_{\mathbf{K},(m-1,m)}  \\
\boldsymbol{\Gamma}_{\mathbf{K},(1,m)}^{\top} & \boldsymbol{\Gamma}_{\mathbf{K},(2,m)}^{\top} & \cdots & \boldsymbol{\Gamma}_{\mathbf{K},(m-1,m)}^{\top} & \boldsymbol{\Gamma}_{\mathbf{K},m}
\end{bmatrix}
                                                                                                                                                           \in\mathbb{R}^{m^2\times m^2},\\
  \intertext{with each block of size $m\times m$, and}
\boldsymbol{\Gamma}_{\mathbf{K},\boldsymbol{\eta}}&\equiv
\begin{bmatrix}
\boldsymbol{\gamma}_{\mathbf{K},\boldsymbol{\eta},1} & \boldsymbol{0} & \cdots & \boldsymbol{0} & \boldsymbol{\gamma}_{\mathbf{K},\boldsymbol{\eta},(1,m)} \\ 
\boldsymbol{0} & \boldsymbol{\gamma}_{\mathbf{K},\boldsymbol{\eta},2} & \cdots & \boldsymbol{0} & \boldsymbol{\gamma}_{\mathbf{K},\boldsymbol{\eta},(2,m)} \\ 
\vdots & \vdots & \ddots & \vdots & \vdots \\ 
\boldsymbol{0} & \boldsymbol{0} & \cdots & \boldsymbol{\gamma}_{\mathbf{K},\boldsymbol{\eta},m-1} & \boldsymbol{\gamma}_{\mathbf{K},\boldsymbol{\eta},(m-1,m)}  \\
\boldsymbol{\gamma}_{\mathbf{K},\boldsymbol{\eta},(m,1)} & \boldsymbol{\gamma}_{\mathbf{K},\boldsymbol{\eta},(m,2)} & \cdots & \boldsymbol{\gamma}_{\mathbf{K},\boldsymbol{\eta},(m,m-1)} & \boldsymbol{\gamma}_{\mathbf{K},\boldsymbol{\eta},m}
\end{bmatrix}
                                                                                                                                                                                            \in\mathbb{R}^{m^2\times m},\\
  \intertext{with each block a vector of size $m$, and}
\boldsymbol{\Gamma}_{\boldsymbol{\eta}}&\equiv
\begin{bmatrix}
\gamma_{\boldsymbol{\eta},1} & 0 & \cdots & 0 & \gamma_{\boldsymbol{\eta},(1,m)} \\ 
0 & \gamma_{\boldsymbol{\eta},2} & \cdots & 0 & \gamma_{\boldsymbol{\eta},(2,m)} \\ 
\vdots & \vdots & \ddots & \vdots & \vdots \\ 
0 & 0 & \cdots & \gamma_{\boldsymbol{\eta},m-1} & \gamma_{\boldsymbol{\eta},(m-1,m)}  \\
\gamma_{\boldsymbol{\eta},(1,m)} & \gamma_{\boldsymbol{\eta},(2,m)} & \cdots & \gamma_{\boldsymbol{\eta},(m-1,m)} & \gamma_{\boldsymbol{\eta},m}
\end{bmatrix}
                                                                                                                    \in\mathbb{R}^{m\times m}.
\end{align*}
The specific form of the blocks appearing in the preceding displays
is as follows.  Using the shorthand $\tilde h_j\equiv
h_j\circ\varphi_j$, we have for $j=1,\ldots,m-1$,
\begin{align*}
\boldsymbol{\Gamma}_{j} &\equiv \begin{bmatrix}
\boldsymbol{\Gamma}_{\mathbf{K},j} & \boldsymbol{\gamma}_{\mathbf{K},\boldsymbol{\eta},j} \\ \boldsymbol{\gamma}_{\mathbf{K},\boldsymbol{\eta},j}^{\top} & \gamma_{\boldsymbol{\eta},j}
\end{bmatrix}\\
  &\equiv\frac{1}{n}\sum_{i=1}^n\tilde h_j\left(\boldsymbol{X}^{(i)}\right)\begin{bmatrix}{X_j^{(i)}}^{a-1}{\boldsymbol{X}^{(i)}}^a \\ -{X_j^{(i)}}^{b-1} \end{bmatrix}\begin{bmatrix}{{X_j^{(i)}}^{a-1}{\boldsymbol{X}^{(i)}}^a} \\ -{X_j^{(i)}}^{b-1} \end{bmatrix}^{\top},\\
\boldsymbol{\Gamma}_{m} &\equiv \begin{bmatrix}
\boldsymbol{\Gamma}_{\mathbf{K},m} & \boldsymbol{\gamma}_{\mathbf{K},\boldsymbol{\eta},m} \\ \boldsymbol{\gamma}_{\mathbf{K},\boldsymbol{\eta},m}^{\top} & \gamma_{\boldsymbol{\eta},m}
\end{bmatrix}\\
&\equiv\frac{1}{n}\sum_{i=1}^n\sum_{k=1}^{m-1}\tilde h_k\left(\boldsymbol{X}^{(i)}\right)\begin{bmatrix}{X_m^{(i)}}^{a-1}{\boldsymbol{X}^{(i)}}^a \\ -{X_m^{(i)}}^{b-1} \end{bmatrix}\begin{bmatrix}{X_m^{(i)}}^{a-1}{{\boldsymbol{X}^{(i)}}^a}\\ -{X_m^{(i)}}^{b-1} \end{bmatrix}^{\top},\\
\boldsymbol{\Gamma}_{(j,m)} &\equiv \begin{bmatrix}
\boldsymbol{\Gamma}_{\mathbf{K},(j,m)} & \boldsymbol{\gamma}_{\mathbf{K},\boldsymbol{\eta},(j,m)} \\ \boldsymbol{\gamma}_{\mathbf{K},\boldsymbol{\eta},(m,j)} & \gamma_{\boldsymbol{\eta},(j,m)}
\end{bmatrix}\\
&\equiv-\frac{1}{n}\sum_{i=1}^n\tilde h_j\left(\boldsymbol{X}^{(i)}\right)\begin{bmatrix}{X_j^{(i)}}^{a-1}{\boldsymbol{X}^{(i)}}^a \\ -{X_j^{(i)}}^{b-1} \end{bmatrix}\begin{bmatrix}{X_m^{(i)}}^{a-1}{{\boldsymbol{X}^{(i)}}^a} \\ -{X_m^{(i)}}^{b-1} \end{bmatrix}^{\top}.
\end{align*}
In addition, 
\begin{align*}
\boldsymbol{g}_{\mathbf{K},j}&\equiv\frac{1}{n}\sum_{i=1}^n\left[\partial_j\tilde h_j\left(\boldsymbol{X}^{(i)}\right){X_j^{(i)}}^{a-1}+(a-1)\tilde h_j\left(\boldsymbol{X}^{(i)}\right){X_j^{(i)}}^{a-2}\right]{\boldsymbol{X}^{(i)}}^a\\
&\quad\quad\quad+a\tilde h_j\left(\boldsymbol{X}^{(i)}\right){X_j^{(i)}}^{2a-2}\boldsymbol{e}_{j,m}-a\tilde h_j\left(\boldsymbol{X}^{(i)}\right){X_j^{(i)}}^{a-1}{X_m^{(i)}}^{a-1}\boldsymbol{e}_{m,m},\\
\boldsymbol{g}_{\mathbf{K},m}&\equiv\frac{1}{n}\sum_{i=1}^n\sum_{k=1}^{m-1}\left[-\partial_k\tilde h_k\left(\boldsymbol{X}^{(i)}\right){X_m^{(i)}}^{a-1}+(a-1)\tilde h_k\left(\boldsymbol{X}^{(i)}\right){{X_m^{(i)}}^{a-2}}\right]{\boldsymbol{X}^{(i)}}^a\\
&\quad\quad\quad+a\tilde h_k\left(\boldsymbol{X}^{(i)}\right){X_m^{(i)}}^{2a-2}\boldsymbol{e}_{m,m}-a\tilde h_k\left(\boldsymbol{X}^{(i)}\right){X^{(i)}_k}^{a-1}{X^{(i)}_m}^{a-1}\boldsymbol{e}_{k,m},\\
\boldsymbol{g}_{\boldsymbol{\eta},j}&\equiv\frac{1}{n}\sum_{i=1}^n-\partial_j\tilde h_j\left(\boldsymbol{X}^{(i)}\right){X_j^{(i)}}^{b-1}-(b-1)\tilde h_j\left(\boldsymbol{X}^{(i)}\right){X_j^{(i)}}^{b-2},\\
\boldsymbol{g}_{\boldsymbol{\eta},m}&\equiv\frac{1}{n}\sum_{i=1}^n\sum_{k=1}^{m-1}\partial_k\tilde h_k\left(\boldsymbol{X}^{(i)}\right){X_m^{(i)}}^{b-1}-(b-1)\tilde h_k\left(\boldsymbol{X}^{(i)}\right){X_m^{(i)}}^{b-2}.
\end{align*}

\begin{remark}
  As noted in Section \ref{Dependency on the Removed Coordinate},
  we may average the losses obtained by removing in turn each one of
  the coordinates in a set
  $\mathcal{J}\subseteq\{1,\ldots,m\}$, instead of only $m$, to
  mitigate the dependence on the choice of the coordinate
  removed. This yields a quadratic loss obtained by averaging the
  respective matrices $\boldsymbol{\Gamma}$ and vectors
  $\boldsymbol{g}$.  The time complexity of calculating
  $\boldsymbol{\Gamma}$ and $\boldsymbol{g}$ becomes linear in
  $|\mathcal{J}|$. On the other hand, similar to the existence of
  $\boldsymbol{\Gamma}_{(j,m)}$ in the formulas above, all blocks
  corresponding to indices in $\mathcal{J}$
  would become non-zero, which then makes the time complexity of
  coordinate descent methods for computing the loss minimizers
  $\hat{\mathbf{K}}$ and $\hat{\boldsymbol{\eta}}$ also linear in
  $|\mathcal{J}|$. As a result, instead of attempting to make the loss
  independent of the choice of removed coordinate by choosing
  $\mathcal{J}=\{1,\ldots,m\}$, randomly sampling 5-10 coordinates is
  more practical for high-dimensional problems. In our implementation in the R package
  \texttt{genscore}, the user has the option to specify the set
  $\mathcal{J}$.
\end{remark}

\subsection{$\log$--$\log$ Models on the Standard Simplex}\label{sec_Ad_Models}
 We now treat the special case of $a=0$ and $b=0$, namely, models with density proportional to 
\begin{equation}\label{eq_density_loglog}
\exp\left(-\frac{1}{2}\log\boldsymbol{x}^{\top}\mathbf{K}\log\boldsymbol{x}+\boldsymbol{\eta}^{\top}\log\boldsymbol{x}\right)
\end{equation}
supported on the $(m-1)$-dimensional standard simplex $\Deltam$.  This
class encompasses the $A^{m-1}$ class of distributions in Equation
(2.7) of \citet{ait85}, which have parameters $\boldsymbol{\beta}\equiv(\beta_j)_{j=1,\ldots,m}$ and $(\gamma_{jk})_{1\leq j\neq k\leq m}$, $\gamma_{jk}=\gamma_{kj}$, and density proportional to
\begin{equation}
  \label{eq:density:Am}
  \exp\bigg(-\frac{1}{2}\sum_{j=1}^m\sum_{k\not=j}
  \gamma_{jk}(\log x_j-\log
  x_k)^2+(\boldsymbol{\beta}-\mathbf{1}_m)^{\top}\log\boldsymbol{x}
  \bigg)\mathds{1}_{\Deltam}(\boldsymbol{x}).
\end{equation}
Indeed, expanding the exponent, the last display can be rewritten as
\[\exp\bigg(-\sum_{j=1}^m(\log x_j)^2\bigg(\sum_{k\not= j}
  \gamma_{jk}\bigg) +\sum_{j=1}^m\sum_{k\not=j}
  \gamma_{jk}\log x_j\log
  x_k+\left(\boldsymbol{\beta}-\mathbf{1}_m\right)^{\top}\log\boldsymbol{x}
  \bigg).\]
Letting $\boldsymbol{\eta}\equiv\boldsymbol{\beta}-\mathbf{1}_m$ and taking
\[
  \kappa_{jj}=2\sum_{i\not=j}\gamma_{ji},
  \quad
  \kappa_{kj}=\kappa_{jk}=-2\gamma_{kj}, \quad 1\le j\neq k\le m,
\]
the $A^{m-1}$ model with densities as in (\ref{eq:density:Am})
translates to the $a$-$b$ model from~(\ref{eq_density_loglog}) 
for $a=b=0$ and under the constraint that $\mathbf{K}\mathbf{1}_m=\mathbf{0}_m$ and $\mathbf{K}=\mathbf{K}^{\top}$. 

Next, we show that under simple
conditions the density is again proper and that the population version
of the generalized score matching loss can still be rewritten in the
form given in \eqref{eq_equivalent_loss}. 
The proof is given in the appendix.

\begin{theorem}\label{thm_assumption_Ad}
Suppose $\mathbf{K}$ is symmetric, and one of the following holds: 
\begin{enumerate}[(I)]
\item $\mathbf{K}$ is positive definite, or
\item $\mathbf{K}\mathbf{1}_m=\mathbf{0}_m$, $\mathbf{K}_{-k,-k}$ is positive definite for some $k=1,\ldots,m$, and $\mathbf{1}_m^{\top}\boldsymbol{\eta}+m\geq 0$, or
\item $\mathbf{K}\mathbf{1}_m=\mathbf{0}_m$, $\mathbf{K}$ is positive semi-definite, and $\boldsymbol{\eta}\succ-\mathbf{1}_m$.
\end{enumerate}
Then the density in \eqref{eq_density_loglog} has a finite normalizing
constant. Note that (II) implies that $\mathbf{K}$ is positive
semi-definite and (III) implies that for all $k$ the submatrix
$\mathbf{K}_{-k,-k}$ is positive semi-definite (but not necessarily
positive definite).

For all $j=1,\ldots,m-1$, let $h_j(x)=x^{\alpha_j}$ with $\alpha_j>0$.
If (I) or (II) hold, or if (III) holds with
$\alpha_j>\max\{1-\eta_{0,j},1-\eta_{0,m}\}$, then conditions (A.1)--(A.3) in Section \ref{Formulation in yu21} are satisfied.
\end{theorem}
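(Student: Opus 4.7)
The plan is to split the proof into two parts, following the structure of the statement.

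\textbf{Part 1 (Finite normalizing constant).} For case (I), I complete the square in $\log\boldsymbol{x}$: since $\mathbf{K}\succ 0$, the log-density is bounded above by $-\tfrac{1}{2}(\log\boldsymbol{x}-\mathbf{K}^{-1}\boldsymbol{\eta})^{\top}\mathbf{K}(\log\boldsymbol{x}-\mathbf{K}^{-1}\boldsymbol{\eta})+\text{const}$, so $p_0$ decays like $\exp(-c(\log x_j)^2)$ as any $x_j\to 0^+$. Since $\Deltam$ is bounded, this super-polynomial decay at every boundary face gives integrability. For cases (II)–(III), I use the log-ratio substitution $\boldsymbol{u}_{-k}=\log\boldsymbol{x}_{-k}-(\log x_k)\mathbf{1}_{m-1}$, which bijects $\mathrm{relint}(\Deltam)$ onto $\mathbb{R}^{m-1}$ with Jacobian $\prod_j x_j$ (here $k=m$ for (III); for (II) choose $k$ so that $\mathbf{K}_{-k,-k}\succ 0$). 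Because $\mathbf{K}\mathbf{1}_m=\mathbf{0}$, the quadratic form collapses to $\boldsymbol{u}_{-k}^{\top}\mathbf{K}_{-k,-k}\boldsymbol{u}_{-k}$ and the linear part becomes $\boldsymbol{\eta}_{-k}^{\top}\boldsymbol{u}_{-k}+(\mathbf{1}_m^{\top}\boldsymbol{\eta})\log x_k$ with $x_k=(1+\sum_{j\neq k}e^{u_j})^{-1}\in(0,1]$. Absorbing $\prod_j x_j$ reduces the integral to
\[
\int_{\mathbb{R}^{m-1}}\exp\!\left(-\tfrac{1}{2}\boldsymbol{u}^{\top}\mathbf{K}_{-k,-k}\boldsymbol{u}+(\boldsymbol{\eta}_{-k}+\mathbf{1}_{m-1})^{\top}\boldsymbol{u}\right)x_k^{\mathbf{1}_m^{\top}\boldsymbol{\eta}+m}\,\mathrm{d}\boldsymbol{u}.
\]
In (II), $x_k^{\mathbf{1}^{\top}\boldsymbol{\eta}+m}\leq 1$ since $x_k\leq 1$ and the exponent is nonnegative, leaving a finite Gaussian integral. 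In (III), I drop the nonpositive quadratic contribution (since $\mathbf{K}_{-k,-k}$ is PSD) and revert to simplex coordinates to obtain the Dirichlet-type integral $\int\prod_{j}x_j^{\eta_j}\,\mathrm{d}\boldsymbol{x}_{-m}$, which is finite because $\eta_j>-1$ for all $j$.

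\textbf{Part 2 (Assumptions (A.1)–(A.3)).} Condition (A.3) is immediate since $h_j(x)=x^{\alpha_j}$ with $\alpha_j>0$ is absolutely continuous on bounded subintervals of $\mathbb{R}_+$. For (A.1), using
\[
\partial_j\log p(\boldsymbol{x}_{-m})=-\frac{\boldsymbol{\kappa}_{,j}^{\top}\log\boldsymbol{x}}{x_j}+\frac{\boldsymbol{\kappa}_{,m}^{\top}\log\boldsymbol{x}}{x_m}+\frac{\eta_j}{x_j}-\frac{\eta_m}{x_m},
\]
we see that $\partial_j\log p\sim(\log x_j)/x_j$ as $x_j\to 0^+$ and symmetrically as $x_m\to 0^+$. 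Combined with $h_j(\varphi_j)\sim x_j^{\alpha_j}$ near $x_j=0$, the quantity to vanish is $\sim p_0\cdot x_j^{\alpha_j-1}\log x_j$. In cases (I)–(II), the decay $p_0\lesssim \exp(-c(\log x_j)^2)$ established in Part 1 dominates any $x_j^{\alpha_j-1}\log x_j$ for $\alpha_j>0$. In case (III), $p_0\sim x_j^{\eta_{0,j}}$, so the product is $\sim x_j^{\eta_{0,j}+\alpha_j-1}\log x_j\to 0$ iff $\alpha_j>1-\eta_{0,j}$, with the analogous condition $\alpha_j>1-\eta_{0,m}$ governing the $x_m\to 0^+$ endpoint.

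For (A.2), the two integrands behave like $p_0\cdot x_j^{\alpha_j-2}(\log x_j)^2$ near $x_j=0$ (and similarly near $x_m=0$), yielding the same conditions on $\alpha_j$ for local integrability. The global integrability on interior regions follows since $p_0$ and all relevant factors are continuous on compact interior subsets of $\Deltam$. The main technical obstacle is carefully verifying in case (II) that the log-ratio quadratic form $\boldsymbol{u}_{-k}^{\top}\mathbf{K}_{-k,-k}\boldsymbol{u}_{-k}$ diverges along every approach to $\partial\Deltam$, including boundary faces where several coordinates vanish simultaneously; this follows because the simplex constraint $\mathbf{1}_m^{\top}\boldsymbol{x}=1$ forces at least one $x_i$ to stay bounded away from zero, so some log-ratio $\log(x_j/x_i)$ must diverge whenever a coordinate $x_j$ vanishes. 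Once this decay is established, everything reduces to elementary integral estimates.
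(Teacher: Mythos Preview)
Your proposal is correct and follows essentially the same approach as the paper's proof: the log-ratio transformation for cases (II)--(III), the Dirichlet bound for (III), the Gaussian-type decay $p_0\lesssim\exp(-c(\log x_j)^2)$ for (I)--(II), and the same singularity analysis for (A.1)--(A.2) leading to the condition $\alpha_j>\max\{1-\eta_{0,j},1-\eta_{0,m}\}$ in case (III). Minor stylistic differences---you complete the square via $\mathbf{K}^{-1}$ in (I) where the paper uses $\lambda_{\min}(\mathbf{K})$, and you route case (III) through the log-ratio coordinates before reverting whereas the paper drops the PSD quadratic form directly in simplex coordinates---do not change the substance.
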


We highlight in the log-log models obtained from $a=b=0$, the
parameters $\mathbf{K}$ and $\boldsymbol{\eta}$ are exactly
identifiable from the density (whether assuming
$\mathbf{K}\mathbf{1}_m=\mathbf{0}_m$ or not).  This follows in our
context as a corollary of Theorem \ref{thm_simplex_identifiability}.

\begin{corollary}\label{cor_simplex_identifiability}
Suppose there exist $\mathbf{K}_1$, $\mathbf{K}_2$, $\boldsymbol{\eta}_1$, $\boldsymbol{\eta}_2$ such that
\begin{align*}
  &\exp\left(-\frac{1}{2}\log(\boldsymbol{x})^{\top}\mathbf{K}_1\log(\boldsymbol{x})+\boldsymbol{\eta}_1^{\top}\log(\boldsymbol{x})\right)\\
  =&\exp\left(-\frac{1}{2}\log(\boldsymbol{x})^{\top}\mathbf{K}_2\log(\boldsymbol{x})+\boldsymbol{\eta}_2^{\top}\log(\boldsymbol{x})\right)
\end{align*}
for all $\boldsymbol{x}\in\Deltam$. Then $\mathbf{K}_1=\mathbf{K}_2$ and $\boldsymbol{\eta}_1=\boldsymbol{\eta}_2$.
\end{corollary}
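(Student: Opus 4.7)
The plan is essentially to invoke Theorem~\ref{thm_simplex_identifiability} with $a=b=0$ and check that none of its exceptional cases can arise. Recall the convention $\boldsymbol{x}^0 \equiv \log\boldsymbol{x}$ and $1/0 \equiv 1$ used in~(\ref{eq_interaction_density2}); under this convention, the log-log density in~(\ref{eq_density_loglog}) is exactly the general $a$-$b$ density in~(\ref{eq_interaction_density2}) specialized to $a=b=0$. So the hypothesis of the corollary is precisely the hypothesis of Theorem~\ref{thm_simplex_identifiability} at $(a,b)=(0,0)$.

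First, I would state that by Theorem~\ref{thm_simplex_identifiability}, the equality of the two densities forces either $\mathbf{K}_1=\mathbf{K}_2$ and $\boldsymbol{\eta}_1=\boldsymbol{\eta}_2$, or else one of the four exceptional cases (I)--(IV). Then I would simply rule out all four possibilities at $a=b=0$:
\begin{itemize}
\item Cases (I), (II), (III) each require $a=1$, contradicting $a=0$.
\item Case (IV) requires $2a=b>0$, which is incompatible with $b=0$.
\end{itemize}
Therefore none of the exceptional clauses apply, and we are forced into the generic conclusion $\mathbf{K}_1=\mathbf{K}_2$ and $\boldsymbol{\eta}_1=\boldsymbol{\eta}_2$.

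Since the argument is a direct check against the list of exceptions, there is no real obstacle: the work was already absorbed into the proof of Theorem~\ref{thm_simplex_identifiability}. The only thing worth double-checking is that the $a=0,b=0$ formula in~(\ref{eq_density_loglog}) is indeed the one obtained from the conventions declared in the statement of Theorem~\ref{thm_simplex_identifiability}, which it is. Hence the corollary follows in one line from the theorem.
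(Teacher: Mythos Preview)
Your proposal is correct and matches the paper's approach exactly: the paper states the result as an immediate corollary of Theorem~\ref{thm_simplex_identifiability} without giving a separate proof, and your one-line check that none of the exceptional cases (I)--(IV) survive at $a=b=0$ is precisely what is intended.
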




Now define the \emph{additive log-ratio transformation}
\[
  \boldsymbol{y}_{-m}\equiv
   \log
  \boldsymbol{x}_{-m}-(\log x_{m})\mathbf{1}_{m-1}=\left(\log
    (x_1/x_m),\ldots,\log(x_{m-1}/x_m)\right).
  \]
  The $A^{m-1}$ model, corresponding to (II) and (III) in Theorem
  \ref{thm_assumption_Ad}, is proposed in \citet{ait85} as a
  generalization of both the Dirichlet distribution and the
  \emph{additive logistic normal model} \citep{ait82}. In particular,
  using the formula (\ref{eq_density_loglog}), when
  $\mathbf{K}=\mathbf{0}$ we have the Dirichlet distribution with
  parameters $\boldsymbol{\eta}+\mathbf{1}_m$, which belongs to case
  (III) in Theorem \ref{thm_assumption_Ad}. On the other hand, if
  $\mathbf{1}_m^{\top}\boldsymbol{\eta}=-m$, we get the normal density
  in $\boldsymbol{y}_{-m}$ with inverse covariance
  $\mathbf{K}_{-m,-m}$ and mean
  $\mathbf{K}_{-m,-m}^{-1}\boldsymbol{\eta}_{-m}$, which belongs to
  case (II) in Theorem \ref{thm_assumption_Ad}.  The generalization
  uses only one additional parameter when compared to the additive
  logistic normal model, namely $\mathbf{1}_m^{\top}\boldsymbol{\eta}$
  is no longer assumed to be equal to $-m$.

By the nature of the simplex domain, any two proportions $X_j$ and
$X_k$ are perfectly conditionally correlated given all other
$\boldsymbol{X}_{-j,-k}$. On the other hand, under the additive
logistic normal model, $Y_j=\log(X_j/X_m)$ and $Y_k=\log(X_k/X_m)$ are
conditionally independent given all other $\log(X_{\ell}/X_m)$,
$\ell\neq j,k,m$ if and only if $\kappa_{jk}=\kappa_{kj}=0$.  As we
make clear in the proof of Theorem \ref{thm_assumption_Ad}, this is
true only for the additive logistic normal model
($\mathbf{1}_m^{\top}\boldsymbol{\eta}=-m$).


\subsection{Estimation for $A^{m-1}$
  Models}\label{Estimation for $A^{m-1}$ Models}
In Section \ref{Estimation-Simplex}, we described how to estimate to
form generalized score matching estimators of $\mathbf{K}$ and
$\boldsymbol{\eta}$.  The discussion there applies to $\log$--$\log$
models ($a=b=0$), but only under the setting of assumption (I) from
Theorem~\ref{thm_assumption_Ad}, where $\mathbf{K}$ is unconstrained
except for positive definiteness.  For the $A^{m-1}$ models \citep{ait85} which
impose the additional constraint that $\mathbf{K}\mathbf{1}_m=\mathbf{0}_m$
and $\mathbf{K}=\mathbf{K}^{\top}$ as in (II) and (III) of Theorem
\ref{thm_assumption_Ad}, we need the following
modification. 
We marginalize out the diagonals of $\mathbf{K}$ with
$\kappa_{jj}=-\boldsymbol{\kappa}_{-j,j}^{\top}\mathbf{1}_{m-1}$ and
estimate all off-diagonal elements
$\mathbf{K}_{\mathrm{off}}\equiv[\boldsymbol{\kappa}_{-1,1},\ldots,\boldsymbol{\kappa}_{-m,m}]$. Under
the additional constraint, for matrices $\mathbf{A}$ with $m$ rows
and $\mathbf{B}$ with $m$ columns, we can write
\begin{align*}
\boldsymbol{\kappa}_{\cdot,j}^{\top}\mathbf{A}&=\boldsymbol{\kappa}_{-j,j}^{\top}\mathbf{A}_{-j,\cdot}+\kappa_{jj}\boldsymbol{a}_{j,\cdot}^{\top}=\boldsymbol{\kappa}_{-j,j}^{\top}\left(\mathbf{A}_{-j,\cdot}-\mathbf{1}_{m-1}\boldsymbol{a}_{j,\cdot}^{\top}\right)=\boldsymbol{\kappa}_{-j,j}^{\top}\left(
\mathbf{C}(j)\mathbf{A}
\right),\\
\mathbf{B}\boldsymbol{\kappa}_{\cdot,j}&=\mathbf{B}_{\cdot,-j}\boldsymbol{\kappa}_{-j,j}+\boldsymbol{b}_{\cdot,j}\kappa_{jj}=\left(\mathbf{B}_{\cdot,-j}-\boldsymbol{b}_{\cdot,j}\mathbf{1}_{m-1}^{\top}\right)\boldsymbol{\kappa}_{-j,j}=\big(\mathbf{B}{\mathbf{C}(j)}^{\top}\big)\boldsymbol{\kappa}_{-j,j},
\end{align*}
where $\mathbf{C}(j)\in\mathbb{R}^{(m-1)\times m}$ has its $j$-th
column all equal to $-1$, and the entries
$(1,1),\ldots,(j-1,j-1),(j,j+1),\ldots,(m-1,m)$ equal to 1, and all
other entries zero. Let $\mathbf{C}\in\mathbb{R}^{(m-1)m\times m^2}$
be the block-diagonal matrix with blocks
$\mathbf{C}(1),\ldots,\mathbf{C}(m)$. The unpenalized generalized
score-matching loss given by the first two terms in
(\ref{penalized_loss}) thus becomes
\begin{multline*}
\frac{1}{2}\mathrm{vec}\left(\begin{bmatrix}\mathbf{K}_{\mathrm{off}}\\ \boldsymbol{\eta}^{\top}\end{bmatrix}\right)^{\top}\begin{bmatrix}
\mathbf{C}\boldsymbol{\Gamma}_{\mathbf{K}}\mathbf{C}^{\top} & \mathbf{C}\boldsymbol{\Gamma}_{\mathbf{K},\boldsymbol{\eta}} \\ \boldsymbol{\Gamma}_{\mathbf{K},\boldsymbol{\eta}}^{\top}\mathbf{C}^{\top} & \boldsymbol{\Gamma}_{\boldsymbol{\eta}}
\end{bmatrix}\mathrm{vec}\left(\begin{bmatrix}\mathbf{K}_{\mathrm{off}}\\
    \boldsymbol{\eta}^{\top}\end{bmatrix}\right)-
\begin{bmatrix}
\mathbf{C}\mathrm{vec}(\boldsymbol{g}_{\mathbf{K}}) \\ \boldsymbol{g}_{\boldsymbol{\eta}}
\end{bmatrix}
^{\top}\mathrm{vec}\left(\begin{bmatrix}\mathbf{K}_{\mathrm{off}}\\ \boldsymbol{\eta}^{\top}\end{bmatrix}\right).
\end{multline*}
For more compact notation, let
\begin{align*}
  \tilde{\boldsymbol{\Gamma}}\equiv&\,\begin{bmatrix}
\mathbf{C}\boldsymbol{\Gamma}_{\mathbf{K}}\mathbf{C}^{\top} & \mathbf{C}\boldsymbol{\Gamma}_{\mathbf{K},\boldsymbol{\eta}} \\ \boldsymbol{\Gamma}_{\mathbf{K},\boldsymbol{\eta}}^{\top}\mathbf{C}^{\top} & \boldsymbol{\Gamma}_{\boldsymbol{\eta}}
\end{bmatrix}\in\mathbb{R}^{m^2\times m^2},
\quad\quad\tilde{\boldsymbol{g}}\equiv\begin{bmatrix}
\mathbf{C}\mathrm{vec}(\boldsymbol{g}_{\mathbf{K}}) \\ \boldsymbol{g}_{\boldsymbol{\eta}}
\end{bmatrix}\in\mathbb{R}^{m^2}\nonumber.
\end{align*}
It holds that $\tilde{\boldsymbol{\Gamma}}$ is positive
(semi-)definite if and only if $\boldsymbol{\Gamma}$
from~(\ref{eq:Gamma}) is positive (semi-)definite.


When applying to the diagonal multiplication operation
$(\cdot)_{\delta}$ to form a loss as in~(\ref{eq_loss_regularized}),
we simply operate directly on the matrix
$\tilde{\boldsymbol{\Gamma}}$, rather than the matrix
$\boldsymbol{\Gamma}$.  Since $\tilde{\boldsymbol{\Gamma}}$ is merely
a linear transformation of $\boldsymbol{\Gamma}$, later high
probability bounds on are not affected, except in constants. 
%
The penalized generalized score-matching loss for the $A^{m-1}$ models is thus defined as
\begin{multline*}
\,\hat{L}_{\boldsymbol{h},\boldsymbol{C},\lambda,\delta}(p_{\mathbf{K},\boldsymbol{\eta}})
\equiv\,\frac{1}{2}\mathrm{vec}\left(\begin{bmatrix}\mathbf{K}_{\mathrm{off}}\\
    \boldsymbol{\eta}^{\top}\end{bmatrix}\right)^{\top}\tilde{\boldsymbol{\Gamma}}_{\delta}(\mathbf{x})\mathrm{vec}\left(\begin{bmatrix}\mathbf{K}_{\mathrm{off}}\\
    \boldsymbol{\eta}^{\top}\end{bmatrix}\right)-\tilde{\boldsymbol{g}}(\mathbf{x})^{\top}\mathrm{vec}\left(\begin{bmatrix}\mathbf{K}_{\mathrm{off}}\\
    \boldsymbol{\eta}^{\top}\end{bmatrix}\right)\\
+\lambda_{\mathbf{K}}\|\mathrm{vec}(\mathbf{K}_{\mathrm{off}})\|_1+\lambda_{\boldsymbol{\eta}}\|\boldsymbol{\eta}\|_1,
\end{multline*}
where
\begin{align*}
\tilde{\boldsymbol{\Gamma}}_{\delta}\equiv&\,\begin{bmatrix}
\left(\mathbf{C}\boldsymbol{\Gamma}_{\mathbf{K}}\mathbf{C}^{\top}\right)_{\delta} & \mathbf{C}\boldsymbol{\Gamma}_{\mathbf{K},\boldsymbol{\eta}} \\ \boldsymbol{\Gamma}_{\mathbf{K},\boldsymbol{\eta}}^{\top}\mathbf{C}^{\top} & \boldsymbol{\Gamma}_{\boldsymbol{\eta}}
\end{bmatrix}\in\mathbb{R}^{m^2\times m^2},
\quad\quad\tilde{\boldsymbol{g}}\equiv\begin{bmatrix}
\mathbf{C}\mathrm{vec}(\boldsymbol{g}_{\mathbf{K}}) \\ \boldsymbol{g}_{\boldsymbol{\eta}}
\end{bmatrix}\in\mathbb{R}^{m^2}.
\end{align*}

\section{Theoretical Properties}\label{Theory}
In this section, we present theoretical guarantees for our generalized
score matching estimators when applied to the pairwise interaction
power $a$-$b$ models on the simplex.  We consider high-dimensional
settings under $\ell_1$ regularization and derive bounds on the deviation of our
estimates $\hat{\mathbf{K}}$ and $\hat{\boldsymbol{\eta}}$ (minimizer
of (\ref{penalized_loss})) from their true values
$\mathbf{K}_0$ and $\boldsymbol{\eta}_0$ that hold with high
probability.

We begin by restating
Definition
12 from \citet{yu19}.

\begin{definition}\label{def_constants_centered}
Let
$\boldsymbol{\Gamma}_0\equiv\mathbb{E}_0\boldsymbol{\Gamma}(\mathbf{x})$
and $\boldsymbol{g}_0\equiv\mathbb{E}_0\boldsymbol{g}(\mathbf{x})$ be
the expectations of $\boldsymbol{\Gamma}(\mathbf{x})$ and
$\boldsymbol{g}(\mathbf{x})$ under the distribution given by a true
parameter matrix
$\mathbf{\Psi}_0\equiv\left[\mathbf{K}_0,\boldsymbol{\eta}_0\right]^{\top}\in\mathbb{R}^{m(m+1)}$,
or $\mathbf{\Psi}_0\equiv\mathbf{K}_0\in\mathbb{R}^{m^2}$ in the
``centered'' case with $\boldsymbol{\eta}_0\equiv\boldsymbol{0}$.  The
support of a matrix $\mathbf{\Psi}=(\psi_{ij})$ is
$S(\mathbf{\Psi})\equiv\{(i,j):\psi_{ij}\neq 0\}$, and we let
$S_0=S(\mathbf{\Psi}_0)$.  Furthermore, let $d_{\mathbf{\Psi}_0}$ be
the maximum number of non-zero entries in any column of
$\mathbf{\Psi}_0$, and let $c_{\mathbf{\Psi}_0}\equiv\mnorm{\mathbf{\Psi}_0}_{\infty,\infty}$. Writing $\mathbf{\Gamma}_{0,AB}$ for the $A\times B$ submatrix of $\boldsymbol{\Gamma}_0$, we define
\begin{equation*}
c_{\boldsymbol{\Gamma}_0}\equiv\mnorm{(\boldsymbol{\Gamma}_{0,S_0S_0})^{-1}}_{\infty,\infty}.
\end{equation*}
Then $\boldsymbol{\Gamma}_0$ satisfies \emph{the irrepresentability condition with incoherence parameter $\alpha\in(0,1]$ and support set $S_0$} if
\begin{equation}\label{irrepresentability}
\mnorm{\boldsymbol{\Gamma}_{0,S_0^cS_0}(\boldsymbol{\Gamma}_{0,S_0S_0})^{-1}}_{\infty,\infty}\leq (1-\alpha).
\end{equation}
\end{definition}

For simplicity, in the proof of the theoretical results presented in
this section, we assume the last coordinate $x_m$ is removed. Of
course, the argument is the same for any other coordinate. In
addition, the results generalize to the case where we average the loss
functions (i.e.,~$\boldsymbol{\Gamma}(\mathbf{x})$ and
$\boldsymbol{g}(\mathbf{x})$) over multiple coordinates, one removed
at a time, as discussed in Sections \ref{Dependency on the Removed
  Coordinate} and \ref{Estimation-Simplex}. This follows by applying
the triangle inequality since the theorems all follow from
probabilistic bounds on the deviation of $\boldsymbol{\Gamma}$ from
$\boldsymbol{\Gamma}_0$ and $\boldsymbol{g}$ from $\boldsymbol{g}_0$.

\subsection{Models on the Standard Simplex}

For models with $a>0$ on the simplex, the fact that each coordinate is
in $[0,1]$ allows us to derive the following result as a corollary of \refyu{Theorem
  5.3 in \citet{yu21}}.

\begin{theorem}\label{theorem_simplex_nonlog_ggm}
Suppose $a>0$ and $b\geq 0$. Suppose further that the true parameters $\mathbf{K}_0$ and $\boldsymbol{\eta}_0$ satisfy the conditions in Theorem \ref{thm_norm_const} (so that the density is proper). Assume $\boldsymbol{h}(\boldsymbol{x})\equiv(x_1^{\alpha_1},\ldots,x_m^{\alpha_m})$ with $\alpha_1,\ldots,\alpha_m\geq \max\{1,2-a,2-b\}$. 
Define
\begin{align*}
\varsigma_{\boldsymbol{\Gamma}}&\,\equiv1,\quad\quad\varsigma_{\boldsymbol{g}}\equiv\max_{j=1,\ldots,m}\alpha_j+\max\{|a-1|+2a,|b-1|\}.
\end{align*}
Suppose, without loss of generality, that $\lambda\equiv\lambda_{\boldsymbol{K}}=\lambda_{\boldsymbol{\eta}}$; otherwise replace $\boldsymbol{\eta}$ by $(\lambda_{\boldsymbol{\eta}}/\lambda_{\mathbf{K}})\boldsymbol{\eta}$. Suppose that $\boldsymbol{\Gamma}_{0,S_0S_0}$ is invertible and satisfies the irrepresentability condition in Equation \ref{irrepresentability} with $\omega\in(0,1]$. Suppose for $\tau>0$ the sample size, the regularization parameter and the diagonal multiplier $\delta$ from Section \ref{Exponential Families and Regularized Generalized Score Matching} satisfy
\begin{align}
n&>72c_{\boldsymbol{\Gamma}_0}^2d_{\boldsymbol{\Psi}_0}^2\varsigma_{\boldsymbol{\Gamma}}^2(\tau\log m+\log 4)/\omega^2,\label{eq_bounded_nonlog_n}\\
\lambda&>\frac{3(2-\omega)}{\omega}\max\left\{c_{\boldsymbol{\Psi}_0}\varsigma_{\boldsymbol{\Gamma}}\sqrt{2(\tau\log m+\log 4)/n},\varsigma_{\boldsymbol{g}}\sqrt{(\tau\log m+\log 4)/(2n)}\right\},\\
1&<\delta<C_{\text{bounded}}(n,m,\tau)\equiv 1+\sqrt{(\tau\log m+\log 4)/(2n)}.\label{eq_bounded_nonlog_delta}
\end{align}
Then the following statements hold with probability $1-m^{-\tau}$:
\begin{enumerate}[(a)]
\item The regularized generalized $\boldsymbol{h}$-score matching estimator $\hat{\boldsymbol{\Psi}}$ that minimizes Equation \ref{eq_loss_regularized} is unique, has its support included in the true support, $\hat{S}\equiv S(\hat{\boldsymbol{\Psi}})\subseteq S_0$, and satisfies
\begin{alignat*}{3}
\max\big\{\|\hat{\mathbf{K}}-\mathbf{K}_0\|_{\infty},\|\hat{\boldsymbol{\eta}}-\boldsymbol{\eta}_0\|_{\infty}\big\}&\leq\frac{c_{\boldsymbol{\Gamma}_0}}{2-\omega}\lambda,\\
\max\big\{\mnorm{\hat{\mathbf{K}}-\mathbf{K}_0}_{F},\mnorm{\hat{\boldsymbol{\eta}}-\boldsymbol{\eta}_0}_{F}\big\}&\leq\frac{c_{\boldsymbol{\Gamma}_0}}{2-\omega}\lambda\sqrt{|S_0|},\\
\max\big\{\mnorm{\hat{\mathbf{K}}-\mathbf{K}_0}_{2},
\mnorm{\hat{\boldsymbol{\eta}}-\boldsymbol{\eta}_0}_{2}\big\}&\leq\frac{c_{\boldsymbol{\Gamma}_0}}{2-\omega}\lambda\min\left(\sqrt{|S_0|},d_{\boldsymbol{\Psi}_0}\right).
\end{alignat*}
\item Moreover, if
\[\min_{j,k:(j,k)\in S_0}|\kappa_{0,jk}|>\frac{c_{\boldsymbol{\Gamma}_0}}{2-\omega}\lambda\quad\quad\text{and}\quad\quad\min_{j:(m+1,j)\in S_0}|\eta_{0,j}|>\frac{c_{\boldsymbol{\Gamma}_0}}{2-\omega}\lambda,\]
then $\hat{S}=S_0$ and $\mathrm{sign}(\hat{\kappa}_{jk})=\mathrm{sign}(\kappa_{0,jk})$ for all $(j,k)\in S_0$ and $\mathrm{sign}(\hat{\eta}_j)=\mathrm{sign}(\eta_{0j})$ for $(m+1,j)\in S_0$.
\end{enumerate}

\end{theorem}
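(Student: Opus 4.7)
The plan is to deduce Theorem \ref{theorem_simplex_nonlog_ggm} as a direct specialization of Theorem 5.3 of \citet{yu21}, which already furnishes the primal--dual witness machinery and the error/sparsistency bounds for regularized generalized score matching on full-dimensional domains. After the profile-out that replaces $x_m$ by $1-\mathbf{1}_{m-1}^\top\boldsymbol{x}_{-m}$, our problem is exactly an instance of that framework with full-dimensional domain $\Delta_{-m}\subseteq\mathbb{R}^{m-1}$. What remains is to verify the structural hypotheses---(A.1)--(A.3), invertibility of $\boldsymbol{\Gamma}_{0,S_0S_0}$, irrepresentability, and (for part (b)) the minimum signal condition---and to supply explicit constants $\varsigma_{\boldsymbol{\Gamma}}$ and $\varsigma_{\boldsymbol{g}}$ that control the sub-Gaussian concentration of the entries of $\boldsymbol{\Gamma}(\mathbf{x})$ and $\boldsymbol{g}(\mathbf{x})$.

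Assumptions (A.1)--(A.3) are handled by Theorem \ref{thm_A_simp}, since $\alpha_j\geq\max\{1,2-a,2-b\}$ certainly implies $\alpha_j>\max\{0,1-a,1-b\}$; the remaining structural hypotheses are part of the theorem statement. To bound the summands of $\boldsymbol{\Gamma}$ and $\boldsymbol{g}$, I would exploit that $X_j^{(i)}\in[0,1]$. The weight satisfies $\tilde h_j(\boldsymbol{X})=\min(X_j,X_m)^{\alpha_j}\leq X_j^{\alpha_j}\wedge X_m^{\alpha_j}$, so each entry of $\boldsymbol{\Gamma}_{\mathbf{K},\cdot}$ contains a factor of the form $X_j^{\alpha_j+2a-2}$ (main blocks) or $X_j^{\alpha_j+a-1}X_m^{\alpha_j+a-1}$ (cross blocks), and each entry of $\boldsymbol{\Gamma}_{\boldsymbol{\eta},\cdot}$ a factor $X_j^{\alpha_j+2b-2}$, multiplied by non-negative powers of other coordinates. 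The hypothesis $\alpha_j\geq\max\{2-a,2-b\}$ makes every such exponent non-negative, so each summand is at most $1$, giving $\varsigma_{\boldsymbol{\Gamma}}=1$. For $\boldsymbol{g}$, the derivative $\partial_j\tilde h_j$ contributes an extra factor $\alpha_j\min(X_j,X_m)^{\alpha_j-1}$ (bounded because $\alpha_j\geq 1$), and combining with the ambient powers gives entries bounded by $\max_j\alpha_j+|a-1|+2a$ on the $\mathbf{K}$-block (the $+2a$ arising from the $a\tilde h_j X_j^{2a-2}$ summand) and $\max_j\alpha_j+|b-1|$ on the $\boldsymbol{\eta}$-block, matching the stated $\varsigma_{\boldsymbol{g}}$.

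With these pointwise bounds, Hoeffding's inequality delivers sub-Gaussian tail bounds for every entry of $\boldsymbol{\Gamma}(\mathbf{x})-\boldsymbol{\Gamma}_0$ and $\boldsymbol{g}(\mathbf{x})-\boldsymbol{g}_0$; a union bound over the $O(m^2)$ entries, combined with the sample-size lower bound \eqref{eq_bounded_nonlog_n} and the prescribed choices of $\lambda$ and $\delta$, produces the $1-m^{-\tau}$ failure probability. Plugging these concentration estimates together with the irrepresentability parameter $\omega$ into the primal--dual witness construction of Theorem 5.3 in \citet{yu21} then simultaneously yields sparsistency $\hat S\subseteq S_0$, the $\ell_\infty$, Frobenius, and $\ell_2$-operator norm bounds claimed in (a), and, under the minimum signal assumption, the sign-recovery conclusion in (b). The main obstacle I anticipate is the exponent bookkeeping for $\boldsymbol{g}$: the profile-out destroys the block-diagonal structure of $\boldsymbol{\Gamma}$ enjoyed in \citet{yu19,yu21} and introduces cross-blocks mixing $X_j$ and $X_m$ powers, so one must verify that none of the new summands demands a weight heavier than $\max\{1,2-a,2-b\}$---but the explicit formulas for the blocks in Section \ref{Estimation-Simplex} make this verification a direct case check.
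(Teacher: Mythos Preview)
Your proposal is correct and follows essentially the same route as the paper: bound each summand of $\boldsymbol{\Gamma}$ and $\boldsymbol{g}$ by exploiting $X_j\in[0,1]$ together with the exponent condition $\alpha_j\geq\max\{1,2-a,2-b\}$, then invoke the primal--dual witness machinery of \citet{yu21} (the paper cites this as Theorem 1 of \citet{lin16} together with the proof of Theorem 5.3 in \citet{yu21}). The paper additionally packages the pointwise bound into an auxiliary function $\zeta_2(\alpha_j,p_j,p_m)$ that tracks the sharper constants alluded to after the theorem statement, but this refinement is not needed for the conclusions as stated.
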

We note that the requirement on $\alpha_j\geq 1$ is only used for
bounding the two $\partial_j(h_j\circ\varphi_j)$ terms in
$\boldsymbol{g}(\mathbf{x})$. Our simulations studies indicate that
the method also works for smaller $\alpha_j$ and that it might not be
necessary to enforce the constraint $\alpha_j\geq 1$ in practice.  We
further note that in
the proof of Theorem \ref{theorem_simplex_nonlog_ggm} we show that we
can give tighter constant bounds $\varsigma_{\boldsymbol{\Gamma}}$ and
$\varsigma_{\boldsymbol{g}}$ for entries in $\boldsymbol{\Gamma}$ and
$\boldsymbol{g}$, respectively, which may be much smaller but have
rather complicated forms.

For models with $a=0$ on simplex domains, including the $A^{m-1}$
models discussed in Section \ref{sec_Ad_Models}, we first derive the following lemma to bound $\log X_j$ with high probability.

\begin{lemma}\label{theorem_subexp_simplex}
Suppose $\boldsymbol{X}$ has the density from \eqref{eq_interaction_density2} on $\Deltam$ with true parameters $\mathbf{K}_0$ and $\boldsymbol{\eta}_0$ satisfying the conditions in Theorem \ref{thm_norm_const} for $a>0$ or $b>0$, or in Theorem \ref{thm_assumption_Ad} for $a=b=0$. Then for all $j=1,\ldots,m$, $X_j^{2a}$ is sub-exponential for $a>0$, and $\log X_j$ is sub-exponential for $a=0$.
\end{lemma}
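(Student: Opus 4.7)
The plan is to exhibit a finite moment generating function in a neighborhood of the origin, using the standard equivalence with sub-exponentiality. For $a>0$ the claim is immediate: $X_j\in[0,1]$ deterministically, so $X_j^{2a}\in[0,1]$ is bounded and hence sub-Gaussian (and sub-exponential). For $a=0$, since $\log X_j\leq 0$ on $\Deltam$, one has $\mathbb{E}[\exp(\lambda|\log X_j|)]=\mathbb{E}[X_j^{-\lambda}]$, so it suffices to show that $\mathbb{E}[X_j^{-\lambda}]<\infty$ for some $\lambda>0$.

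\textbf{Sub-case $a=0$, $b>0$.} Under (CC4), $\log(\boldsymbol{x})^{\top}\mathbf{K}_0\log(\boldsymbol{x})\geq 0$ on $\Deltam$, so $\exp(-\frac{1}{2}\log(\boldsymbol{x})^{\top}\mathbf{K}_0\log(\boldsymbol{x}))\leq 1$. Since $\boldsymbol{\eta}_0^{\top}\boldsymbol{x}^b/b$ is bounded on the compact simplex for $b>0$, the density is uniformly bounded above, giving $\mathbb{E}[X_j^{-\lambda}]\lesssim\int_{\Deltam}x_j^{-\lambda}\,d\boldsymbol{x}$, which is finite for every $\lambda\in(0,1)$ by a standard Dirichlet-type computation.

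\textbf{Sub-case $a=b=0$.} The key observation is that $x_j^{-\lambda}\cdot p_{\boldsymbol{\eta}_0,\mathbf{K}_0}(\boldsymbol{x})$ is proportional to the kernel \eqref{eq_density_loglog} with $\mathbf{K}_0$ unchanged and $\boldsymbol{\eta}_0$ replaced by $\boldsymbol{\eta}_0'\equiv\boldsymbol{\eta}_0-\lambda\boldsymbol{e}_j$. I verify integrability of this shifted kernel under each sufficient condition of Theorem~\ref{thm_assumption_Ad}. Condition (I) involves only $\mathbf{K}_0$, so any $\lambda>0$ preserves it. Condition (III) is preserved whenever $\lambda\in(0,1+\eta_{0,j})$, since then $\eta_{0,j}-\lambda>-1$ and $\boldsymbol{\eta}_0'\succ-\mathbf{1}_m$ still holds. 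For condition (II), after relabeling so the distinguished index with $\mathbf{K}_{0,-k,-k}\succ 0$ is $k=m$, I apply the additive log-ratio transform $\mathbf{y}=(\log(X_1/X_m),\ldots,\log(X_{m-1}/X_m))$. Using $\mathbf{K}_0\mathbf{1}_m=\mathbf{0}_m$, the quadratic form reduces to $\mathbf{y}^{\top}\mathbf{K}_{0,-m,-m}\mathbf{y}$, and combining with the Jacobian $\prod_k X_k=e^{\mathbf{1}_{m-1}^{\top}\mathbf{y}}/(1+\sum_\ell e^{Y_\ell})^m$, $\mathbb{E}[X_j^{-\lambda}]$ rewrites as an integral of $\exp(-\frac{1}{2}\mathbf{y}^{\top}\mathbf{K}_{0,-m,-m}\mathbf{y}+\boldsymbol{a}^{\top}\mathbf{y})$ against a factor $(1+\sum_\ell e^{Y_\ell})^{c}$, for some $\boldsymbol{a}\in\mathbb{R}^{m-1}$ and $c\in\mathbb{R}$ depending on $\lambda$ and $\mathbf{1}_m^{\top}\boldsymbol{\eta}_0$. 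Since $\mathbf{K}_{0,-m,-m}\succ 0$, the Gaussian envelope decays as $\exp(-\Omega(\|\mathbf{y}\|^2))$, while $(1+\sum_\ell e^{Y_\ell})^{|c|}\leq(m+1)^{|c|}\exp(|c|(\max_\ell Y_\ell)_+)$ grows only (sub-)exponentially in $\|\mathbf{y}\|$, so the integral converges for every $\lambda>0$.

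\textbf{Main obstacle.} The delicate case is the boundary sub-case of (II) when $\mathbf{1}_m^{\top}\boldsymbol{\eta}_0=-m$ (the additive logistic normal model), where no positive $\lambda$ preserves $\mathbf{1}_m^{\top}\boldsymbol{\eta}_0'+m\geq 0$ and a naive re-application of condition (II) to the shifted density breaks down. The log-ratio transformation is essential here: it exposes the underlying Gaussian structure of $\mathbf{y}$ and reduces the moment bound to controlling $\log(1+\sum_\ell e^{Y_\ell})\leq\log m+(\max_\ell Y_\ell)_+$ against the Gaussian envelope.
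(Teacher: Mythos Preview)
Your proposal is correct and follows the same overall strategy as the paper: establish that the moment generating function of $\log X_j$ (or $X_j^{2a}$) is finite in a neighborhood of the origin, using the observation that $\mathbb{E}_0[X_j^{-\lambda}]$ is the inverse normalizing constant of the model with $\boldsymbol{\eta}_0$ shifted to $\boldsymbol{\eta}_0-\lambda\boldsymbol{e}_j$. Your treatment of $a>0$ by boundedness and of cases (I), (III) for $a=b=0$ is identical in spirit to the paper's.

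The one genuine difference is in case (II) of Theorem~\ref{thm_assumption_Ad}. You apply the additive log-ratio transform directly to $\mathbb{E}_0[X_j^{-\lambda}]$ and bound the resulting integral by a Gaussian envelope times a factor growing at most like $\exp(|c|\,(\max_\ell Y_\ell)_+)$, which is dominated by the quadratic decay from $\mathbf{K}_{0,-m,-m}\succ 0$. The paper instead uses the same Gaussian bound on the log-ratio density only to conclude $\mathbb{E}_0[(X_j/X_k)^t]<\infty$ for all $t$, and then gets back to $\mathbb{E}_0[X_j^t]$ via a conditioning trick: since some coordinate must satisfy $X_k\geq 1/m$, one writes $X_j^t\leq m^{-t}(X_j/X_k)^t$ on each event $\{X_k\geq 1/m\}$ and sums. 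Your route is more direct and avoids the auxiliary conditioning step; the paper's route is slightly more modular in that it isolates the sub-Gaussianity of the log-ratios as an intermediate fact. Both arguments correctly handle the boundary sub-case $\mathbf{1}_m^{\top}\boldsymbol{\eta}_0=-m$ that you flagged as the main obstacle.
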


We then have the following corollary.

\begin{corollary}\label{theorem_simplex_log_ggm}
Suppose $a=0$. 
Also suppose $b=0$ and the conditions for $\mathbf{K}_0$ and $\boldsymbol{\eta}_0$ in Theorem \ref{thm_assumption_Ad} hold, or $b>0$ and the condition in Theorem \ref{thm_norm_const} hold. Let $\boldsymbol{h}(\boldsymbol{x})\equiv(x_1^{\alpha_1},\ldots,x_m^{\alpha_m})$ with $\alpha_1,\ldots,\alpha_m\geq 2$. Then Theorem \ref{theorem_simplex_nonlog_ggm} holds with $\log 4$ replaced by $\log 6$ in \eqref{eq_bounded_nonlog_n}-\eqref{eq_bounded_nonlog_delta}, and
\begin{align*}
\varsigma_{\boldsymbol{\Gamma}}&\equiv \max\left\{1,c_{\log,\mathbf{K}_0,\boldsymbol{\eta}_0}^2\right\},\\
  \varsigma_{\boldsymbol{g}}&\equiv\max\left\{\left(\max_{j=1,\ldots,m}\alpha_j+1\right)c_{\log,\mathbf{K}_0,\boldsymbol{\eta}_0}+2,\max_j\alpha_j+|b-1|\right\},
                              \intertext{where}
c_{\log,\mathbf{K}_0,\boldsymbol{\eta}_0}&\equiv\max_j\mathbb{E}_0\log
                                           X_j\\
  &+\max\left\{2\sqrt{2}e\max_{j=1,\ldots,m}\|\log X_j\|_{\psi_1}\sqrt{\log 3+\log n+(\tau+1)\log m},\right.\\
&\quad\quad\quad\quad\left. 4e\max_{j=1,\ldots,m}\|\log
                                                                                                                 X_j\|_{\psi_1}(\log 3+\log n+(\tau+1)\log m)\right\},
\end{align*}
and $\|\log X_j\|_{\psi_1}\equiv\sup_{q\geq 1}\left(\mathbb{E}_0|\log X_j|^q\right)^{1/q}/q\geq -\mathbb{E}_0\log X_j$.
\end{corollary}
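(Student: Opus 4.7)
The plan is to reduce to Theorem~\ref{theorem_simplex_nonlog_ggm} by conditioning on a high-probability event $\mathcal{E}$ on which the sample values $\log X_j^{(i)}$ are uniformly bounded by $c_{\log,\mathbf{K}_0,\boldsymbol{\eta}_0}$. On such an event the entries of $\boldsymbol{\Gamma}(\mathbf{x})$ and $\boldsymbol{g}(\mathbf{x})$ become uniformly bounded just as in the $a>0$ case, so the entire machinery of Theorem~\ref{theorem_simplex_nonlog_ggm} carries through with $\varsigma_{\boldsymbol{\Gamma}}$ and $\varsigma_{\boldsymbol{g}}$ replaced by the new constants stated in the corollary.

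First, Lemma~\ref{theorem_subexp_simplex} gives that $\log X_j$ is sub-exponential with norm $\|\log X_j\|_{\psi_1}$. Applying the standard Bernstein-type tail inequality
\[
P\big(|\log X_j^{(i)}-\mathbb{E}_0\log X_j|>t\big)\leq 2\exp\Big(-\min\big\{t^2/(8e^2\|\log X_j\|_{\psi_1}^2),\,t/(4e\|\log X_j\|_{\psi_1})\big\}\Big),
\]
setting the right-hand side equal to $2/(3nm^{\tau+1})$, and union-bounding over the $nm$ index pairs $(i,j)$ yields a threshold $t$ exactly matching the outer $\max$ in the definition of $c_{\log,\mathbf{K}_0,\boldsymbol{\eta}_0}$: its Gaussian branch $2\sqrt{2}e\|\cdot\|_{\psi_1}\sqrt{L}$ and its exponential branch $4e\|\cdot\|_{\psi_1}L$, with $L=\log 3+\log n+(\tau+1)\log m$, correspond to the two regimes of the inner $\min$. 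Together with the additive centering correction $\max_j\mathbb{E}_0\log X_j$, this delivers the event $\mathcal{E}\equiv\{\max_{i,j}|\log X_j^{(i)}|\leq c_{\log,\mathbf{K}_0,\boldsymbol{\eta}_0}\}$ with probability at least $1-\frac{1}{3}m^{-\tau}$.

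Next, the entrywise bounds on $\mathcal{E}$ would be verified by direct inspection of the formulas in Section~\ref{Estimation-Simplex} under the $a=0$ substitution convention (``$a$'' $\mapsto 1$, ``$a-1$'' $\mapsto -1$). The entries of $\boldsymbol{\Gamma}(\mathbf{x})$ are sample averages of factors $\tilde h_j(X)X_j^{-1}X_k^{-1}$ or $\tilde h_j(X)X_j^{-2}$ multiplied by at most two $\log X$ or $X^{b-1}$ terms, while the entries of $\boldsymbol{g}(\mathbf{x})$ additionally involve $|\partial_j\tilde h_j(X)|X_j^{-1}$. Since $\tilde h_j(X)=\min\{X_j,X_m\}^{\alpha_j}$ and $|\partial_j\varphi_j|=1$ on $\Deltam_{-m}$, the assumption $\alpha_j\geq 2$ yields $\tilde h_jX_j^{-1}X_k^{-1}\leq 1$, $\tilde h_jX_j^{-2}\leq 1$, and $|\partial_j\tilde h_j|X_j^{-1}\leq\alpha_j$ throughout the simplex; the remaining $\log X$ factors are bounded by $c_{\log,\mathbf{K}_0,\boldsymbol{\eta}_0}$ on $\mathcal{E}$, and any $X_j^{b-1}$ factor (when $b>0$) is absorbed by $\tilde h_j$ and bounded by $1$. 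Collecting the contributions of the $\mathbf{K}\mathbf{K}$-, $\mathbf{K}\boldsymbol{\eta}$-, and $\boldsymbol{\eta}\boldsymbol{\eta}$-blocks yields the stated $\varsigma_{\boldsymbol{\Gamma}}=\max\{1,c_{\log,\mathbf{K}_0,\boldsymbol{\eta}_0}^2\}$ and $\varsigma_{\boldsymbol{g}}=\max\{(\alpha_{\max}+1)c_{\log,\mathbf{K}_0,\boldsymbol{\eta}_0}+2,\,\alpha_{\max}+|b-1|\}$.

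With uniform boundedness of entries on $\mathcal{E}$ in hand, I would apply Hoeffding's inequality entrywise to the now-bounded sample averages (after a standard truncation to upgrade event-conditional boundedness to an almost-sure statement), exactly as in the proof of Theorem~\ref{theorem_simplex_nonlog_ggm}, but with the failure probability of each of the $\boldsymbol{\Gamma}$- and $\boldsymbol{g}$-concentration events rescaled from $\frac{1}{2}m^{-\tau}$ down to $\frac{1}{3}m^{-\tau}$. Combined with the $\frac{1}{3}m^{-\tau}$ failure probability of $\mathcal{E}$, the total stays at $m^{-\tau}$; this is precisely the source of the replacement of $\log 4=\log(2\cdot 2)$ by $\log 6=\log(2\cdot 3)$ in \eqref{eq_bounded_nonlog_n}--\eqref{eq_bounded_nonlog_delta}. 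Everything downstream — the irrepresentability-based support recovery and the $\ell_\infty$, Frobenius, and operator-norm bounds on $\hat{\mathbf{K}}-\mathbf{K}_0$ and $\hat{\boldsymbol{\eta}}-\boldsymbol{\eta}_0$ — then transfers verbatim. The main obstacle is the entrywise bookkeeping of the previous paragraph: confirming that $\alpha_j\geq 2$ is precisely the right amount for the weight $\tilde h_j(X)=\min\{X_j,X_m\}^{\alpha_j}$ to absorb every negative power of $X_j$ produced by the $a=0$ (and, when $b=0$, also $b=0$) substitution convention in every block, so that only $\log X$ factors remain to be handled by the uniform sub-exponential bound.
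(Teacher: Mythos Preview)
Your approach is essentially the same as the paper's: establish a high-probability uniform bound on $-\log X_j^{(i)}$ via the sub-exponential tail of Lemma~\ref{theorem_subexp_simplex}, use it to bound the entries of $\boldsymbol{\Gamma}$ and $\boldsymbol{g}$ with the help of $\alpha_j\geq 2$, and then rerun the Hoeffding argument of Theorem~\ref{theorem_simplex_nonlog_ggm} with the failure probabilities split three ways instead of two (whence $\log 4\to\log 6$). Your write-up is in fact more detailed than the paper's own proof, which simply sets up the tail bound and defers the rest to ``the proof of Theorem~5.3 of \citet{yu21} and Theorem~\ref{theorem_simplex_nonlog_ggm}''; in particular, your entrywise bookkeeping and your explanation of the truncation step needed to make Hoeffding apply are both correct and make explicit what the paper leaves implicit.
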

The results are written in terms of the maximum of the sub-exponential
norms of $\log X_1,\allowbreak\ldots,\allowbreak\log X_m$, and they indicate the
sample size requirement that
$n=\Omega\left(\log m\right)\mathcal{O}\left(\max_j\left\|\log
    X_j\right\|_{\psi_1}^{2}\right)$.  We expect that the maximum of
the
sub-exponential norms scales as
$\Omega\left(\left(\log m\right)^c\right)$ for some $c$ small,
although we cannot currently offer an exact result on this behavior.


\section{Numerical Studies}\label{Numerical Experiments}

\subsection{Numerical Experiments for $A^{m-1}$ Models on the Simplex}
In this section we present results from numerical experiments using our regularized generalized score matching estimator for simplices. 
\subsubsection{Choices of $\boldsymbol{h}$ and $\boldsymbol{C}$}\label{sec_simulation_h_and_C}
We consider removing $x_m=1-x_1-\cdots-x_{m-1}$. Recall that multiplication of $\nabla\log p(\boldsymbol{x})$ with $(\boldsymbol{h}\circ\boldsymbol{\varphi}_{\boldsymbol{C}})^{1/2}(\boldsymbol{x})$ is key to our method; here, the $j$-th component of $\boldsymbol{\varphi}_{\boldsymbol{C}}(\boldsymbol{x})=(\varphi_{C_1,1}(\boldsymbol{x}),\ldots,\varphi_{C_{m-1},{m-1}}(\boldsymbol{x}))$ is the distance of $x_j$ to the boundary of its domain holding $\boldsymbol{x}_{-j}$ fixed, with this distance truncated from above by a constant $C_j>0$. Thus, $\varphi_{C_j,j}=\min\{C_j,x_j,x_m\}$.

We use the same function $h$ for all components of
$\boldsymbol{h}(\boldsymbol{x})=(h(x_1),\ldots,h(x_m))$ and compare
the performance of our method using various choices of $h$ of the
form $h(x)=x^{c}$ for some power $c\geq 0$ along with various
truncation points $\boldsymbol{C}$. In particular, we choose $c=i/4$
for $i=0,1,\ldots,8$. Instead of pre-specifying constants
$\boldsymbol{C}$, following \citet{yu21}, we choose a probability
$\pi\in(0,1]$ and set each $C_j$ to be the $\pi$ sample quantile of $\varphi_{1,j}$ applied to each row of the data matrix $\mathbf{x}$, namely $\{\varphi_{1,j}(\boldsymbol{x}^{(1)}),\ldots,\varphi_{1,j}(\boldsymbol{x}^{(n)})\}$, assuming there are $n$ samples in the data. 
In our experiments, we choose $\pi\in\{0.2,0.4,0.6,0.8,1\}$, where $\pi=1$ means no truncation for all finite $\varphi_j$ values.



\subsubsection{Experimental Setup}
As the most prominent example of models on the simplex
, we consider the $A^{m-1}$ models discussed in Section~\ref{sec_Ad_Models}. In these models, $a=b=0$ and $\mathbf{K}\mathbf{1}_m=\mathbf{0}_m$ with $\mathbf{K}=\mathbf{K}^{\top}$. 
We consider dimension $m=100$, sample sizes $n=80$ and $n=1000$, and
assume $\boldsymbol{\eta}_0\equiv \boldsymbol{0}$ is known for
simplicity. The density is then proportional to $\exp\left(-{\log \boldsymbol{x}}^{\top}\mathbf{K}\log\boldsymbol{x}/2\right)$.

We set the true interaction matrix $\mathbf{K}_0$ to be a banded matrix
with bandwidths $s=7$ for $n=1000$ and $s=2$ for $n=80$; the
bandwidth, defined as $\max\{|i-j|:\kappa_{0,i,j}>0\}$, is chosen so that $n/(d_{\mathbf{K}_0}^2\log m)$ is roughly constant, where $d$ is the maximum node degree; this quantity is suggested to be linked to the probability of successful support recovery by our consistency theory in Section~\ref{Theory}.
We set $\kappa_{0,i,j}$ to $1-|i-j|/(s+1)$ for $1\leq |i-j|\leq s$, and set
the diagonals so that $\mathbf{K}_0\mathbf{1}_m=\boldsymbol{0}_m$. Each $n$ is thus associated with only one graph, for which we run 50 trials. 

We investigate recovery of the interaction pattern given by the
support of $\mathbf{K}_0$, as well as estimation of the entries of
$\mathbf{K}_0$.  The comparison to $\mathbf{K}_0$ is motivated by the discussion in Section~\ref{sec_Ad_Models}, in particular the identifiability result in Corollary \ref{cor_simplex_identifiability}.
The off-diagonal entries in the support of $\mathbf{K}_0$ naturally
define the edges of a graph.  In the sequel, we will thus refer to
edge recovery and plot estimation results as graphs.


\subsubsection{AUCs and Estimation Errors}\label{AUCs and Estimation Errors}
To investigate the recovery of interaction patterns, we consider the \emph{areas under the ROC curve}  (AUCs); the ROC curve plots the \emph{true positive rate (TPR)} against the \emph{false positive rate (FPR)}, defined as
\[\mathrm{TPR}\equiv\frac{|\hat{S}_{\text{off}}\cap S_{0,\text{off}}|}{|S_{0,\text{off}}|}\quad\quad\text{and}\quad\quad\mathrm{FPR}\equiv\frac{|\hat{S}_{\text{off}}\backslash S_{0,\text{off}}|}{m(m-1)-|S_{0,\text{off}}|},\]
where $\hat{S}_{\mathrm{off}}$ and $S_{0,\mathrm{off}}$ are the sets
of estimated and true edges, i.e., pairs of distinct indices $(i,j)$ in the support of the estimated and the true interaction matrix, respectively.

Average AUCs over 50 trials are shown in Figure~\ref{plot_Ad} as functions of $\pi\in\{0.2,0.4,0.6,\penalty 0  0.8,1\}$, which correspond to the column-wise sample quantiles used as the truncation points $\boldsymbol{C}$ for $\boldsymbol{\varphi}_{\boldsymbol{C}}$ (c.f.~Section \ref{sec_simulation_h_and_C}). 
Each curve in the figure represents $h(x)=1$, 
or $h(x)=x^c$ with $c=1/4,1/2,\dots,2$. The $y$-ticks on the
right-hand side denote the corresponding AUCs, while those on the left
are the AUCs divided by the AUC for $h(x)=1$ as a reference, measuring
the relative performance of each method compared with the estimator for
densities on $\mathbb{R}^m$ first given by \citet{hyv05}; the dotted line corresponds to the AUC for $h(x)=1$. We fix the diagonal multiplier to the upper bound in (\ref{eq_bounded_nonlog_delta}).

As discussed in Section~\ref{Dependency on the Removed Coordinate} and
at the end of Section~\ref{Estimation-Simplex}, to reduce the effect
of the choice of the removed coordinate, one can randomly sample a set
of coordinates $\mathcal{J}$, calculate $\boldsymbol{\Gamma}$ and $\boldsymbol{\eta}$ by removing one coordinate $x_j,\,j\in\mathcal{J}$ at a time, and take the average. In the first row, we plot the results for such $\boldsymbol{\Gamma}$ and $\boldsymbol{\eta}$ constructed with $\mathcal{J}$ randomly sampled from $\{1,\dots,m\}$, where $|\mathcal{J}|=5$. In order to investigate the benefit of using $|\mathcal{J}|>1$ over $|\mathcal{J}|=1$ (e.g.,~removing $x_m$ only), in the second row we also present the average of the 5 AUC curves over 5 separate runs, where in each run we construct $\boldsymbol{\Gamma}$ and $\boldsymbol{\eta}$ by removing one $j\in\mathcal{J}$ only. The third row shows the point-wise maximum of the 5 AUC curves.

\begin{figure}[!htp]
\centering
\vspace{-0.1in}
{\includegraphics[scale=0.55]{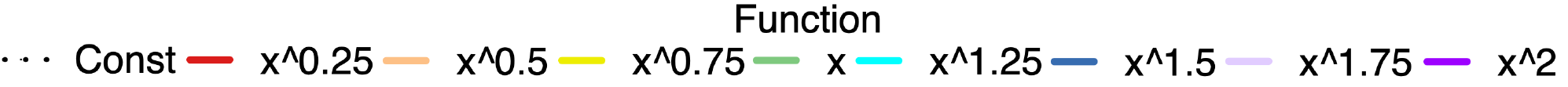}}
\vspace{-0.15in}

\subfloat[$n=80$, $|\mathcal{J}|=5$]
{\includegraphics[scale=0.28]{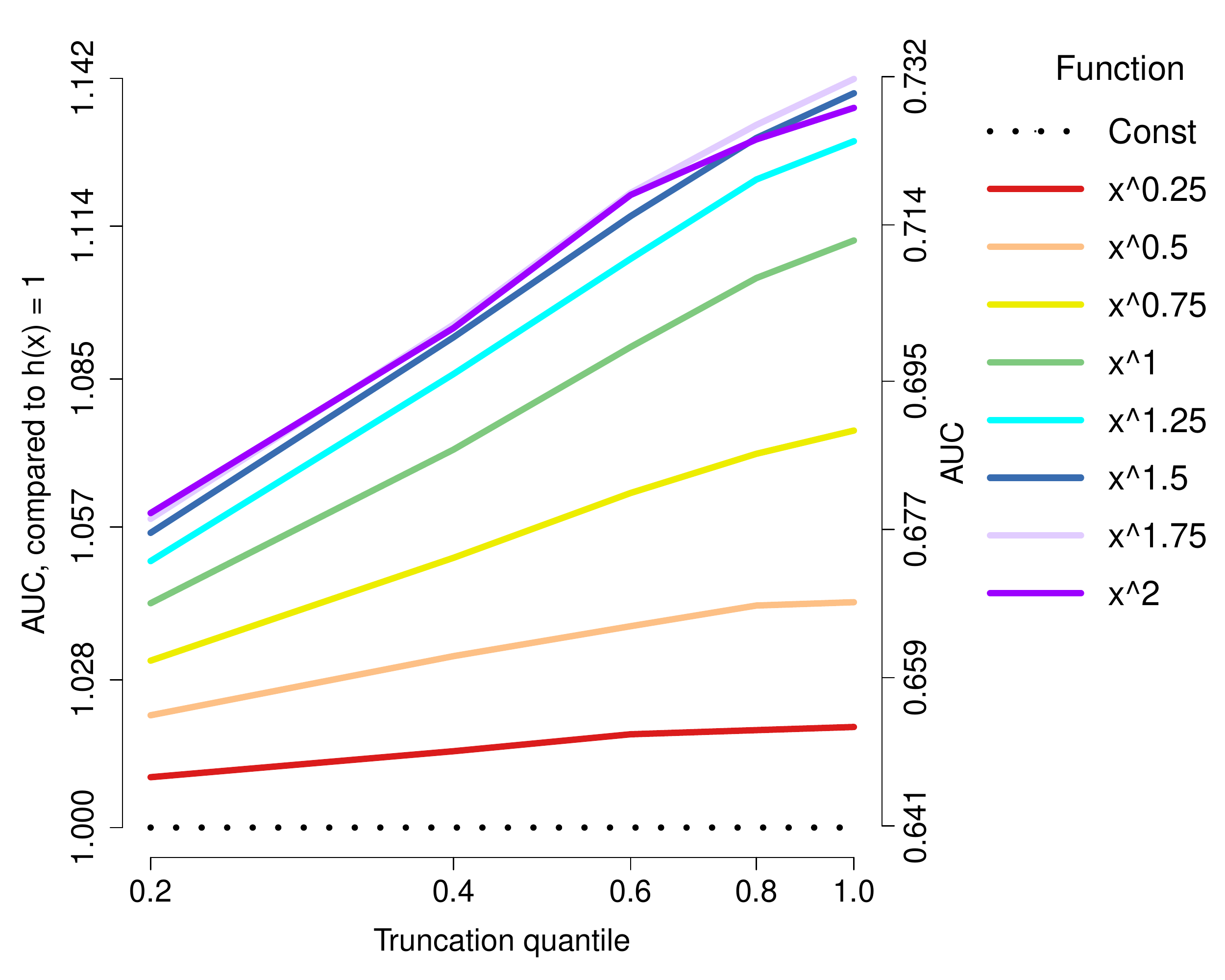}\hspace{0.15in}}
\subfloat[$n=1000$, $|\mathcal{J}|=5$]
{\includegraphics[scale=0.28]{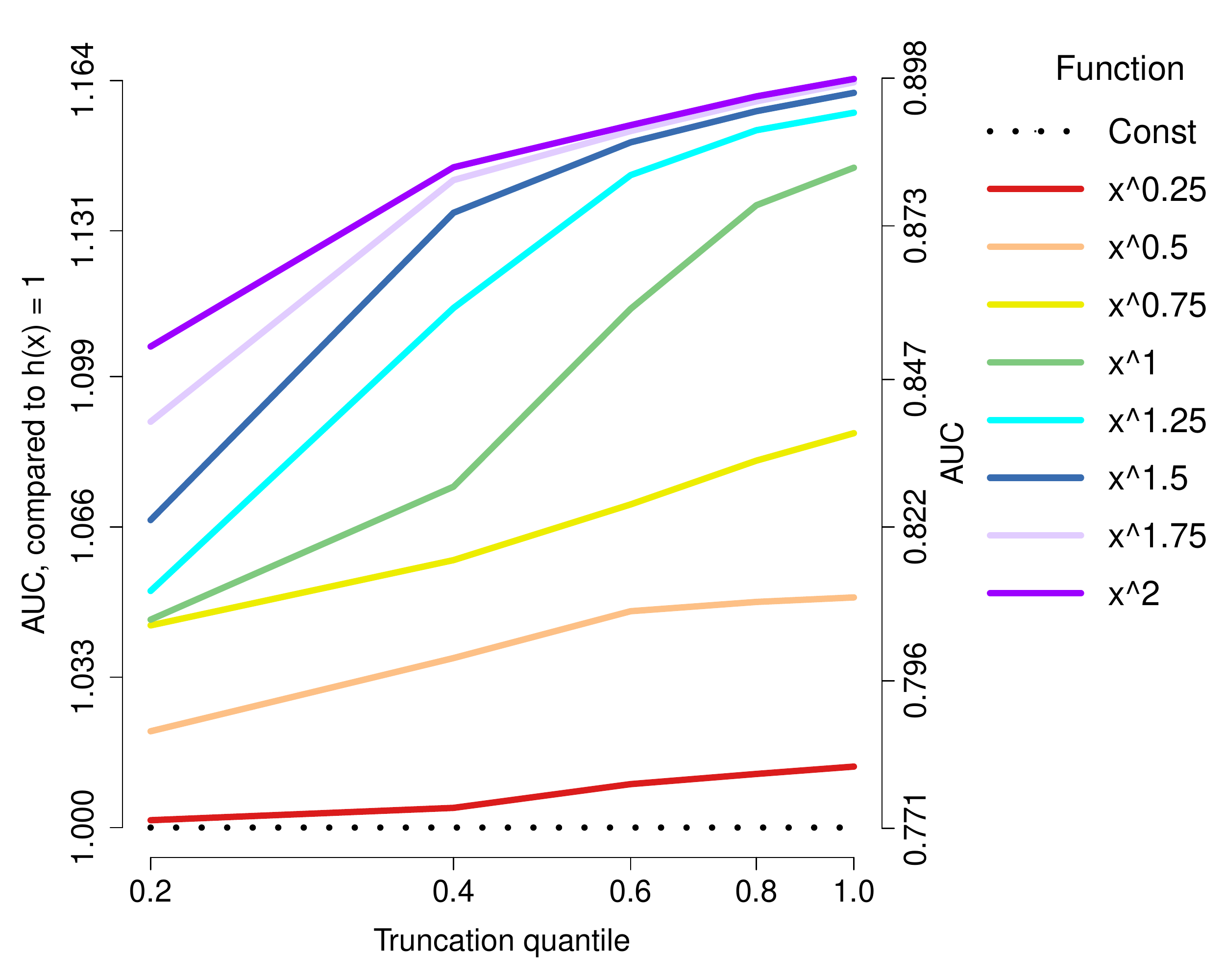}\hspace{-0.02in}}
\\ \vspace{-0.1in}
\subfloat[$n=80$, average of 5 runs with each $j\in |\mathcal{J}|$]
{\includegraphics[scale=0.28]{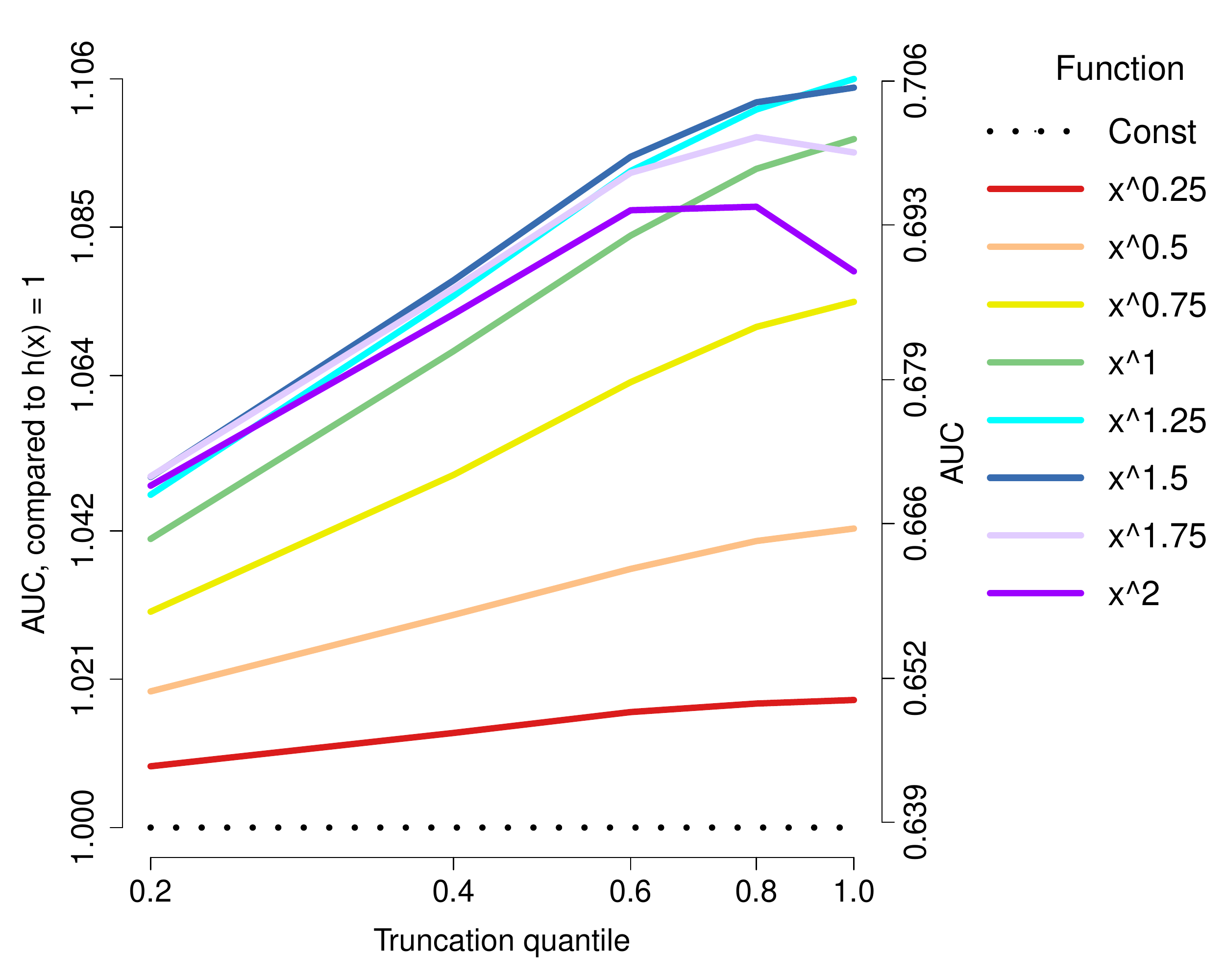}\hspace{0.15in}}
\subfloat[$n=1000$, average of 5 runs]
{\includegraphics[scale=0.28]{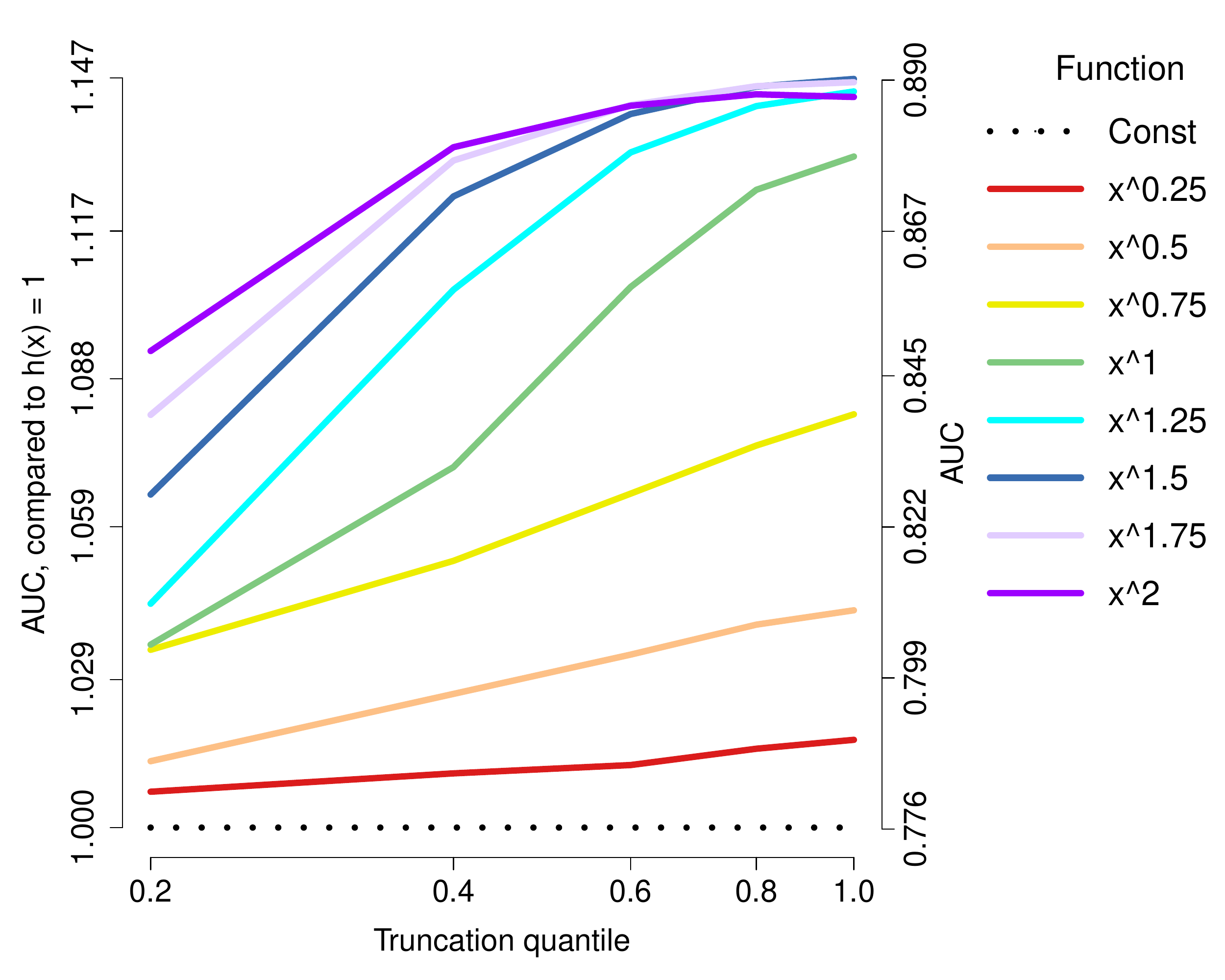}\hspace{-0.02in}}
\\ \vspace{-0.1in}
\subfloat[$n=80$, max of 5 runs]
{\includegraphics[scale=0.28]{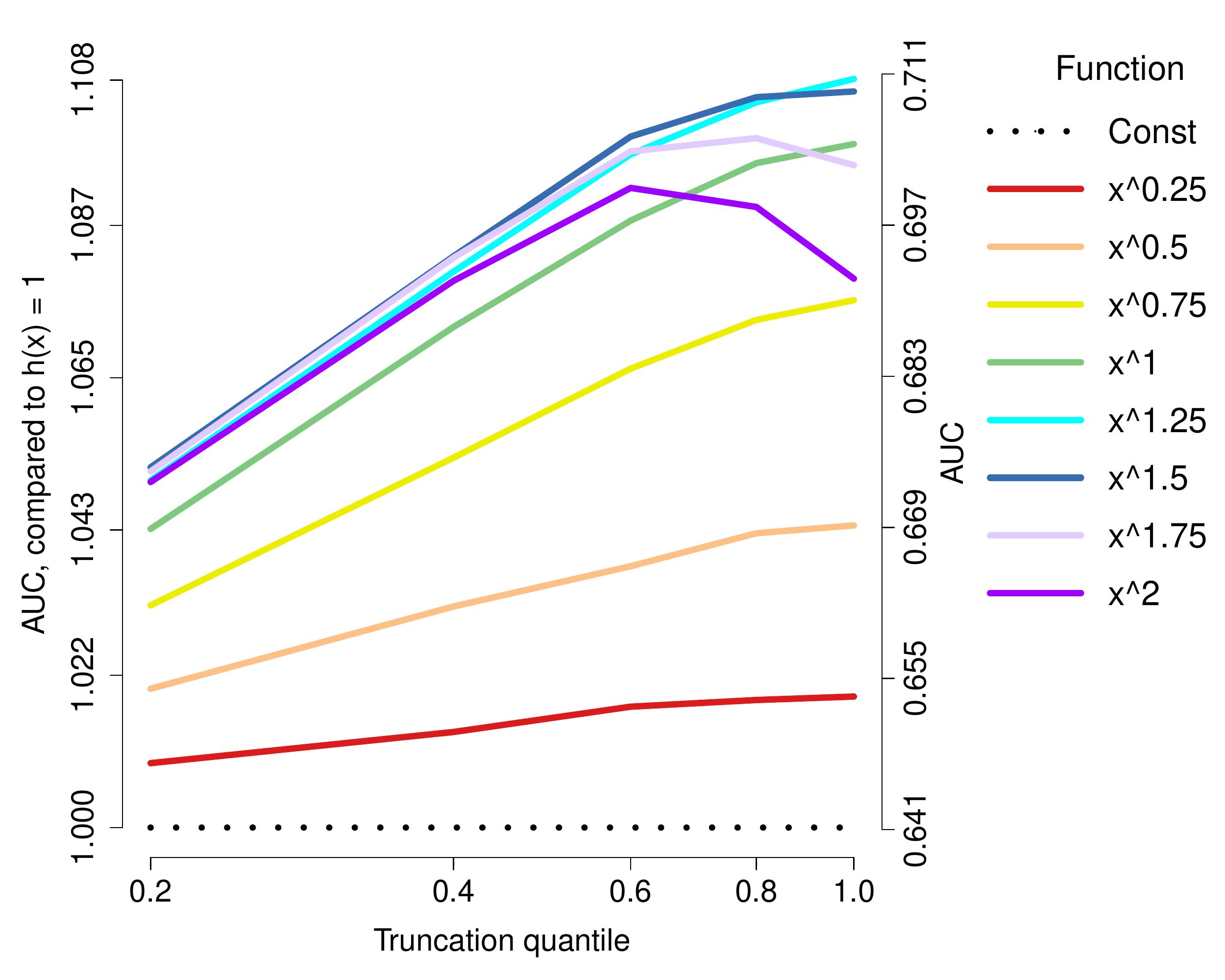}\hspace{0.15in}}
\subfloat[$n=1000$, max of 5 runs]
{\includegraphics[scale=0.28]{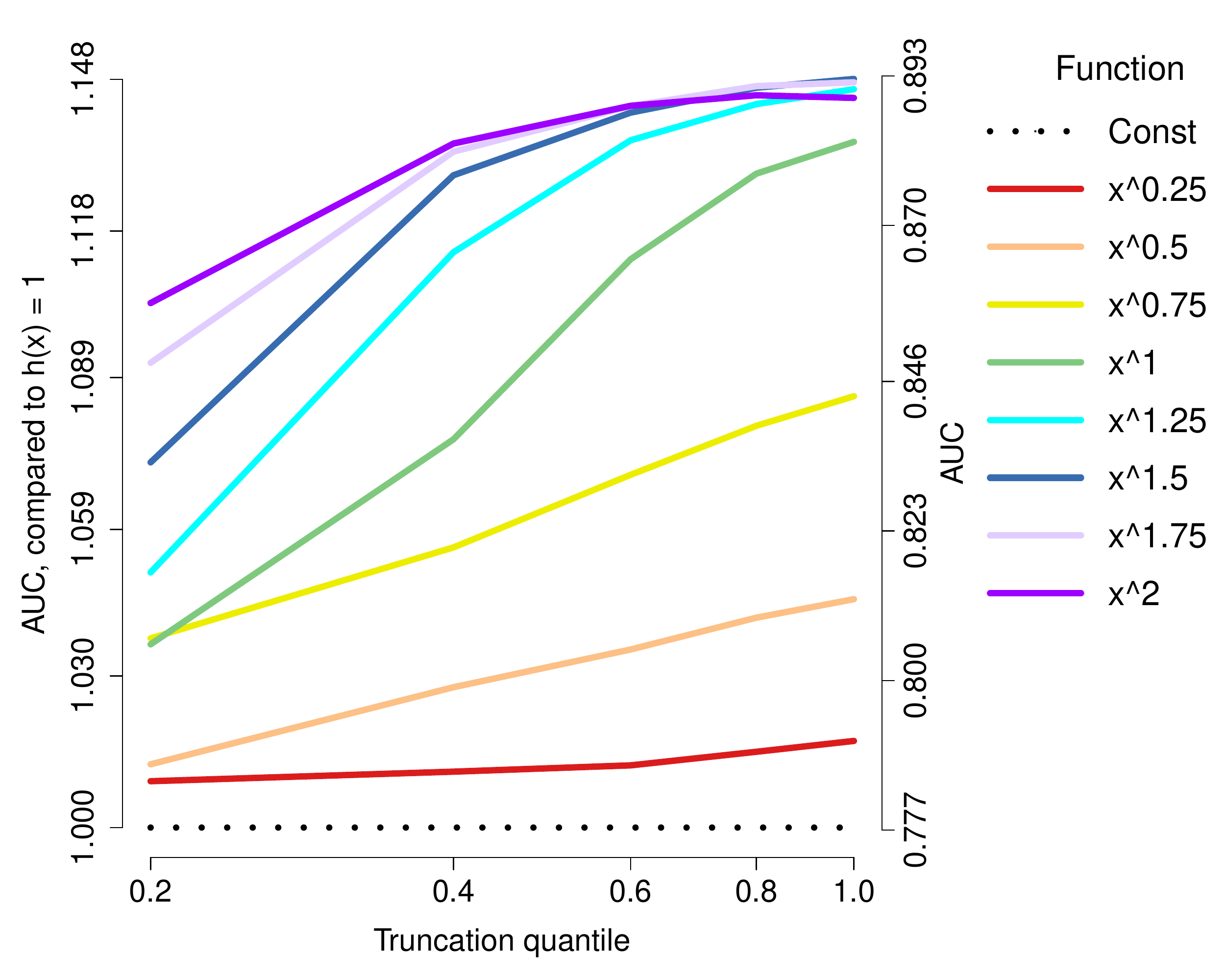}\hspace{-0.02in}}
\caption{AUCs averaged over 50 trials for edge recovery for the $A^{m-1}$ models on the simplex.}\label{plot_Ad}
\end{figure}

\begin{figure}[!htp]
\centering
\vspace{-0.1in}
{\includegraphics[scale=0.55]{Plots/AUC/function_legend.pdf}}
\vspace{-0.15in}

\subfloat[$n=80$, error in matrix 2-norm]
{\includegraphics[scale=0.28]{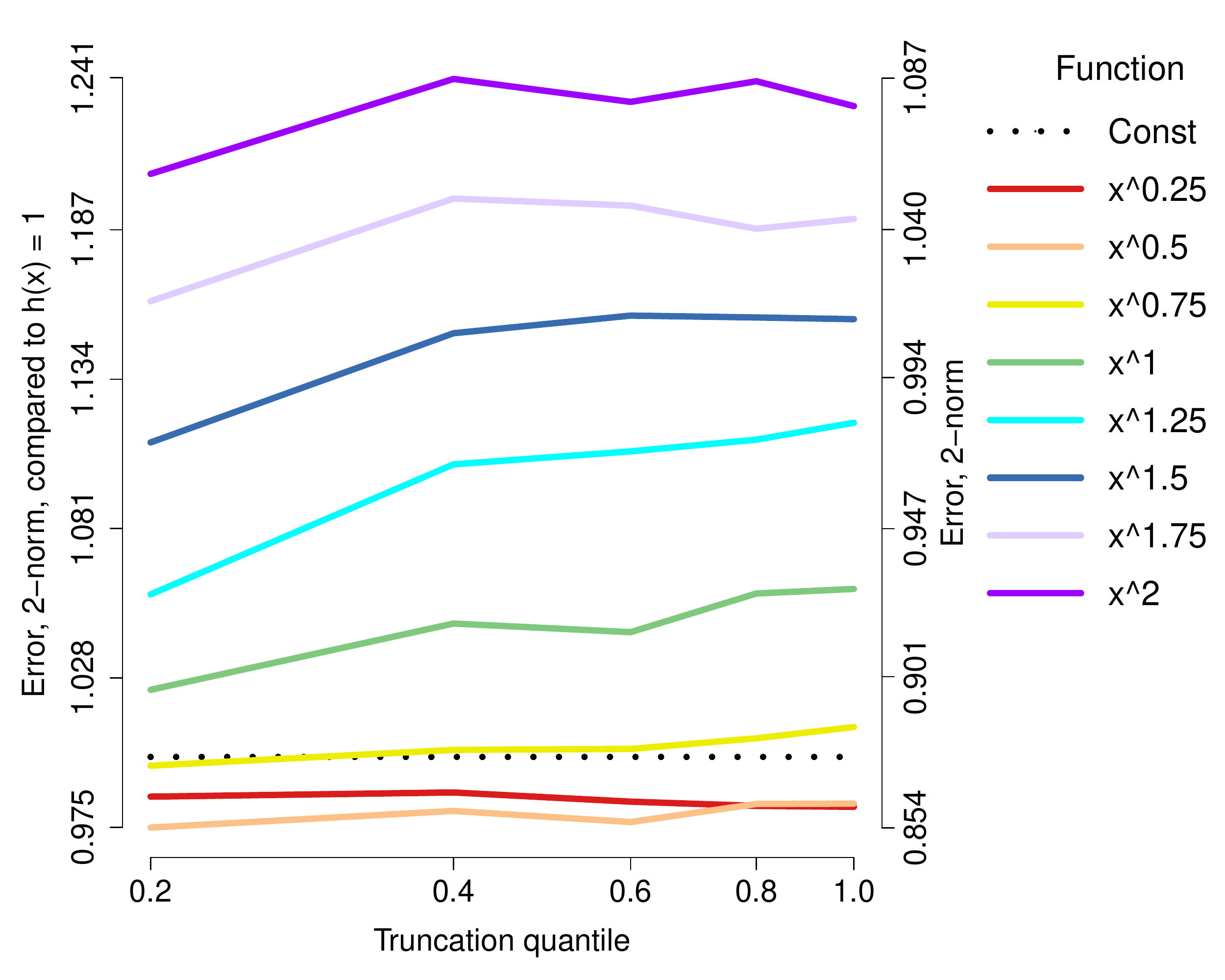}\hspace{0.15in}}
\subfloat[$n=1000$, error in matrix 2-norm]
{\includegraphics[scale=0.28]{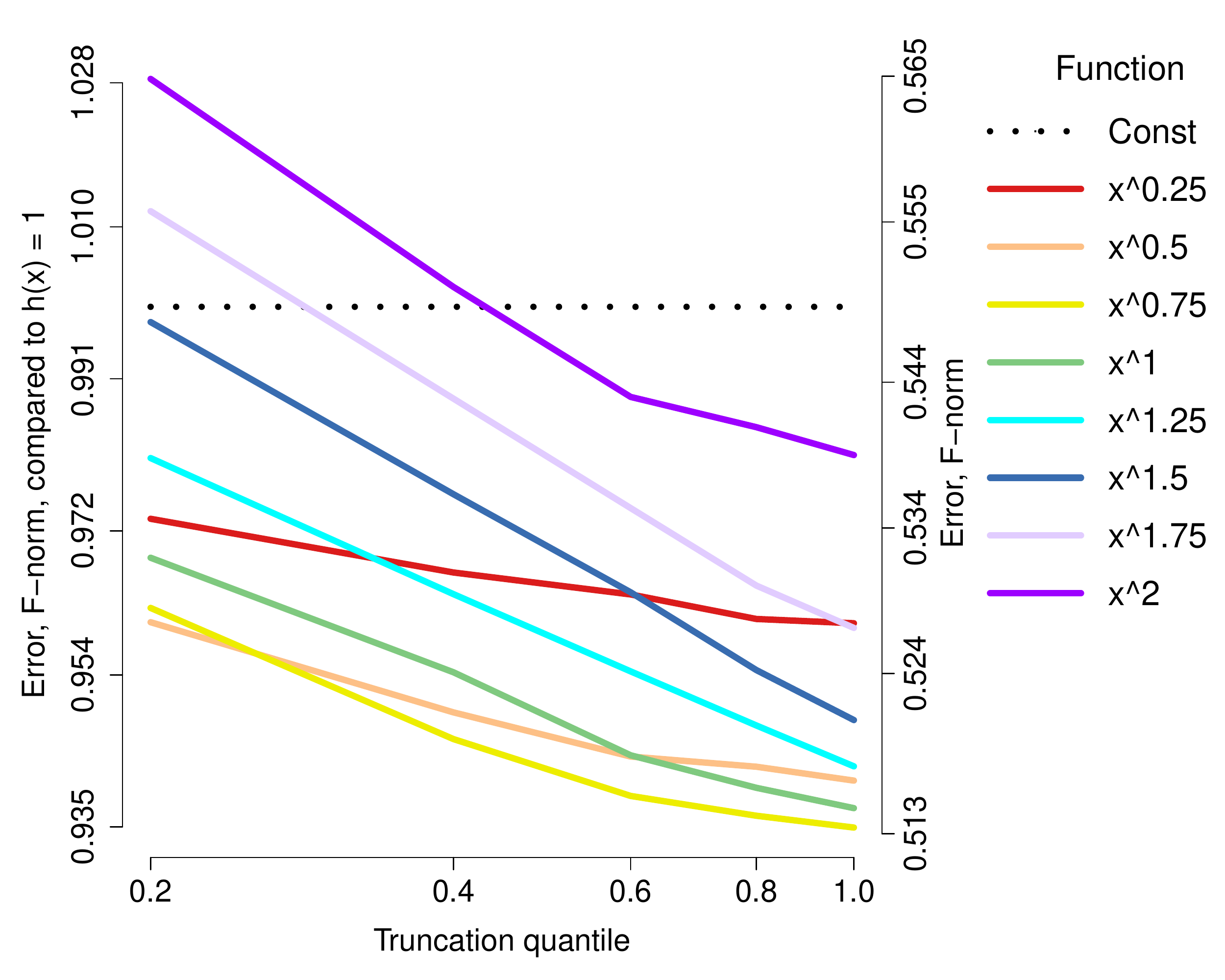}\hspace{-0.02in}}
\\ \vspace{-0.1in}
\subfloat[$n=80$, error in matrix $F$-norm]
{\includegraphics[scale=0.28]{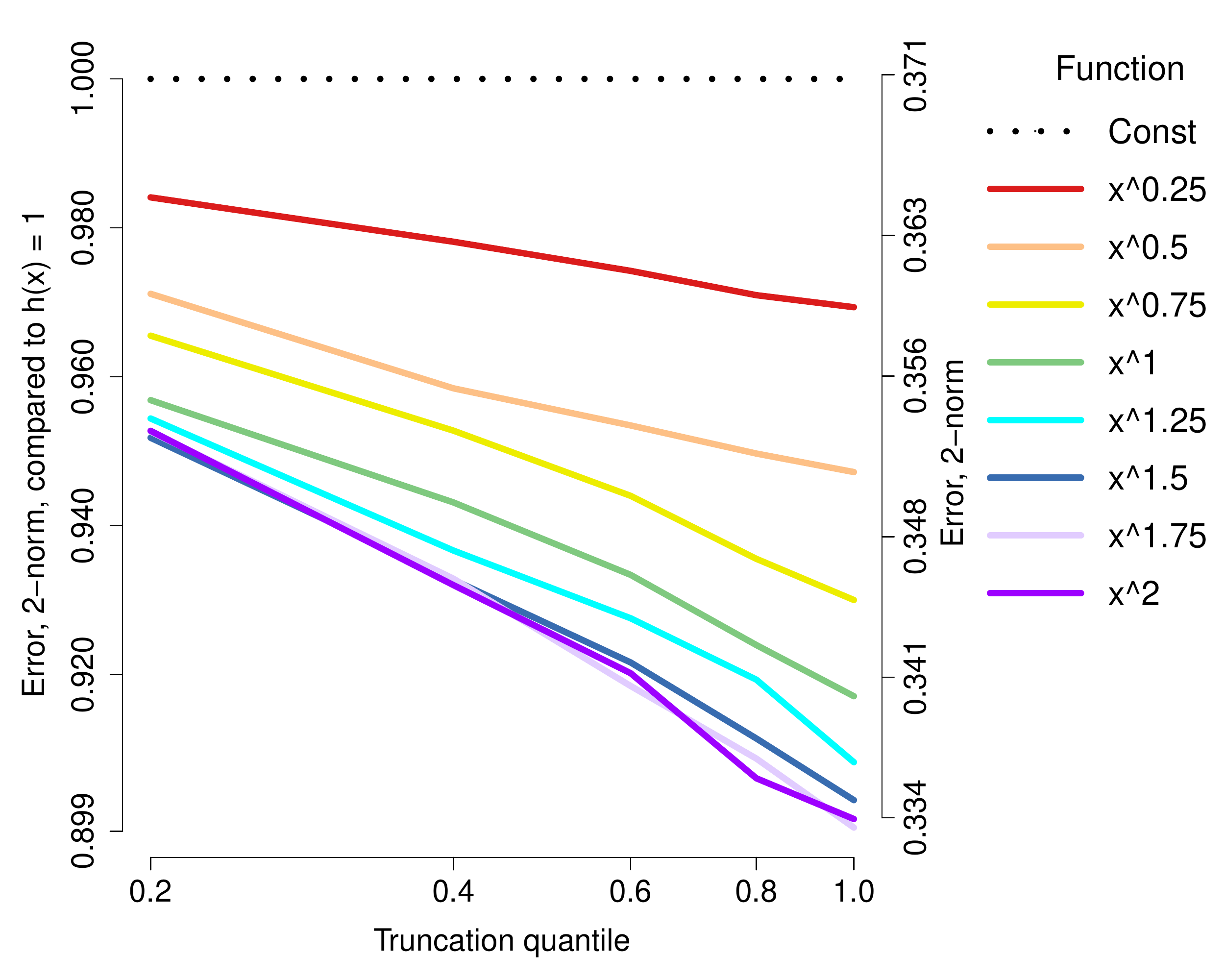}\hspace{0.15in}}
\subfloat[$n=1000$, error in matrix $F$-norm]
{\includegraphics[scale=0.28]{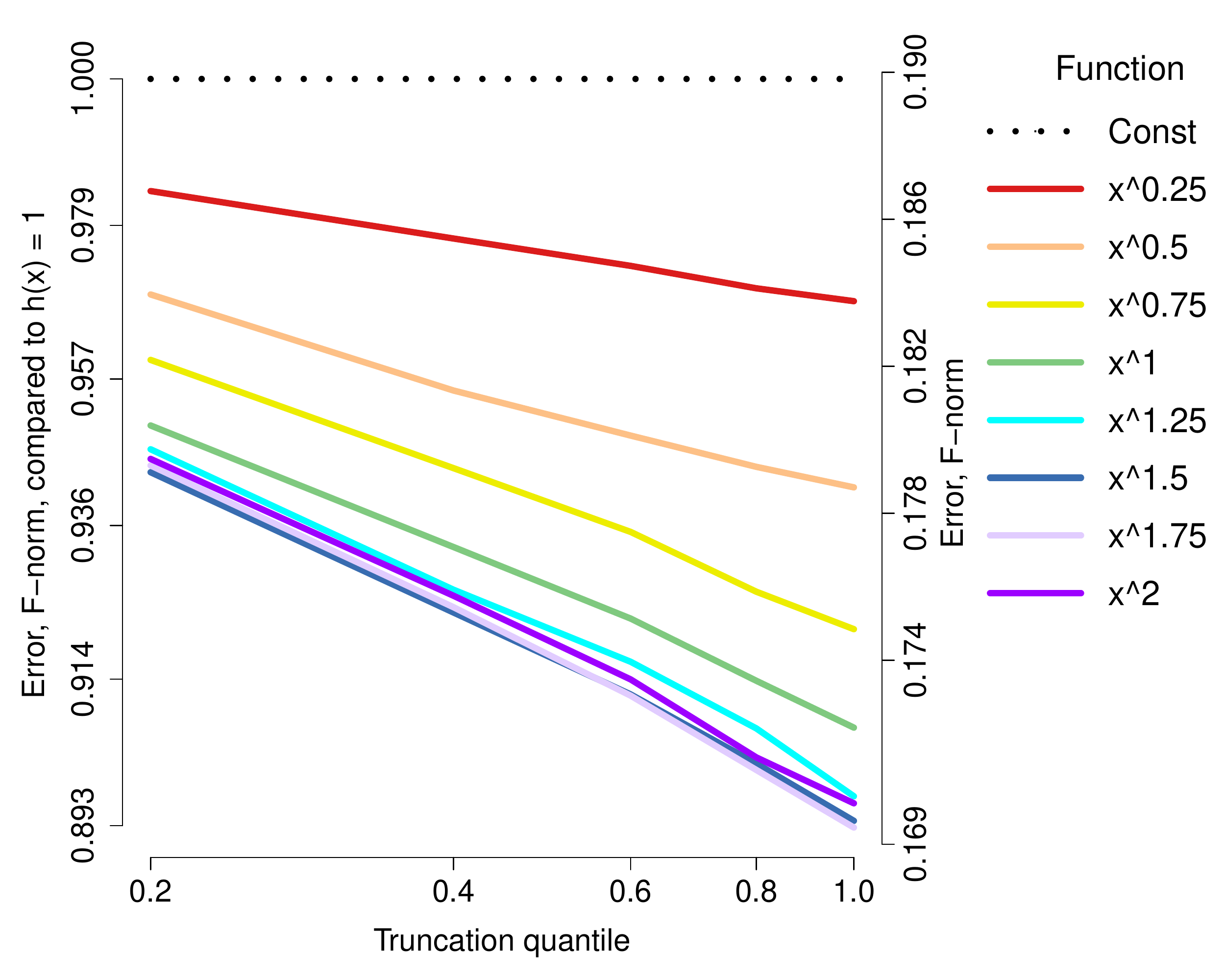}\hspace{-0.02in}}
\caption{Averaged error in spectral and $F$ norms over 50 trials normalized by the corresponding norms of the true $\mathbf{K}_0$; sparsity chosen by cross validation; $|\mathcal{J}|=5$.}\label{plot_Ad_error}
\end{figure}

Based on the plots, for edge recovery 
$h(x)=x^2$ is among the best performers, with $C_j\equiv 1$ (no truncation) being a safe choice, supporting our previous conclusion of the choice of $h(x)=x^{\max\{2-a,0\}}$ for general $a$-$b$ models in \citet{yu19,yu21}. %
It can also be observed that while using multiple removed coordinates $\mathcal{J}$ is beneficial for the high-dimensional case, the improvement in the low-dimensional case may not justify the added computational burden.

In Figure~\ref{plot_Ad_error} we plot the estimation error in spectral
and Frobenius norms, i.e.~$\mnorm{\hat{\mathbf{K}}-\mathbf{K}_0}_2$
and $\mnorm{\hat{\mathbf{K}}-\mathbf{K}_0}_F$, against the quantile
probability $\pi$. The estimate is chosen by cross validation from the
estimates with $|\mathcal{J}|=5$. The $y$-ticks on the right-hand side
are the errors, and those on the left are the errors divided by the
error for $h(x)=1$, measuring the relative performance of each method
compared with \citet{hyv05}. In contrast to Figure~\ref{plot_Ad}, smaller values on the $y$-axis indicate better performance. As in Figure~\ref{plot_Ad}, $h(x)=x^2$ performs the best when considering the Frobenius norm. When the error is measured in the spectral norm, $h(x)=x^2$ has the largest error for $n=80$ but shows better improvements over other estimators as $n$ increases.

\subsubsection{Effect of Diagonal Multipliers on ROCs}
In our experiments, we set an upper bound on the diagonal multiplier based on our theoretical analysis in (\ref{eq_bounded_nonlog_delta}). To investigate whether the AUCs could be significantly improved with very large diagonal multipliers, in Figure~\ref{plot_dm} we present the average ROC curves over 50 trials for the solution paths with $|\mathcal{J}|=5$ and varying diagonal multipliers. (The $y$ axes are truncated from below for better visualization.) The upper bound diagonal multipliers (\ref{eq_bounded_nonlog_delta}), which we used throughout Section~\ref{AUCs and Estimation Errors}, are highlighted in bold and italics in the legend on the right, namely $1.351$ for $n=80$ and $1.099$ for $n=1000$. The legends as well as the colors are sorted by the AUCs in decreasing order.

The results show that the AUCs reach a peak and decrease after some diagonal multiplier much larger than the theoretical upper bound. It is thus tempting to choose a very large diagonal multiplier to achieve high AUC. However, in real applications, instead of focusing on the AUC, one must choose one estimate from the solution path by cross validation and examine the performance of that estimate. For each diagonal multiplier, the square on the corresponding curve with the same color represents the TPR and FPR of the estimate picked by cross validation, averaged over 50 trials. It can be seen that the estimates for the upper bound multiplier chosen by cross validation produce the most reasonable TPRs and FPRs, with the corresponding squares much closer to the upper-left corner than those for larger diagonal multipliers.

\begin{figure}[!htp]
\centering
\vspace{-0.15in}
\subfloat[$n=80$, ROCs]
{\includegraphics[scale=0.29]{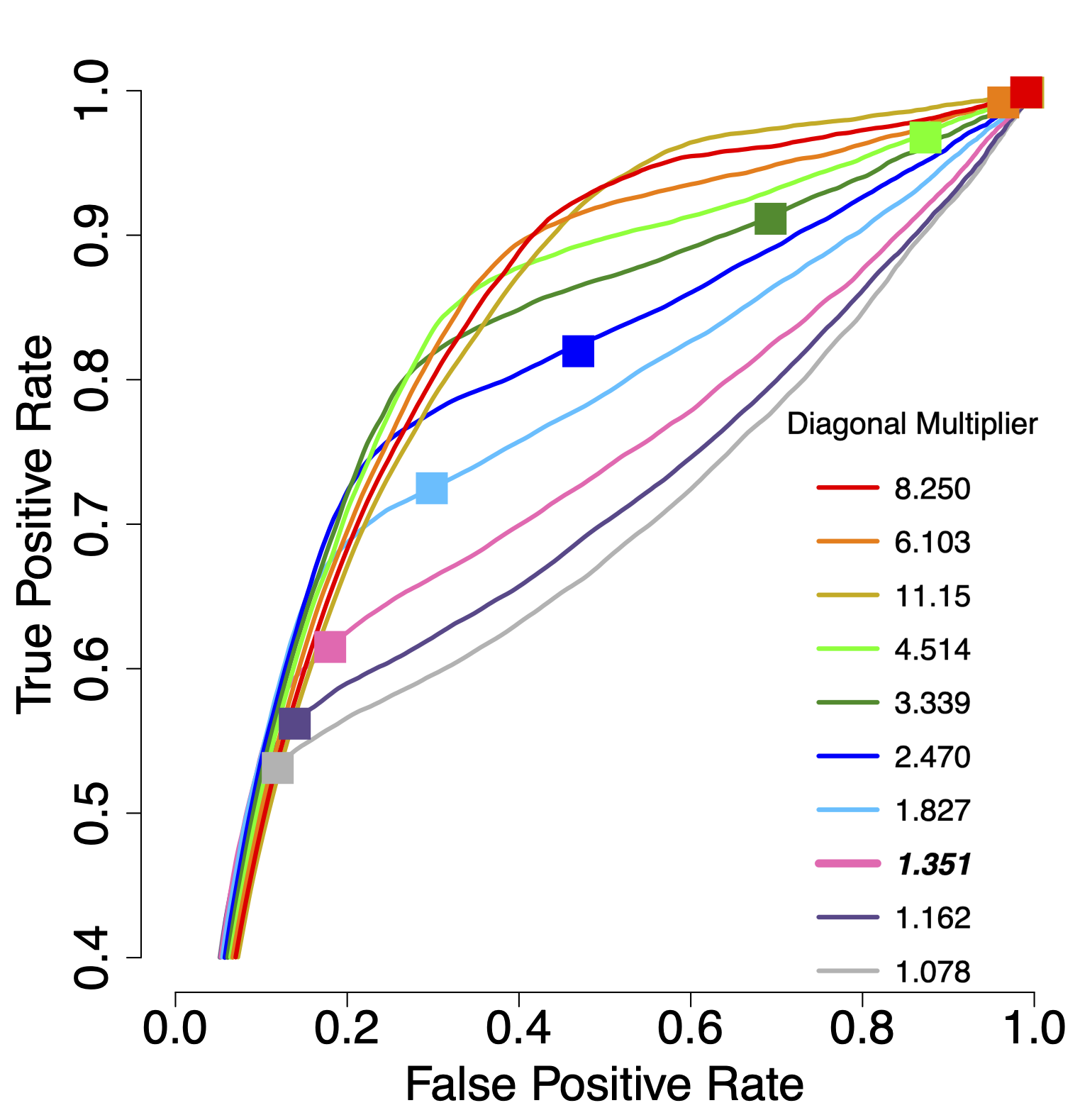}\hspace{0.15in}}
\subfloat[$n=1000$, ROCs]
{\includegraphics[scale=0.29]{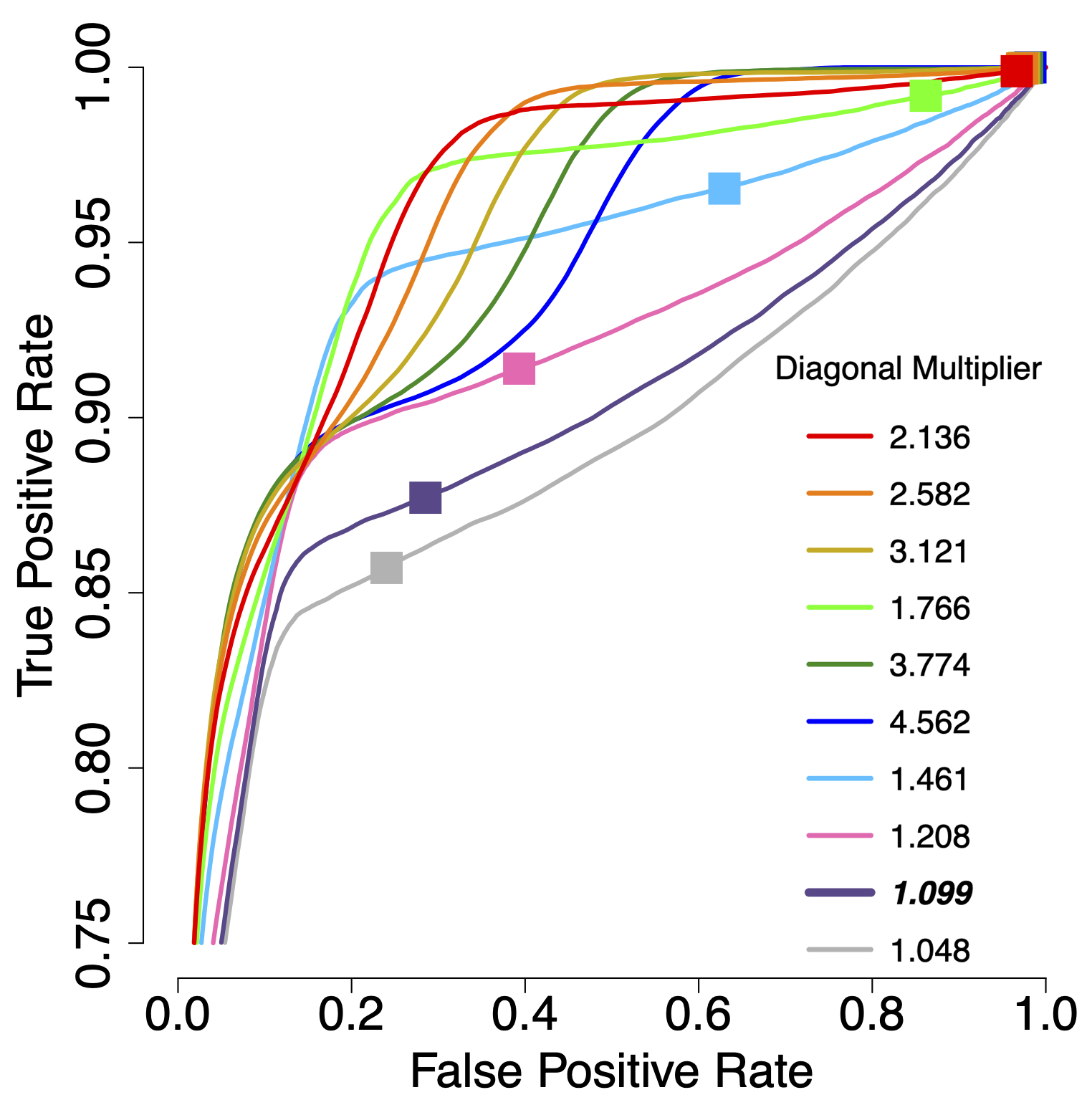}\hspace{-0.02in}}
\caption{ROCs averaged over 50 trials for edge recovery, with varying diagonal multipliers; $|\mathcal{J}|=5$. Squares correspond to the average of 50 TPRs and FPRs for estimates picked by cross validation. Note that the $y$ axes are truncated from below to better separate the curves for visualization.}\label{plot_dm}
\end{figure}

\section{Microbiome Data Analysis}
\label{sec:micr-data-analys}

To demonstrate our method, we analyze the human gut microbiome dataset originally studied in \citet{yat12} and also analyzed in \citet{wan19}. The dataset contains counts for $m=149$ taxa for $n=100$ healthy children and adults from Venezuela, Malawi, and US metropolitan areas \citep{wan19}. Following \citet{wan19}, we separate the 100 samples into two groups based on their age, with $n_1 = 67$ individuals of age $<3$ years as the first group, and $n_2 = 33$ of age $\geq 3$ as the second. 
To illustrate our models, we conduct a differential analysis of microbial interaction networks for these two groups \citep{sho21} by comparing their corresponding $\mathbf{K}$ matrices, which we denote as $\mathbf{K}_1$ and $\mathbf{K}_2$. 

We assume the true underlying distribution of the data is an $a$-$b$ model (Equation~\ref{eq_interaction_density2}) on $\Delta_{99}$ and use two settings: (i) $a=b=0$ assuming $\mathbf{K}_1\mathbf{1}_{100}=\mathbf{K}_2\mathbf{1}_{100}=\mathbf{0}_{100}$, namely the $A^{99}$ model in \citet{ait85} discussed in Section~\ref{Estimation for $A^{m-1}$ Models}; (ii) $a=b=1/2$, the exponential square-root model \citep{ino16} restricted to the simplex, whose estimation was covered in Section~\ref{Estimation-Simplex}. For each setting, we use permutation tests with $B=500$ trials, where in each trial we estimate $\mathbf{K}_1$ and $\mathbf{K}_2$ using two randomly shuffled datasets of sizes $n_1$ and $n_2$. We use cross validation to determine the tuning parameter $\lambda$ and set the diagonal multipliers to the upper bound in (\ref{eq_bounded_nonlog_delta}). Moreover, based on the results in Section~\ref{AUCs and Estimation Errors}, since the data is high-dimensional, it is beneficial to use $|\mathcal{J}|=5$, where we choose $\mathcal{J}=\{i*\lfloor m/5\rfloor|i=1,\ldots,5\}=\{29,58,87,116,145\}$. Similarly, given the simulation results in Section~\ref{AUCs and Estimation Errors}, the $h$ functions are also chosen as $h(x)=x^{2-a}$. 

Denote the estimates for each shuffled dataset as $\widehat{\mathbf{K}}_{1,(b)}$ and $\widehat{\mathbf{K}}_{2,(b)}$, $b=1,\ldots,B$, and those for the original dataset as $\widehat{\mathbf{K}}_{1}$ and $\widehat{\mathbf{K}}_{2}$.
We are interested in two tests:
\begin{enumerate}[(1)]
\item Global test: $\mathbf{K}_1\neq\mathbf{K}_2$, with $p$-value equal to 
\[\frac{1}{B}\sum_{i=1}^B\mathds{1}\left(\left|S(\hat{\mathbf{K}}_1)\triangle S(\hat{\mathbf{K}}_2)\right|\leq\left|S(\hat{\mathbf{K}}_{1,(b)})\triangle S(\hat{\mathbf{K}}_{2,(b)})\right|\right),\]
where  $|S(\mathbf{K}_1)\triangle S(\mathbf{K}_2)|$ denotes the total number of edge differences in the interaction matrices $\mathbf{K}_1$ and $\mathbf{K}_2$.
\item Local test: $\kappa_{1,jk}\neq\kappa_{2,jk}$ for $j,k=1,\ldots,149$, $j\neq k$, with $p$-values 
\[\frac{1}{B}\sum_{i=1}^B\mathds{1}\left(\left|\hat{\kappa}_{1,jk}-\hat{\kappa}_{2,jk}\right|\leq\left|\hat{\kappa}_{1,(b),jk}-\hat{\kappa}_{2,(b),jk}\right|\right)\]
adjusted for multiple testing using \citet{ben01}.
\end{enumerate}

For (1), the $p$-value we get assuming the $A^{99}$ models is 0.188, while the $p$-value assuming exponential square-root models is 0.31. Thus, under any conventional significance level and assuming either model, we do not reject the null hypothesis that there is no difference in the underlying interaction matrix. For (2), the resulting differential graphs for $\alpha=0.05$ is shown in Figure~\ref{data_diff_graphs}. 
As expected, although the signal may not be strong enough for the global test to detect difference between the edge nonzero patterns of the two interaction matrices, pairwise comparisons of the values of the interaction matrices can still be informative and useful for identifying differences in conditional dependences in the two groups.

\begin{figure}[t]
\centering
\subfloat[$A^{99}$ model, layout optimized][$A^{99}$ model,\\layout optimized]{
\includegraphics[width=0.24\textwidth]{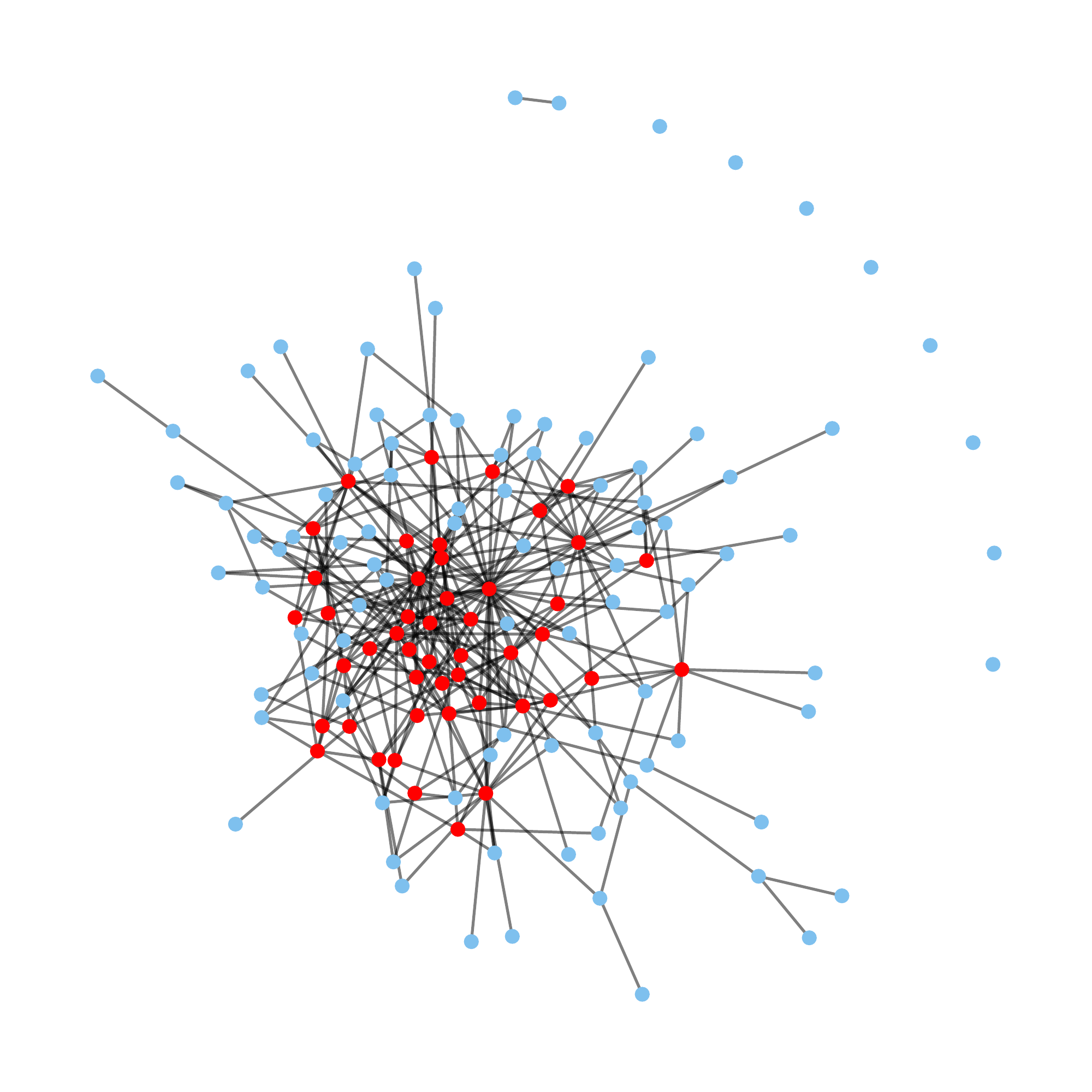}}
\subfloat[$A^{99}$ model, no isolated nodes, layout optimized][$A^{99}$ model,\\no isolated nodes,\\layout optimized]{\includegraphics[width=0.24\textwidth]{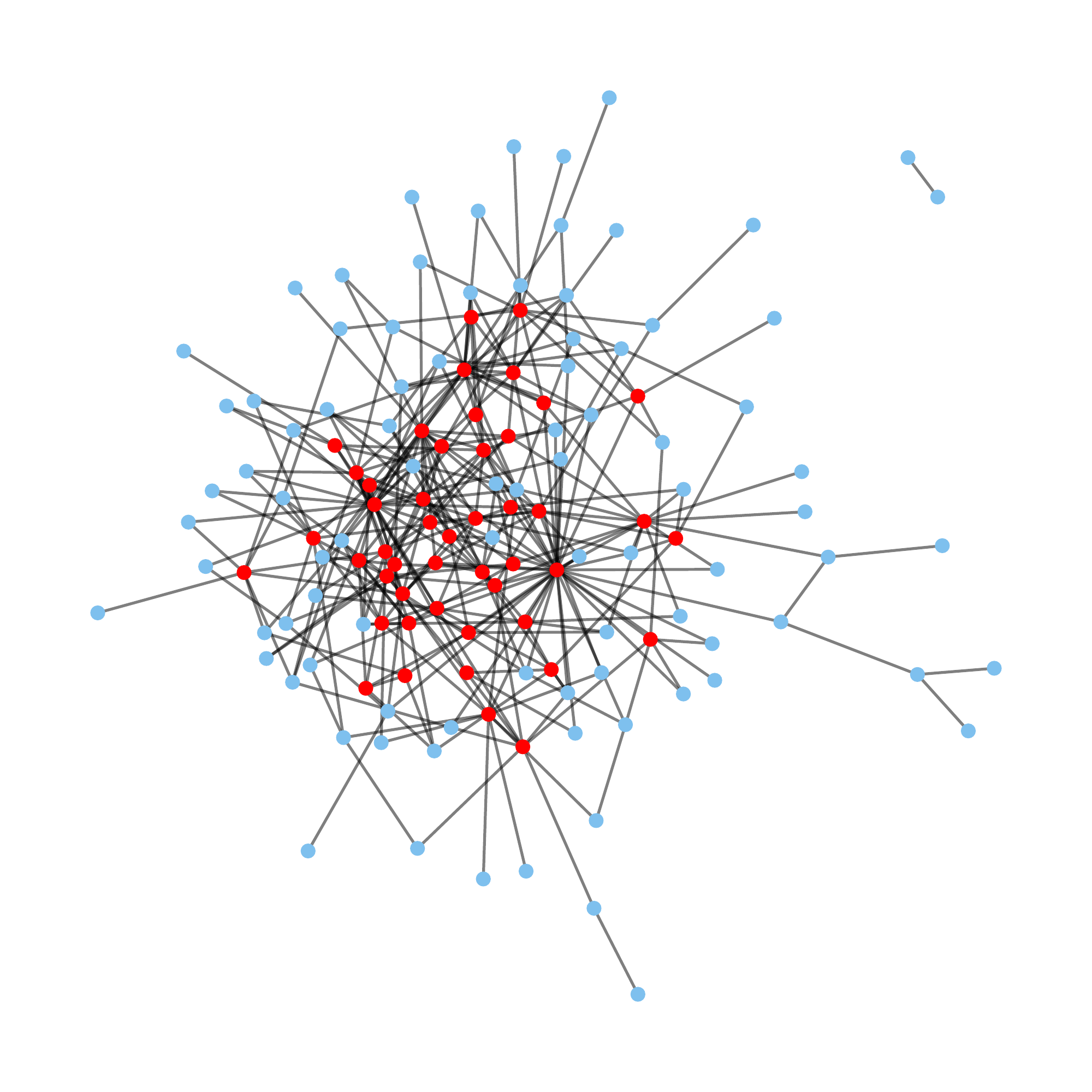}}
\subfloat[Exp model, same layout as (a)][Exp model,\\same layout as (a)]{\includegraphics[width=0.24\textwidth]{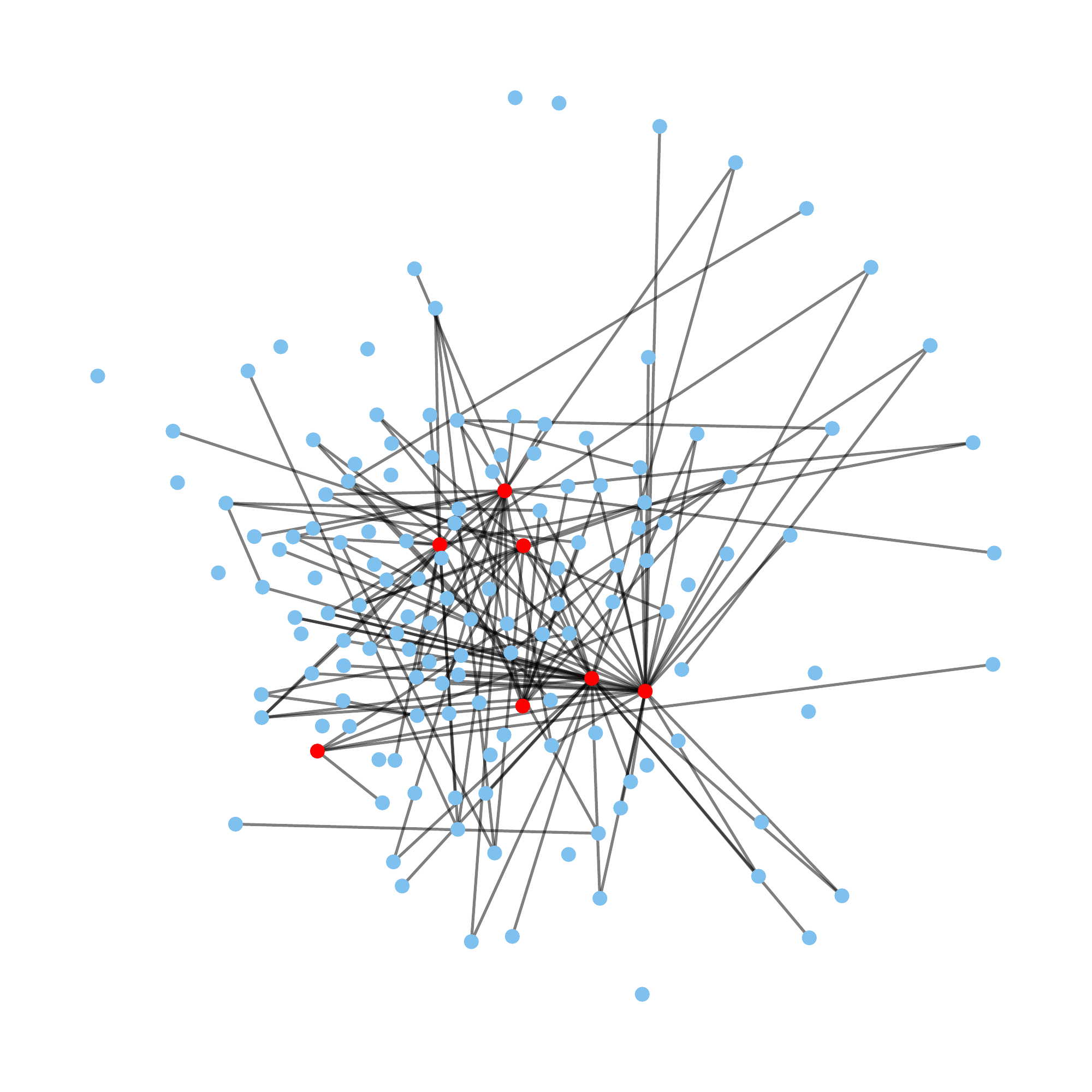}}
\subfloat[Exp model, no isolated nodes, layout optimized][Exp model,\\no isolated nodes,\\layout optimized]{\includegraphics[width=0.24\textwidth]{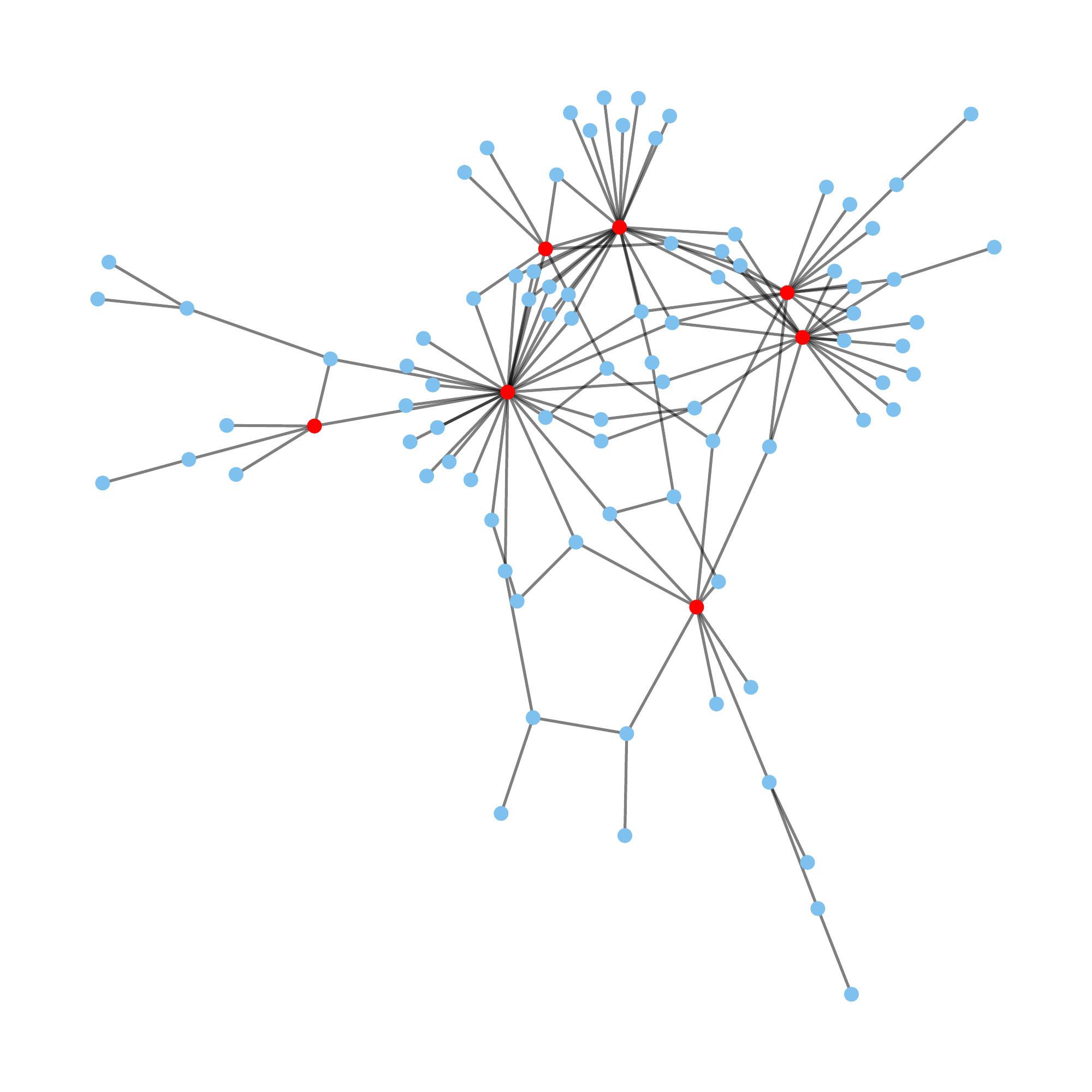}}
\caption{Differential graphs estimated by regularized generalized score matching estimator with permutation tests assuming the $A^{m-1}$ model (a, b) and the exponential square-root model (c, d). 
In (a) and (c), the layout is optimized for the graph (a). 
In (b) and (d), nodes with no differential edges are removed, with layout optimized for each plot. 
In all graphs, red points indicate nodes with degree at least 5 (``hub nodes'').}\label{data_diff_graphs}
\end{figure}



\section{Discussion}
\label{sec:discussion}

Building on the ideas of \citet{hyv05,hyv07}, the method of
\citet{yu21} estimates densities supported on general domains using a
generalized score matching loss.
The resulting estimator has a closed form when applied in
low-dimensional settings without any regularization and avoids calculating a possibly intractable normalizing constant.

The domains considered by \citet{yu21} are required to have positive
Lebesgue measure in order to guarantee consistent
estimation.
%
In this paper, we demonstrate how to extend their method to the case
of compositional data on a probability simplex of general dimension.  Specifically, we show how profiling out the last component of $\boldsymbol{x}$ yields an effective methodology for a flexible class of interaction models for compositional data.

We focus on $a$-$b$ pairwise interaction models with density proportional to
$\exp\{-{\boldsymbol{x}^a}^{\top}\mathbf{K}\boldsymbol{x}^a/(2a)+\boldsymbol{\eta}^{\top}\boldsymbol{x}^b/b\}$,
where for $a=0$ we let
${\boldsymbol{x}^a}^{\top}\mathbf{K}\boldsymbol{x}^a/(2a)\equiv{\log\boldsymbol{x}}^{\top}\mathbf{K}\log
\boldsymbol{x}/2$ and for $b=0$,
$\boldsymbol{\eta}^{\top}\boldsymbol{x}^b/b\equiv\boldsymbol{\eta}^{\top}\log\boldsymbol{x}$.
For this class, our results detail the construction of estimators for
simplex domains, and provide additional details on the $A^{m-1}$ models
\citep{ait85} as an important example.

In our theoretical treatment, we show that for general $a$-$b$ models on $(m-1)$-dimensional simplex domains and with $a>0$, the sparsity pattern of the interaction matrix $\mathbf{K}$ may be recovered successfully when the sample size is of order $n=\Omega(\log m)$. This directly parallels similar results obtained in prior work for unconstrained domains. In the case of $a=0$, we require an additional multiplicative factor that may weakly depend on $m$.

In order to account for boundary effects, our method introduces a set of weights in the score matching loss. Through simulation studies, we confirm that weights derived from the choice of a function $\boldsymbol{h}(\boldsymbol{x})=(x_1^c,\ldots,x_m^c)$ with $c=\max\{2-a,0\}$ perform the best in most settings in terms of edge recovery, generalizing the conclusion in \citet{yu21}.

Two problems naturally emerge as topics for future work.  On the one hand,
it would be interesting to extend our theoretical results on $a$-$b$
models with $a=0$ in order to get a full understanding of the sample
complexity of our estimators; see the discussion after
Corollary~\ref{theorem_simplex_log_ggm}.  On the other hand, it would
be interesting to develop a more systematic way to deal with
Lebesgue-null sets beyond simplices.



\appendix
\renewcommand{\thesection}{\Alph{section}}
\section{Proofs}\label{Proofs}

\begin{proof}[Proof of Theorem \ref{thm_A_simp}]
Fix $j=1,\ldots,m-1$ and $\boldsymbol{x}_{-j,-m}\in\mathcal{S}_{-j,\Deltam_{-m}}$, i.e.~$\boldsymbol{x}_{-j,-m}\in\mathbb{R}_+^{m-2}$ such that $\mathbf{1}_{m-2}^{\top}\boldsymbol{x}_{-j,-m}<1$. In our discussion, for ease of notation, given $x_j$ and $\boldsymbol{x}_{-j,-m}$, we may still write $x_m\equiv1-\mathbf{1}_{m-1}^{\top}\boldsymbol{x}_{-m}$, a function in $x_j$ and $\boldsymbol{x}_{-j,-m}$, and for simplicity we may drop its dependence on $x_j$ and $\boldsymbol{x}_{-j,-m}$. Note that $\mathcal{C}_{j}(\boldsymbol{x}_{-j,-m})=\left(0,1-\mathbf{1}_{m-2}^{\top}\boldsymbol{x}_{-j,-m}\right)$.

(I) \emph{Case $a>0$ and $b\geq 0$:} For  (A.1), we need $p_0(\boldsymbol{x}_{-m})\partial_j\log p(\boldsymbol{x}_{-m})(h_j\circ\varphi_j)(\boldsymbol{x}_{-m})\to 0$ as $x_j\searrow 0^+$ and $x_j\nearrow 1-\mathbf{1}_{m-2}^{\top}\boldsymbol{x}_{-j,-m}$. As $x_j$ goes to any finite constant, by Equation \ref{eq_density_simplex} $p_0(\boldsymbol{x}_{-m})$ converges to a non-zero constant when $b>0$, or a finite constant times the limit of $x_j^{\eta_j}x_m(x_j)^{\eta_m}$ when $b=0$. Note that 
\begin{multline*}
\partial_j\log p(\boldsymbol{x}_{-m})= -\left(\boldsymbol{\kappa}_{-m,j}^{\top}\boldsymbol{x}_{-m}^a\right)x_j^{a-1}+\left(\boldsymbol{\kappa}_{-m,m}^{\top}\boldsymbol{x}_{-m}^a\right)x_m^{a-1}(x_j)\\
-x_m^a(x_j)\kappa_{jm}x_j^{a-1}
+x_m^{2a-1}(x_j)\kappa_{mm}+\eta_j x_j^{b-1}-\eta_m x_m^{b-1}(x_j).
\end{multline*}
\begin{enumerate}[i)]
\item If $b>0$, by arguments above we only consider $(h_j\circ\varphi_j)(\boldsymbol{x}_{-m})\partial_j\log p(\boldsymbol{x}_{-m})$.
\begin{enumerate}[a)]
\item As $x_j\searrow 0^+$, $x_m\nearrow 1-\mathbf{1}_{m-2}^{\top}\boldsymbol{x}_{-j,-m}>0$, so $\partial_j\log  p(\boldsymbol{x}_{-m})=\mathcal{O}\left(x_j^{a-1}\right)+\mathcal{O}\left(x_j^{b-1}\right)+\mathcal{O}(1)$. Thus we need $\alpha_j>\max\{0,1-a,1-b\}$ so that $(h_j\circ\varphi_j)(\boldsymbol{x}_{-m})\partial_j\log p(\boldsymbol{x}_{-m})\to 0$.
\item The case where $x_j\nearrow 1-\mathbf{1}_{m-2}^{\top}\boldsymbol{x}_{-j,-m}$ and $x_m\searrow 0^+$ is an analog of a) by noting that $\varphi_j$ is symmetric in $x_j$ about the midpoint of its domain $(1-\mathbf{1}_{m-2}^{\top}\boldsymbol{x}_{-j,-m})/2$.
\end{enumerate}
\item If $b=0$, we need $x_j^{\eta_j}x_m^{\eta_m}(x_j)(h_j\circ\varphi_j)(\boldsymbol{x}_{-m})\partial_j\log p(\boldsymbol{x}_{-m})\to 0$. Note that this quantity has the same form as in a) just with $\eta_j$ or $\eta_m$ added to the $a$ and $b$ ($=0$) in the exponents, we thus require $\alpha_j>\max\{0,1-a-\eta_j,1-a-\eta_m,1-\eta_j,1-\eta_m\}=\max\{0,1-\eta_j,1-\eta_m\}$.
\end{enumerate}
In conclusion, (A.1) requires $\alpha_j\geq \max\{0,1-a,1-b\}$ for $b>0$ or $\alpha_j>\max\left\{0,1-\eta_j\right\}$. For (A.2), we only prove the first integrability condition, since the second integrability condition is similar. For the first, we need to show that
\[\int_{\boldsymbol{x}_{-m}\succ\boldsymbol{0},\,\mathbf{1}_{m-1}^{\top}\boldsymbol{x}_{-m}<1}p_0(\boldsymbol{x}_{-m})(h_j\circ\varphi_j)(\boldsymbol{x}_{-m})\left(\partial_j\log p(\boldsymbol{x}_{-m})\right)^2\d\boldsymbol{x}_{-m}<+\infty.\]
Using the fact that $\mathbf{0}\prec\boldsymbol{x}_{-m}\prec\mathbf{1}$ and $0<x_j^a<1$, $0<x_m^a<1$ with the triangle inequality multiple times, we have
\begin{align*}
\left|\partial_j\log p(\boldsymbol{x}_{-m})\right|&\leq\sum_{i=1}^{m-1}\left(|\kappa_{ij}|x_j^{a-1}+|\kappa_{im}|x_m^{a-1}\right)+|\kappa_{jm}|x_j^{a-1}+x_m^{a-1}|\kappa_{mm}|+|\eta_j| x_j^{b-1}+|\eta_m|x_m^{b-1}\\
&\leq\mnorm{\mathbf{K}}_1x_j^{a-1}+\mnorm{\mathbf{K}}_1x_m^{a-1}+|\eta_j|x_j^{b-1}+|\eta_m|x_m^{b-1},
\end{align*}
where $\mnorm{\mathbf{K}}_1\equiv\max_{j=1,\ldots,m}\sum_{i=1}^m|\kappa_{ij}|$. We again consider the following two cases.

\begin{enumerate}[i)]
\item If $b>0$, $p_0(\boldsymbol{x}_{-m})$ is bounded by an absolute constant, which we therefore ignore. We first fix $\boldsymbol{x}_{-j,-m}$ and denote $y_j(\boldsymbol{x}_{-j,-m})\equiv 1-\mathbf{1}_{m-2}^{\top}\boldsymbol{x}_{-j,-m}=x_j+x_m$. Then, writing $h_{\varphi,j,\boldsymbol{x}}\equiv (h_j\circ\varphi_j)(\boldsymbol{x}_{-m})$,
\begin{align*}
&\,\int_0^{y_j}h_{\varphi,j,\boldsymbol{x}}(\partial_j\log p(\boldsymbol{x}_{-m}))^2\d x_j\\
\leq&\,\int_0^{y_j}h_{\varphi,j,\boldsymbol{x}}\left[\mnorm{\mathbf{K}}_1 \left(x_j^{a-1}+x_m^{a-1}(x_j)\right)+|\eta_j|x_j^{b-1}+|\eta_m|x_m^{b-1}(x_j)\right]^2\d x_j\\
\leq&\,\int_0^{y_j/2}h_{\varphi,j,\boldsymbol{x}}\left(\mnorm{\mathbf{K}}_1 x_j^{a-1}+|\eta_j|x_j^{b-1}+c_{1,m}(\boldsymbol{x}_{-j,-m})\right)^2\d x_j\\
&\quad\quad+\int_{y_j/2}^{y_j}h_{\varphi,j,\boldsymbol{x}}\left(\mnorm{\mathbf{K}}_1x_m^{a-1}(x_j)+|\eta_m|x_m^{b-1}(x_j)+c_{1,j}(\boldsymbol{x}_{-j,-m})\right)^2\d x_j\\
=&\,\int_0^{y_j/2}h_j(x_j)\left(\mnorm{\mathbf{K}}_1 x_j^{a-1}+|\eta_j|x_j^{b-1}+c_{1,m}(\boldsymbol{x}_{-j,-m})\right)^2\d x_j\\
&\quad\quad\quad+\int_0^{y_j/2} h_j(x_j)\left(\mnorm{\mathbf{K}}_1 x_j^{a-1}+|\eta_m|x_j^{b-1}+c_{1,j}(\boldsymbol{x}_{-j,-m})\right)^2\d x_j
\end{align*}
where in the last step we used change of variable $x_j\leftarrow x_m(x_j)=y_j-x_j$ for the second term, and where
\begin{align*}
0<c_{1,j}\equiv\max_{y_j/2\leq x_j\leq y_j}\left(\mnorm{\mathbf{K}}_1x_j^{a-1}+|\eta_j|x_j^{b-1}\right)=\mathcal{O}\left(y_j^{a-1}\right)+\mathcal{O}\left(y_j^{b-1}\right)+\mathcal{O}(1)<+\infty,
\end{align*}
with $\mathcal{O}$ depending on $\mathbf{K}$ and $\boldsymbol{\eta}$. We thus have (dropping the dependency $y_j\equiv y_j(\boldsymbol{x}_{-j,-m})$ to save space)
\begin{align*}
&\,\int_{\substack{\boldsymbol{x}_{-m}\succ\boldsymbol{0},\\\mathbf{1}_{m-1}^{\top}\boldsymbol{x}_{-m}<1}}h_{\varphi,j,\boldsymbol{x}}\left(\partial_j\log p(\boldsymbol{x}_{-m})\right)^2\d\boldsymbol{x}_{-m}\\
=&\,\int_{\substack{\boldsymbol{x}_{-j,-m}\succ\mathbf{0},\\y_j>0}}\int_0^{y_j} h_{\varphi,j,\boldsymbol{x}}(\partial_j\log p(\boldsymbol{x}_{-m}))^2 \d x_j\d\boldsymbol{x}_{-j,-m}\\
\leq&\,\int_{\substack{\boldsymbol{x}_{-j,-m}\succ\mathbf{0},\\y_j>0}}\int_0^{y_j/2}h_j(x_j)\times\\
&\quad \left(\mathcal{O}\left(x_j^{a-1}\right)+\mathcal{O}\left(x_j^{b-1}\right)+\mathcal{O}\left(y_j^{a-1}\right)+\mathcal{O}\left(y_j^{b-1}\right)+\mathcal{O}(1)\right)^2\d x_j\d\boldsymbol{x}_{-j,-m}\\
\leq&\,\int_{\substack{\boldsymbol{x}_{-j,-m}\succ\mathbf{0},\\y_j>0}}\int_0^{y_j/2}h_j(x_j)\left(\mathcal{O}\left(x_j^{a-1}\right)+\mathcal{O}\left(x_j^{b-1}\right)+\mathcal{O}(1)\right)^2\d x_j\d\boldsymbol{x}_{-j,-m}\\
&+\int_{\substack{\boldsymbol{x}_{-j,-m}\succ\mathbf{0},\\y_j>0}}\int_0^{y_j/2}h_j(x_j)\left(\mathcal{O}\left(y_j^{a-1}\right)+\mathcal{O}\left(y_j^{b-1}\right)+\mathcal{O}(1)\right)^2\d x_j\d\boldsymbol{x}_{-j,-m}\\
\leq&\,\int_{\mathbf{1}\succ\boldsymbol{x}_{-j,-m}\succ\mathbf{0}}\int_0^{1}x_j^{\alpha_j}\left(\mathcal{O}\left(x_j^{a-1}\right)+\mathcal{O}\left(x_j^{b-1}\right)+\mathcal{O}(1)\right)^2\d x_j\d\boldsymbol{x}_{-j,-m}\\
&\quad\quad+\int_{\substack{\boldsymbol{x}_{-j,-m}\succ\mathbf{0},\\y_j>0}} \frac{\left(y_j/2\right)^{\alpha_j+1}}{\alpha_j+1} \left(\mathcal{O}\left(y_j^{a-1}\right)+\mathcal{O}\left(y_j^{b-1}\right)+\mathcal{O}(1)\right)^2\d\boldsymbol{x}_{-j,-m}\\
&\leq\int_0^1\mathcal{O}\left(x_j^{2a-2+\alpha_j}\right)+\mathcal{O}\left(x_j^{2b-2+\alpha_j}\right)+\mathcal{O}\left(x_j^{\alpha_j}\right)\d x_j\\
&\quad+\sum_{\substack{p\in\{2a-1+\alpha_j,\\2b-1+\alpha_j,\,\alpha_j+1\}}}\int_{\substack{\boldsymbol{x}_{-j,-m}\succ\mathbf{0},\\\mathbf{1}_{m-2}^{\top}\boldsymbol{x}_{-j,-m}<1}} \mathcal{O}\left(\left(1-\mathbf{1}_{m-2}^{\top}\boldsymbol{x}_{-j,-m}\right)^p\right)\d \boldsymbol{x}_{-j,-m}\\
&=\int_0^1o\left(x_j^{a-1}\right)+o\left(x_j^{b-1}\right)+o\left(x^0\right)\d x_j+\sum_{p\in\{a,b,1\}}\mathcal{O}\left(\Gamma(p+1)/\Gamma(p+m-1)\right)<+\infty
\end{align*}
for $\alpha_j>\max\{0,1-a,1-b\}$, where the second term of the last quantity follows from the normalizing constant of the Dirichlet distribution with parameters $(\mathbf{1}_{m-2},p+1)$.
\item If $b=0$, then $p_0(\boldsymbol{x}_{-m})$ is bounded by
  $C_2\prod_{j=1}^m x_j^{\eta_{0j}}$, where $C_2$ is the product of the
  inverse
  normalizing constant of $p_0(\boldsymbol{x}_{-m})$ and the supremum
 $\sup_{\boldsymbol{x}\succ\boldsymbol{0},\mathbf{1}^{\top}\boldsymbol{x}=1}\exp\left(-{\boldsymbol{x}^a}^{\top}\mathbf{K}_0\boldsymbol{x}^a/(2a)\right)$, a positive and finite constant. Then by the same reasoning as in i), with $y_j(\boldsymbol{x}_{-j,-m})\equiv 1-\mathbf{1}_{m-2}^{\top}\boldsymbol{x}_{-j.-m}$ and noting that $\boldsymbol{\eta}\succ -\mathbf{1}_m$,
\begin{align*}
&\,\int_{\substack{\boldsymbol{x}_{-m}\succ\boldsymbol{0},\\ \mathbf{1}_{m-1}^{\top}\boldsymbol{x}_{-m}<1}}p_0(\boldsymbol{x}_{-m})(h_j\circ\varphi_j)(\boldsymbol{x}_{-m})\left(\partial_j\log p(\boldsymbol{x}_{-m})\right)^2\d\boldsymbol{x}_{-m}\\
\leq&\,\int_{\substack{\boldsymbol{x}_{-j,-m}\succ\mathbf{0},\\y_j>0}}\int_0^{y_j/2}C_2\prod_{k=1}^m x_k^{\eta_{0k}}h_j(x_j)\left(\mathcal{O}\left(x_j^{a-1}\right)+\mathcal{O}\left(x_j^{-1}\right)+\mathcal{O}(1)\right)^2\d x_j\d\boldsymbol{x}_{-j,-m}\\
&+\int_{\substack{\boldsymbol{x}_{-j,-m}\succ\mathbf{0},\\y_j>0}}\int_0^{y_j/2}C_2\prod_{k=1}^m x_k^{\eta_{0k}}h_j(x_j)\left(\mathcal{O}\left(y_j^{a-1}\right)+\mathcal{O}\left(y_j^{-1}\right)+\mathcal{O}(1)\right)^2\d x_j\d\boldsymbol{x}_{-j,-m}\\
\leq&\,C_2\prod_{k\neq j,m}\int_0^1x_k^{\eta_{0k}}\d x_k\int_0^{1}x_j^{\eta_{0k}+\alpha_j}\left(\mathcal{O}\left(x_j^{-1}\right)+\mathcal{O}(1)\right)^2\d x_j\\
&\quad\quad\quad+C_2\int_{\boldsymbol{x}_{-j,-m}\succ\mathbf{0},\,y_j>0}\frac{(y_j/2)^{\alpha_j+1+\eta_{0j}}}{\alpha_j+1+\eta_{0j}}\times\prod_{k\neq j,m}x_k^{\eta_{0k}}\left(\mathcal{O}\left(y_j^{-1}\right)+\mathcal{O}(1)\right)^2\d\boldsymbol{x}_{-j,-m}\\
\leq&\,C_2\prod_{k\neq j,m}\frac{1}{\eta_{0k}+1}\int_0^1\mathcal{O}\left(x_j^{-1}\right)+\mathcal{O}\left(x_j\right)\d x_j+\sum_{p\in\{0,2\}}C_2\int_{\substack{\boldsymbol{x}_{-j,-m}\succ\boldsymbol{0},\\y_j>0}}\prod_{k\neq j,m}x_k^{\eta_{0k}}\mathcal{O}\left(y_j^p\right)\d\boldsymbol{x}_{-j,-m}\\
<&\,+\infty
\end{align*}
since the integral in the second term is the inverse normalizing constant of the Dirichlet distribution with parameters $(\boldsymbol{\eta}_{0,-j,-m}+\mathbf{1}_{m-2},p+1)$,  i.e.~$\frac{\Gamma(p+1)\prod_{k\neq j,m}\Gamma(\eta_{0k}+1)}{\Gamma\left(\mathbf{1}_{m-2}^{\top}\boldsymbol{\eta}_{0,-j,-m}+p+m-1\right)}<+\infty$.
\end{enumerate}
This ends the proof for the first integrability condition for (A.1) for $a>0$. For the second half, the integrand we consider is 
\[p_0(\boldsymbol{x}_{-m})\left|\partial_j\left(\partial_j\log p(\boldsymbol{x}_{-m})(h_j\circ\varphi)(\boldsymbol{x}_{-m})\right)\right|.\]
The arguments are similar to those for the first condition, where we first  bound $\partial_{jj}\log p(\boldsymbol{x})$ using sums of products of powers of $\boldsymbol{x}$. Then for each fixed $\boldsymbol{x}_{-j,-m}$ we split the domain of  $x_j$ into two halves and deal with the potential singularity at  $x_j\searrow 0^+$, where one can show that the requirement on  $\alpha_j$ is just enough for the integrand to be $o\left(x_j^{-1}\right)$ and thus the integral is finite. The detailed proof is tedious and is omitted.

(II) \emph{Case $a=0$ and $b\geq 0$:} First consider $b=0$. We again write $y_j\equiv 1-\mathbf{1}_{m-2}^{\top}\boldsymbol{x}_{-j,-m}$. Fixing $\boldsymbol{x}_{-j,-m}\in\mathcal{S}_{-j,(\Deltam0_{-m}}$,
\begin{align*}
&\,p_{0}(\boldsymbol{x}_{-m})\left|\partial_j\log p(\boldsymbol{x}_{-m})\right|\nonumber\\
\propto&\,\exp\left[-\frac{1}{2}{\log(\boldsymbol{x}_{-m})}^{\top}\mathbf{K}_{0,-m,-m}\log(\boldsymbol{x}_{-m})-{\log(\boldsymbol{x}_{-m})}^{\top}\boldsymbol{\kappa}_{0,-m,m}\log x_m\right.\nonumber\\
&\quad\quad\quad\quad\quad\quad\quad\quad\left.-\frac{1}{2}\kappa_{0,m,m}\left(\log x_m\right)^{2}+\boldsymbol{\eta}_{0,-m}^{\top}\log(\boldsymbol{x}_{-m})+\eta_{0,m}\log x_m\right]\times\nonumber\\
&\quad\quad\left| -\boldsymbol{\kappa}_{-m,j}^{\top}\log(\boldsymbol{x}_{-m})/x_j+\boldsymbol{\kappa}_{-m,m}^{\top}\log(\boldsymbol{x}_{-m})/x_m-\kappa_{jm}\log x_m/x_j\right.\nonumber\\
&\quad\quad\quad\quad\quad\quad\quad\quad\quad\quad\quad\quad\quad\quad\quad\quad\quad\left.+\kappa_{mm}\log x_m/x_m+\eta_j/x_j-\eta_m/x_m\right|.\nonumber\\
\leq&\,\prod_{k\neq j,m}\exp\left[-\frac{N_{\mathbf{K}_{0,-m,-m}}}{2}(\log x_k)^2+\eta_{0,k}\log x_k\right]\exp\left[-\frac{N_{\mathbf{K}_{0,-m,-m}}}{2}(\log x_j)^2+\right.\nonumber\\
&\quad\left.\eta_{0,j}\log x_j-{\log(\boldsymbol{x}_{-m})}^{\top}\boldsymbol{\kappa}_{0,-m,m}\log x_m-\frac{\kappa_{0,m,m}\left(\log x_m\right)^{2}}{2}+\eta_{0,m}\log x_m\right]\nonumber\\
&\quad\quad\quad\times\left| -\boldsymbol{\kappa}_{-m,j}^{\top}\log(\boldsymbol{x}_{-m})/x_j+\boldsymbol{\kappa}_{-m,m}^{\top}\log(\boldsymbol{x}_{-m})/x_m-\kappa_{jm}\log x_m/x_j\right.\nonumber\\
&\quad\quad\quad\quad\quad\quad\quad\quad\quad\quad\quad\quad\quad\quad\quad\quad\quad\left.+\kappa_{mm}\log x_m/x_m+\eta_j/x_j-\eta_m/x_m\right|.
\end{align*}
which is $\mathcal{O}\left(\exp\left(\mathcal{O}\left((\log x_j)^2\right)+\mathcal{O}\left(\log x_j\right)+\mathcal{O}\left(\log\log x_j\right)\right)\right)$ as $x_j\searrow 0^+$. Since the coefficient on the leading term is negative the entire term goes to 0. By symmetry the quantity goes to zero also when $x_j\nearrow y_j^-$. Thus, (A.1) holds for any $\alpha_j\geq 0$.

Similarly, $p_0(\boldsymbol{x}_{-m})(h_j\circ\varphi_j)(\boldsymbol{x}_{-m})(\partial_j\log p(\boldsymbol{x}_{-m}))^2$ and\\$p_0(\boldsymbol{x}_{-m})\left|\partial\left(\partial_j\log p(\boldsymbol{x}_{-m})(h_j\circ\varphi)(\boldsymbol{x}_{-m})\right)\right|$ are $\prod_{j=1}^m\mathcal{O}\left(\exp\left(-(\log x_j)^2\right)\right)$ times a polynomial, and are thus bounded and go to 0 at the boundaries of $\Deltam_{-m}$. Thus extending the integrands to 0 at the boundaries, they are continuous and bounded in the compact $\overline{\Deltam_{-m}}$, so integrals\\
$\int_{\Deltam_{-m}}p_0(\boldsymbol{x}_{-m})(h_j\circ\varphi_j)(\boldsymbol{x}_{-m})(\partial_j\log p(\boldsymbol{x}_{-m}))^2\d\boldsymbol{x}_{-m}$ and\\
$\int_{\Deltam_{-m}}p_0(\boldsymbol{x}_{-m})\left|\partial\left(\partial_j\log p(\boldsymbol{x}_{-m})(h_j\circ\varphi)(\boldsymbol{x}_{-m})\right)\right|\d\boldsymbol{x}_{-m}$ are finite, thus proving (A.2).

For $a=0$ and $b>0$, $\boldsymbol{\eta}\preceq\boldsymbol{0}$, and the proof is similar and is omitted. In particular, $p_0(\boldsymbol{x}_{-m})$ is bounded by that with $a=0$, $b=0$, $\boldsymbol{\eta}\equiv\mathbf{0}_m$, and thus its product with any polynomial is bounded and goes to 0 at the boundary of $\overline{\Deltam_{-m}}$.
\end{proof}

\begin{proof}[Proof of Theorem \ref{thm_simplex_identifiability}]
For notational simplicity, denote $\overset{\sim}{\mathbf{K}}\equiv\mathbf{K}_1-\mathbf{K}_2$ with columns $\overset{\sim}{\boldsymbol{\kappa}}_{,1}\ldots,\overset{\sim}{\boldsymbol{\kappa}}_{,m}$, and denote $\overset{\sim}{\boldsymbol{\eta}}\equiv\boldsymbol{\eta}_1-\boldsymbol{\eta}_2$. Assume that either $\overset{\sim}{\mathbf{K}}\neq\mathbf{0}_{m\times m}$ or $\overset{\sim}{\boldsymbol{\eta}}\neq\boldsymbol{0}_m$, otherwise there is nothing to prove. By Equation \ref{eq_grad_logp_simplex}, writing $x_m\equiv 1-\mathbf{1}_{m-1}^{\top}\boldsymbol{x}_{-m}$ and $\boldsymbol{x}=(\boldsymbol{x}_{-m};x_m)$ and taking the gradient of the log of both sides of the equation with respect to $x_j$, $j=1,\ldots,m-1$, we have
\begin{equation}\label{eq_proof_simplex_identifiability}
\left(x_j^{a-1}\overset{\sim}{\boldsymbol{\kappa}}_{,j}-x_m^{a-1}\overset{\sim}{\boldsymbol{\kappa}}_{,m}\right)^{\top}\boldsymbol{x}^a=\overset{\sim}{\eta}_j x_j^{b-1}-\overset{\sim}{\eta}_m x_m^{b-1}
\end{equation}
for all $\boldsymbol{x}_{-m}\in\Deltam_{-m}\equiv\left\{\boldsymbol{x}_{-m}\in\mathbb{R}_+^{m-1}\right|\left.\boldsymbol{x}_{-m}\succ\boldsymbol{0},\mathbf{1}_{m-1}^{\top}\boldsymbol{x}_{-m}<1\right\}$. In the following when $a=0$ by $x^a$ we mean $\log(x)$, and by $x^{a-1}$ we mean $1/x$ and we do not treat this case differently as the same expressions still hold.
\begin{enumerate}[(I)]
\item Suppose $\left(x_j^{a-1}\overset{\sim}{\boldsymbol{\kappa}}_{,j}-x_m^{a-1}\overset{\sim}{\boldsymbol{\kappa}}_{,m}\right)_{-j,-m}=\mathbf{0}_{m-2}$ for all $\boldsymbol{x}_{-m}\in\Deltam_{-m}$ and $x_m=1-\mathbf{1}_{-m}^{\top}\boldsymbol{x}_{-m}$ or $\tilde{\eta}_j$.
\begin{enumerate}[(i)]
\item Suppose $a=1$, then Equality \ref{eq_proof_simplex_identifiability} becomes\\ $\left(\overset{\sim}{\kappa}_{j,j}-\overset{\sim}{\kappa}_{j,m}\right)x_j+\left(\overset{\sim}{\kappa}_{m,j}-\overset{\sim}{\kappa}_{m,m}\right)x_m=\overset{\sim}{\eta}_jx_j^{b-1}-\overset{\sim}{\eta}_mx_m^{b-1}$, and we must have $b=2$ or $b=1$ or $\overset{\sim}{\eta}_j=\overset{\sim}{\eta}_m=0$, i.e.~$\boldsymbol{\eta}_1=\boldsymbol{\eta}_2$.
\item Suppose $a\neq 1$. The assumption implies $\left(\overset{\sim}{\boldsymbol{\kappa}}_{j}\right)_{-j,-m}=\left(\overset{\sim}{\boldsymbol{\kappa}}_{m}\right)_{-j,-m}=\mathbf{0}_{m-2}$. Then Equality \ref{eq_proof_simplex_identifiability} becomes $\left(x_j^{a-1}\overset{\sim}{\kappa}_{j,j}-x_m^{a-1}\overset{\sim}{\kappa}_{j,m}\right)x_j^a+$\\$\left(x_j^{a-1}\overset{\sim}{\kappa}_{m,j}-x_m^{a-1}\overset{\sim}{\kappa}_{m,m}\right)x_m^a=\overset{\sim}{\eta}_j x_j^{b-1}-\overset{\sim}{\eta}_m x_m^{b-1}$. Since $x_j>0$ and $x_m>0$ are arbitrary (as $\mathbf{1}_{-m}^{\top}\boldsymbol{x}_{-m}$ can vary) as long as $x_j+x_m<1$, the cross terms must not exist, and so $\overset{\sim}{\kappa}_{j,m}=\overset{\sim}{\kappa}_{m,j}=0$. It thus follows that $\tilde{\boldsymbol{\kappa}}_{-j,j}=\tilde{\boldsymbol{\kappa}}_{-m,m}=\mathbf{0}_{m-1}$ and hence $\overset{\sim}{\mathbf{K}}$ is diagonal, and the original equality becomes $-\frac{1}{2}\mathrm{diag}(\overset{\sim}{\mathbf{K}})^{\top}\left(\boldsymbol{x}^{a}\right)^2+\overset{\sim}{\boldsymbol{\eta}}^{\top}\boldsymbol{x}^b=0$, in which by $\boldsymbol{x}^{0}$ we mean $\log(\boldsymbol{x})$. Thus we must have $2a=b\neq 0$ and $\mathbf{K}_1-\mathbf{K}_2=2\boldsymbol{\eta}_1-2\boldsymbol{\eta}_2$.
\end{enumerate}
\item Now fix $x_j$ and $x_m$ such that $\left(x_j^{a-1}\overset{\sim}{\boldsymbol{\kappa}}_{,j}-x_m^{a-1}\overset{\sim}{\boldsymbol{\kappa}}_{,m}\right)_{-j,-m}\neq \mathbf{0}_{m-2}$. Note that $\mathbf{1}_{m-2}^{\top}\boldsymbol{x}_{-j,-m}=1-x_j-x_m$ is also fixed. Now the right-hand side and the first vector on the left-hand side of Equality \ref{eq_proof_simplex_identifiability} are both constant, while $\boldsymbol{x}_{-j,-m}$ is allowed to vary freely as long as their sum is fixed. A necessary condition of Equality \ref{eq_proof_simplex_identifiability} is thus
\begin{equation}\label{eq_proof_simplex_identifiability2}
\left(x_j^{a-1}\overset{\sim}{\boldsymbol{\kappa}}_{,j}-x_m^{a-1}\overset{\sim}{\boldsymbol{\kappa}}_{,m}\right)_{-j,-m}^{\top}\boldsymbol{x}_{-j,-m}^a=\text{const depending on }x_j\text{ and }x_m\text{ only}
\end{equation}
for all $\boldsymbol{x}_{-j,-m}^a\in\mathcal{U}_{x_j,x_m}\equiv\{\boldsymbol{y}^a:\boldsymbol{y}\succ \mathbf{0}_{m-2},\mathbf{1}_{m-2}\boldsymbol{y}=\mathbf{1}_{m-2}^{\top}\boldsymbol{x}_{-j,-m}=1-x_j-x_m\}$.

Suppose by contradiction that $a\neq 1$. Then $\mathcal{U}_{x_j,x_m}$ is not entirely on a hyperplane, and by assumption $\left(x_j^{a-1}\overset{\sim}{\boldsymbol{\kappa}}_{,j}-x_m^{a-1}\overset{\sim}{\boldsymbol{\kappa}}_{,m}\right)_{-j,-m}$ is not a zero vector, so the equality cannot hold. We thus have $a=1$, so that $\mathcal{U}_{x_j,x_m}$ lies on the hyperplane $\mathcal{H}_{x_j,x_m}\equiv\{\boldsymbol{y}:\mathbf{1}_{m-2}^{\top}\boldsymbol{y}=1-x_j-x_m\}$. Since $\mathcal{H}_{x_j,x_m}\equiv\{c\boldsymbol{x}_{-j,-m}:c\in\mathbb{R},\boldsymbol{x}\in\mathcal{U}_{x_j,x_m}\}$, Equality \ref{eq_proof_simplex_identifiability2} must hold for all $\boldsymbol{x}_{-j,-m}$ in the hyperplane $\mathcal{H}_{x_j,x_m}$, and by the assumption that $\left(\overset{\sim}{\boldsymbol{\kappa}}_{,j}-\overset{\sim}{\boldsymbol{\kappa}}_{,m}\right)_{-j,-m}$ is nonzero it must be a constant multiple of $\mathbf{1}_m$, and the right-hand side of Equality \ref{eq_proof_simplex_identifiability2} is hence $c_0(1-x_j-x_m)$ for some absolute constant $c_0\neq 0$ assuming $\left(\overset{\sim}{\boldsymbol{\kappa}}_{,j}-\overset{\sim}{\boldsymbol{\kappa}}_{,m}\right)_{-j,-m}=c_0\mathbf{1}_m$.  Plugging this back in Equality \ref{eq_proof_simplex_identifiability} we get
\[c_0(1-x_j-x_m)+\left(\overset{\sim}{{\kappa}}_{j,j}-\overset{\sim}{\kappa}_{j,m}\right)x_j+\left(\overset{\sim}{{\kappa}}_{m,j}-\overset{\sim}{\kappa}_{m,m}\right)x_m=\overset{\sim}{\eta}_j x_j^{b-1}-\overset{\sim}{\eta}_m x_m^{b-1},\]
and hence as  in (I) (i) we have $b=2$ or $b=1$ or $\boldsymbol{\eta}_1=\boldsymbol{\eta}_2$.
\end{enumerate}

\end{proof}

\begin{proof}[Proof of Theorem \ref{thm_assumption_Ad}]
Write $\iota_{+m}(\boldsymbol{x}_{-m})=\left(\boldsymbol{x}_{-m},1-\mathbf{1}_{m-1}^{\top}\boldsymbol{x}_{-m}\right)$ for any $\boldsymbol{x}_{-m}\in\Deltam_{-m}$. We first prove the finiteness of the normalizing constant. If $\mathbf{K}$ is positive definite, the inverse normalizing constant is
\begin{align*}
&\,\int_{\Deltam_{-m}}\exp\left(-\frac{1}{2}\log\iota_{+m}(\boldsymbol{x}_{-m})^{\top}\mathbf{K}\log\iota_{+m}(\boldsymbol{x}_{-m})+\boldsymbol{\eta}^{\top}\log\iota_{+m}(\boldsymbol{x}_{-m})\right)\d \boldsymbol{x}_{-m}\\
\leq&\,\int_{\Deltam_{-m}}\exp\left(\sum_{j=1}^{m}\left(-\lambda_{\min}(\mathbf{K})\left(\log \iota_{+m}\left(\boldsymbol{x}_{-m}\right)\right)_j^2+\eta_j\left(\log \iota_{+m}\left(\boldsymbol{x}_{-m}\right)\right)_j\right)\right)
\d\boldsymbol{x}_{-m}\\
\leq&\,\int_{\Deltam_{-m}}\exp\left(\sum_{j=1}^{m}\frac{\eta_j^2}{4\lambda_{\min}(\mathbf{K})}\right)
\d\boldsymbol{x}_{-m}<+\infty,
\end{align*}
proving (I). Now assume $\mathbf{K}$ is no longer positive definite. If $\mathbf{K}\mathbf{1}_m=\mathbf{0}_m$, for any $\boldsymbol{a}\in\mathbb{R}^m$, 
\begin{align*}
\boldsymbol{a}^{\top}\mathbf{K}\boldsymbol{a}&=\left(\boldsymbol{a}_{-j},a_j\right)^{\top}\begin{bmatrix}\mathbf{K}_{-j,-j} & \boldsymbol{\kappa}_{-j,j} \\ \boldsymbol{\kappa}_{-j,j}^{\top} & \kappa_{jj}\end{bmatrix} \left(\boldsymbol{a}_{-j},a_j\right)\\
&=\left(\boldsymbol{a}_{-j},a_j\right)^{\top}\begin{bmatrix}\mathbf{K}_{-j,-j} & -\mathbf{K}_{-j,-j}\mathbf{1}_{m-1} \\ -\mathbf{1}_{m-1}^{\top}\mathbf{K}_{-j,-j} & \mathbf{1}_{m-1}^{\top}\mathbf{K}_{-j,-j}\mathbf{1}_{m-1}\end{bmatrix} \left(\boldsymbol{a}_{-j},a_j\right)\\
&=\left(\boldsymbol{a}_{-j}-a_j\mathbf{1}_m\right)^{\top}\mathbf{K}_{-j,-j}\left(\boldsymbol{a}_{-j}-a_j\mathbf{1}_{m-1}\right),
\end{align*}
which is zero if and only if $\boldsymbol{a}_{-j}=a_j\mathbf{1}_{m-1}$, i.e.~$a_1=\cdots=a_m$, and is positive otherwise. Thus, if $\mathbf{K}\mathbf{1}_m=\mathbf{0}_m$, the condition that $\mathbf{K}_{-j,-j}$ is positive definite for some $j=1,\ldots,m$ is equivalent to that $\mathbf{K}_{-j,-j}$ is positive definite for all $j=1,\ldots,m$, and implies that $\mathbf{K}$ is positive semi-definite.

If $\mathbf{K}$ is positive semi-definite and $\boldsymbol{\eta}\succ-\mathbf{1}_m$, the inverse normalizing constant is
\begin{align*}
&\,\int_{\Deltam_{-m}}\exp\left(-\frac{1}{2}\log\iota_{+m}(\boldsymbol{x}_{-m})^{\top}\mathbf{K}\log\iota_{+m}(\boldsymbol{x}_{-m})+\boldsymbol{\eta}^{\top}\log\iota_{+m}(\boldsymbol{x}_{-m})\right)\d \boldsymbol{x}_{-m}\\
\leq&\,\int_{\Deltam_{-m}}\exp\left(\boldsymbol{\eta}^{\top}\log\iota_{+m}(\boldsymbol{x}_{-m})\right)\d\boldsymbol{x}_{-m}=\frac{\prod_{j=1}^m\Gamma(\eta_j+1)}{\Gamma\left(\mathbf{1}_{m}^{\top}\boldsymbol{\eta}+m\right)}<+\infty
\end{align*}
since the last quantity is the inverse normalizing constant of the Dirichlet distribution with parameters $(\boldsymbol{\eta}+\mathbf{1}_m)$, proving (III).

On the other hand, suppose $\mathbf{K}\mathbf{1}_m=\mathbf{0}_m$ and
$\mathbf{K}_{-j,-j}$ is positive definite for some/all
$j=1,\ldots,m$. Again letting $\boldsymbol{x}_{-m}$ be the free
variables and letting
$x_m=1-\mathbf{1}_{m-1}^{\top}\boldsymbol{x}_{-m}$, define the
\emph{additive log-ratio transformation} applied to $\boldsymbol{x}$:
$\boldsymbol{y}_{-m}\equiv
\log \boldsymbol{x}_{-m}-(\log x_{m})\mathbf{1}_{m-1}$, a random vector supported on $\mathbb{R}^{m-1}$. Append an extra $y_m=0$ for ease of notation. The transformation is thus bijective and the inverse transformation, the \emph{additive logistic transformation} $\boldsymbol{x}=\exp(\boldsymbol{y})/\mathbf{1}_m^{\top}\exp(\boldsymbol{y})$. Since
\[\partial x_k/\partial  y_j=-x_kx_j,\quad\partial x_j/\partial y_j=x_j(1-x_j)\]
for $j\neq k$, $j,k= l,\ldots,m-1$, we have
\begin{align*}
\left|\frac{\partial\boldsymbol{x}_{-m}}{\partial\boldsymbol{y}_{-m}}\right|&=\left|\begin{matrix}
x_1(1-x_1) & -x_1x_2 & \cdots & -x_1x_{m-1} \\
-x_1x_2 & x_2(1-x_2) & \cdots & -x_2x_{m-1} \\
\vdots & \vdots & \ddots & \vdots \\
-x_1x_{m-1} & -x_2x_{m-1} & \cdots & x_{m-1}(1-x_{m-1})
\end{matrix}\right|\\
&=\prod_{j=1}^m x_j=\exp\left(\mathbf{1}_m^{\top}\log\boldsymbol{x}\right).
\end{align*}
Then by $\mathbf{1}_m^{\top}\mathbf{K}=\mathbf{K}\mathbf{1}_m=\mathbf{0}_m$, $\boldsymbol{y}_{-m}$ has density proportional to
\begin{align*}
&\,p(\boldsymbol{x}_{-m})\left|\partial\boldsymbol{x}_{-m}/\partial\boldsymbol{y}_{-m}\right| \\
\propto&\,\exp\left(-\frac{1}{2}\left(\log \boldsymbol{x}-(\log x_{m})\mathbf{1}_m\right)^{\top}\mathbf{K}\left(\log\boldsymbol{x}-(\log x_{m})\mathbf{1}_m\right)\right.\\
&\quad\quad\quad \left.\phantom{\frac{1}{2}} +(\boldsymbol{\eta}+\mathbf{1}_m)^{\top}\left(\log\boldsymbol{x}-(\log x_m)\mathbf{1}_m\right)+(\log x_m)\mathbf{1}_m^{\top}(\boldsymbol{\eta}+\mathbf{1}_m) \right)\\
=&\,\exp\left(-\frac{1}{2}\boldsymbol{y}^{\top}\mathbf{K}\boldsymbol{y}+(\boldsymbol{\eta}+\mathbf{1}_m)^{\top}\boldsymbol{y}+\mathbf{1}_m^{\top}(\boldsymbol{\eta}+\mathbf{1}_m)\log x_m \right)\\
=&\,\exp\left(-\frac{1}{2}\boldsymbol{y}_{-m}^{\top}\mathbf{K}_{-m,-m}\boldsymbol{y}_{-m}+(\boldsymbol{\eta}_{-m}+\mathbf{1}_{m-1})^{\top}\boldsymbol{y}_{-m}\right.\\
&\quad\quad\quad\left.-\left(\mathbf{1}_m^{\top}\boldsymbol{\eta}+m\right)\log\left(1+\mathbf{1}_{m-1}^{\top}\exp(\boldsymbol{y}_{-m})\right) \right).
\end{align*}
Note that $\log x_m=-\log\left(1+\mathbf{1}_{m-1}^{\top}\exp(\boldsymbol{y}_{-m})\right)<0$, so for $\mathbf{1}_m^{\top}\boldsymbol{\eta}+m\geq 0$ the last display is always upper-bounded by a constant times a normal density with a positive definite inverse covariance matrix $\mathbf{K}_{-m,-m}$, and thus the normalizing constant is finite, thus proving (II).

As for (A.1), fix $j=1,\ldots,m-1$ and any $\ell\in\{1,\ldots,m-1\}\backslash\{j\}$, and write $\boldsymbol{z}\equiv\log\boldsymbol{x}-(\log x_{\ell})\mathbf{1}_m$. Fix any $\boldsymbol{x}_{-j,-m}\in\mathbb{R}^{m-2}$ with $\boldsymbol{x}_{-j,-m}\succ\mathbf{0}$, $\mathbf{1}_{m-2}^{\top}\boldsymbol{x}_{-j,-m}<1$. Then if (I) $\mathbf{K}_0$ is positive definite or (II) $\mathbf{K}_0\mathbf{1}_m=\mathbf{0}_m$ and $\mathbf{K}_{0,-\ell,-\ell}$ is positive definite, by the proof above, $p_{0}(\boldsymbol{x}_{-m})x_i^{t}$ is upper bounded by a finite constant depending on $\mathbf{K}_0$ and $\boldsymbol{\eta}_0$ for any $t\in\mathbb{R}$ and $i=j,m$, since it is a constant times the density with parameters $\mathbf{K}_0$ and $\boldsymbol{\eta}_0+t\boldsymbol{e}_i$, and since we did not impose any restriction on the $\boldsymbol{\eta}$ parameter. On the other hand, for (III) $\mathbf{K}_0\mathbf{1}_m=\mathbf{0}_m$, $\mathbf{K}_{0,-\ell,-\ell}$ is positive semi-definite and $\boldsymbol{\eta}_0\succ -\mathbf{1}_m$,
\begin{align*}
p_{0}(\boldsymbol{x}_{-m})&\propto\exp\left(-\frac{1}{2}\log\boldsymbol{x}^{\top}\mathbf{K}_0\log\boldsymbol{x}+\boldsymbol{\eta}_0^{\top}\log\boldsymbol{x}\right)\nonumber\\
&=\exp\left(-\frac{1}{2}\boldsymbol{z}_{-\ell}^{\top}\mathbf{K}_{0,-\ell,-\ell}\boldsymbol{z}_{-\ell}+\boldsymbol{\eta}_{0,-\ell}^{\top}\boldsymbol{z}_{-\ell}+\left(\mathbf{1}_m^{\top}\boldsymbol{\eta}_0\right)\log x_{\ell}\right)\nonumber\\
&\leq\exp\left(\boldsymbol{\eta}_{0,-\ell}^{\top}\boldsymbol{z}_{-\ell}+\left(\mathbf{1}_m^{\top}\boldsymbol{\eta}_0\right)\log x_{\ell}\right)\nonumber\\
&\propto\exp\left(\eta_{0,j}z_j+\eta_{0,m}z_m\right).
\end{align*}
On the other hand, 
\begin{align*}
&\,\left|\partial_j\log p(\boldsymbol{x}_{-m})\right|\min\{x_j,x_m\}^{\alpha_j}\\
=&\,\left|-\boldsymbol{\kappa}_{,j}^{\top}\log \boldsymbol{x}/x_j+\boldsymbol{\kappa}_{,m}^{\top}\log \boldsymbol{x}/x_m+\eta_j/x_j-\eta_m/x_m\right|\min\{x_j,x_m\}^{\alpha_j}\\
=&\,\left(\left|\boldsymbol{\kappa}_{,j}^{\top}\log \boldsymbol{x}/x_j\right|+\left|\boldsymbol{\kappa}_{,m}^{\top}\log \boldsymbol{x}/x_m\right|+\left|\eta_j/x_j\right|+\left|\eta_m/x_m\right|\right)\min\{x_j,x_m\}^{\alpha_j}\\
\leq&\,\left(\left|\boldsymbol{\kappa}_{,j}^{\top}\log \boldsymbol{x}\right|+\left|\boldsymbol{\kappa}_{,m}^{\top}\log \boldsymbol{x}\right|+\left|\eta_j\right|+\left|\eta_m\right|\right)\min\{x_j,x_m\}^{\alpha_j-1}.
\end{align*}
Thus, as $x_j\searrow 0^+$ or $x_m\searrow 0^+$ (i.e.~$x_j\nearrow 1-\mathbf{1}_{m-2}^{\top}\boldsymbol{x}_{-j,-m}$), by multiplying the two bounds we have $p_{0}(\boldsymbol{x}_{-m})\left|\partial_j\log p(\boldsymbol{x}_{-m})\right|(h_j\circ\varphi_j)(\boldsymbol{x})\searrow 0^+$ for any $\alpha_j$ for (I) and (II) (by letting $t$ to e.g.~$\alpha_j-2$ in the discussion above), or for (III) by a constant times 
\begin{align*}
\left(\left|\boldsymbol{\kappa}_{,j}^{\top}\log \boldsymbol{x}\right|+\left|\boldsymbol{\kappa}_{,m}^{\top}\log \boldsymbol{x}\right|+\left|\eta_j\right|+\left|\eta_m\right|\right)\min\{x_j,x_m\}^{\alpha_j-1}x_j^{\eta_{0,j}}x_m^{\eta_{0,m}}\searrow 0^+
\end{align*}
if $\alpha_j>\max\{1-\eta_{0,j},1-\eta_{0,m}\}$.

As for (A.2), the results follow by a similar discussion for the Gamma model ($a$-$b$ model with $b=0$) on the standard simplex in Section \ref{$a$-$b$ Models on Standard Simplices}.
\end{proof}

\begin{proof}[Proof of Theorem \ref{theorem_simplex_nonlog_ggm}]
It suffices to bound $\boldsymbol{\Gamma}$ and $\boldsymbol{g}$ using their forms in Section \ref{Estimation-Simplex} and apply Theorem 1 in \citet{lin16}. We first bound $(h_j\circ\varphi_j)(\boldsymbol{x})x_j^{p_j}x_{m}^{p_m}$ with $h_j(x)=x^{\alpha_j}$, $\alpha_j\geq \max\{0,-p_j,-p_m,-p_j-p_m\}$, $p_j,p_m\in\mathbb{R}$, and $0<x_j+x_m<1$. By the definition of $\boldsymbol{\varphi}$ on simplices, $\varphi_j(\boldsymbol{x})=\min\{C_j,x_j,x_m\}$, so $(h_j\circ\varphi_j)(\boldsymbol{x})x_j^{p_j}x_{m}^{p_m}=\min\{C_j,x_j,x_m\}^{\alpha_j}x_j^{p_j}x_m^{p_m}$ and is tightly lower bounded by $0$. Noting that $\min\{x_j,x_m\}<1/2$, we consider the following cases.
\begin{enumerate}[(I)]
\item If $C_j<\min\{x_j,x_m\}$, then $C_j<1/2$ and the quantity is $C_j^{\alpha_j}x_j^{p_j}x_m^{p_m}\leq C_j^{\alpha_j+(p_j)_{-}+(p_m)_{-}}<2^{-\alpha_j-(p_j)_--(p_m)-}$ where $(y)_-=y$ if $y<0$ and $0$ otherwise. 
\item Otherwise suppose $x_j\leq x_m$ and $x_j\leq C_j$, then the quantity is equal to $x_j^{\alpha_j+p_j}x_m^{p_m}$, which is upper bounded by $x_j^{\alpha_j+p_j+p_m}<2^{-\alpha_j-p_j-p_m}<1$ if $p_m\leq 0$; if $p_m> 0$ it is upper bounded by $((\alpha_j+p_j)/(\alpha_j+p_j+p_m))^{\alpha_j+p_j}(p_m/(\alpha_j+p_j+p_m))^{p_m}$ if $(\alpha_j+p_j)/(\alpha_j+p_j+p_m)\leq 1/2$ or by $2^{-\alpha_j-p_j-p_m}$ otherwise. Note that the statement for $p_m>0$ covers the one for $p_m\leq 0$. The conclusion for $x_m\leq x_j$ and $x_m\leq C_j$ follows by symmetry, and note that at most one of $(\alpha_j+p_j)/(\alpha_j+p_j+p_m)\leq 1/2$ and $(\alpha_j+p_m)/(\alpha_j+p_j+p_m)\leq 1/2$ can hold.
\end{enumerate}
In conclusion, defining 
\[\zeta_2(\alpha_j,p_j,p_m)=\begin{cases}
\left(\frac{\alpha_j+p_j}{\alpha_j+p_j+p_m}\right)^{\alpha_j+p_j}\left(\frac{p_m}{\alpha_j+p_j+p_m}\right)^{p_m}, & \text{if }p_m\geq \alpha_j+p_j,\\
 \left(\frac{\alpha_j+p_m}{\alpha_j+p_j+p_m}\right)^{\alpha_j+p_m}\left(\frac{p_j}{\alpha_j+p_j+p_m}\right)^{p_j}, & \text{if }p_j\geq \alpha_j+p_m,\\ 
 2^{-\alpha_j-p_j-p_m}, & \text{otherwise},\end{cases}
 \]
 we have $(h_j\circ\varphi_j)(\boldsymbol{x})x_j^{p_j}x_{m}^{p_m}\leq \zeta_2(\alpha_j,p_j,p_m)<1$.
Similarly, $\partial_j(h_j\circ\varphi_j)(\boldsymbol{x})x_j^{p_j}x_{m}^{p_m}\leq \alpha_j\zeta_2(\alpha_j-1,p_j,p_m)<\alpha_j$, if $\alpha_j-1\geq\max\{0,-p_j,-p_m,-p_j-p_m\}$.

Then for all $j,k,\ell$, as long as $\alpha_j\geq \max\{1,2-a,2-b\}$, we have $0\leq \gamma_{j,k,\ell}<1$, and similarly $0\leq g_{j,k}<\max_{j=1,\ldots,m}\alpha_j+\max\{|a-1|+2a,|b-1|\}$. The rest follows from the same proof as \refyu{Theorem 5.3 of \citet{yu21}.}
 
Note that using the form of $\boldsymbol{\Gamma}$ in Section \ref{Estimation-Simplex}, a tighter bound for $\gamma_{j,k,\ell}$ is 
\begin{align*}
\max_{j,k=1,\ldots,m}\max\{&\zeta_2(\alpha_j,2a-2,0),\zeta_2(\alpha_j,4a-2,0),\zeta_2(\alpha_j,2a-2,2a),\\
&\zeta_2(\alpha_j,2b-2,0),\zeta_2(\alpha_j,0,2a-2),\zeta_2(\alpha_j,2a,2a-2),\\
&\zeta_2(\alpha_j,0,4a-2),\zeta_2(\alpha_j,0,2b-2),\zeta_2(\alpha_j,a-1,a-1),\\
&\zeta_2(\alpha_j,2a-1,2a-1),\zeta_2(\alpha_j,a-1,3a-1),\zeta_2(\alpha_j,3a-1,a-1),\\
&\zeta_2(\alpha_j,b-1,b-1)\},
\end{align*}
and the one for $g_{j,k}$ can be similarly written in terms of $\zeta_2(\alpha_j,\cdot,\cdot)$ and $\alpha_j\zeta_2(\alpha_j-1,\cdot,\cdot)$.
\end{proof}

\begin{proof}[Proof of Lemma \ref{theorem_subexp_simplex}]
For $a>0$ or $b>0$, the proof of \refyu{Lemma 5.1 of \citet{yu21}}
works even for the simplex domain. We thus only consider the case
where $a=b=0$, for which 
we show that the moment-generating function of $\log X_j$ is finite and invoking Theorem 2.13 in \citet{wai19}. According to Theorem \ref{thm_assumption_Ad}, assume
\begin{enumerate}[(I)]
\item $\mathbf{K}_0$ is positive definite, or
\item $\mathbf{K}_0\mathbf{1}_m=\mathbf{0}$, $\mathbf{K}_{0,-k,-k}$ is positive definite for some $k=1,\ldots,m$, and $\mathbf{1}_m^{\top}\boldsymbol{\eta}+m\geq 0$, or
\item $\mathbf{K}_0\mathbf{1}_m=\mathbf{0}$, $\mathbf{K}_0$ is positive semi-definite, and $\boldsymbol{\eta}\succ-\mathbf{1}_m$.
\end{enumerate}
For any suitable $t$, $\mathbb{E}_0\exp(t\log X_j)$ is the inverse normalizing constant for the model with parameters $\mathbf{K}_0$ and $\boldsymbol{\eta}_0+t\boldsymbol{e}_j$, and is thus finite for (I) with $t\in\mathbb{R}$ and (III) with $t\in(-1-\eta_{0,j},+\infty)\ni 0$. 
For (II), recall that in the proof of Theorem \ref{thm_assumption_Ad} we have shown that for any $k=1,\ldots,m$, the density of $\log \boldsymbol{X}_{-k}-\left(\log X_k\right)\mathbf{1}_{m-1}$ is bounded by a constant times a Gaussian density, and thus $\mathbb{E}_0\left[X_j^t/X_k^t\right]=\mathbb{E}_0\exp\left(t(\log X_j-\log X_{k})\right)<+\infty$ for any $k=1,\ldots,m$ and $t\in\mathbb{R}$. So for any $t<0$,
\begin{align*}
\mathbb{E}_0 X_j^t&\leq\mathbb{E}_0\left[X_j^t|X_j\geq 1/m\right]\mathbb{P}\left(X_j\geq 1/m\right)+\sum_{k\neq j}\mathbb{E}_0\left[\left.X_j^t\right|X_k\geq 1/m\right]\mathbb{P}\left(X_k\geq 1/m\right)\\
&\leq m^{-t}\mathbb{P}\left(X_j\geq 1/m\right)+\sum_{k\neq j}m^{-t}\mathbb{E}_0\left[\left.X_j^t/X_k^t\right|X_k\geq 1/m\right]\mathbb{P}\left(X_k\geq 1/m\right)\\
&\leq m^{-t}+\sum_{k\neq j}m^{-t}\mathbb{E}_0\left[X_j^t/X_k^t\right]<+\infty.
\end{align*}
On the other hand, $\mathbb{E}_0X_j^t\leq 1$ for $t\geq 0$. Thus, $\mathbb{E}_0\exp(t\log X_j)<+\infty$ for any $t\in\mathbb{R}$ for (II). Hence, for all of (I)--(III) we have $\mathbb{E}_0\exp(t\log X_j)<+\infty$ for $t$ in a neighborhood around $0$.
\end{proof}

\begin{proof}[Proof of Corollary \ref{theorem_simplex_log_ggm}]
Let $\|\log X_j\|_{\psi_1}\equiv\sup_{q\geq 1}(\mathbb{E}_0|\log X_j|^q)^{1/q}/q$ be the sub-exponential norm of $\log X_j$, then by Lemma 21.6) of \citet{yu19} or Corollary 5.17 of \citet{ver12},
\[\mathbb{P}\left(-\log X_j+\mathbb{E}_0\log X_j\geq\epsilon_3\right)\leq\exp\left(-\min\left(\frac{\epsilon_3^2}{8e^2\|\log X_j\|_{\psi_1}^2},\frac{\epsilon_3}{4e\|\log X_j\|_{\psi_1}}\right)\right).\]
Letting 
\begin{multline*}
\epsilon_3\equiv\max\Big\{2\sqrt{2}e\max_j\|\log X_j\|_{\psi_1}\sqrt{\log 3+\log n+(\tau+1)\log m},\\
4e\max_j\|\log X_j\|_{\psi_1}(\log 3+\log n+(\tau+1)\log m)\Big\},
\end{multline*}
we get $0\leq -\log X_j^{(i)}\leq \max_k\mathbb{E}_0\log X_k+\epsilon_3$ for all $j=1,\ldots,m$ and $i=1,\ldots,n$ with probability at least $1-1/(3m^{\tau})$. The rest follows as in the proof of \refyu{Theorem 5.3 of \citet{yu21}} and Theorem \ref{theorem_simplex_nonlog_ggm}.
\end{proof}







\bibliographystyle{plainnat}
\bibliography{Paper}

\end{spacing}
\end{document}